\definecolor{darkred}{rgb}{0.8,0.1,0.1}
\theoremstyle{plain}
\newtheorem{theo}{Theorem}[section]
\newtheorem{propo}[theo]{Proposition}
\theoremstyle{definition}
\newtheorem{defi}[theo]{Definition}
\newenvironment{ex}
  {\pushQED{\qed}\exx}
  {\popQED\endexx}
\newenvironment{rem}
  {\pushQED{\qed}\remm}
  {\popQED\endremm}
\numberwithin{equation}{section}
\def\nn{\nonumber}
\def\bbK{\mathbb{K}}
\def\bbR{\mathbb{R}}
\def\bbC{\mathbb{C}}
\def\bbZ{\mathbb{Z}}
\def\Hom{\mathrm{Hom}}
\def\hom{\underline{\mathrm{hom}}}
\def\map{\underline{\mathrm{map}}}
\def\id{\mathrm{id}}
\def\supp{\mathrm{supp}}
\def\dd{\mathrm{d}}
\def\vol{\mathrm{vol}}
\def\cl{\mathrm{cl}}
\def\sc{\mathrm{sc}}
\def\cc{\mathrm{c}}
\def\pc{\mathrm{pc}}
\def\fc{\mathrm{fc}}
\def\spc{\mathrm{spc}}
\def\sfc{\mathrm{sfc}}
\def\hh{\mathrm{h}}
\def\vv{\mathrm{v}}
\def\dR{\mathrm{dR}}
\def\CS{\mathrm{CS}}
\def\MW{\mathrm{MW}}
\def\dim{\mathrm{dim}}
\def\1{I}
\def\op{\mathrm{op}}
\def\ev{\mathrm{ev}}
\def\pr{\mathrm{pr}}
\def\cone{\operatorname{cone}}
\def\Loc{\mathbf{Loc}}
\def\LocC{\mathbf{LocC}}
\def\Set{\mathbf{Set}}
\def\Ch{\mathbf{Ch}}
\def\CC{\mathbf{C}}
\def\sSet{\mathbf{sSet}}
\def\sVec{\mathbf{sVec}}
\def\FFF{\mathfrak{F}}
\def\G{\mathcal{G}}
\def\H{\mathcal{H}}
\def\V{\mathcal{V}}
\def\W{\mathcal{W}}
\def\Z{\mathcal{Z}}
\def\scrD{\mathscr{D}}
\def\holim{\mathrm{holim}}
\def\colim{\mathrm{colim}}
\def\hocolim{\mathrm{hocolim}}
\def\asym{\mathsf{asym}}
\newcommand\und[1]{\underline{#1}}
\def\sk{\vspace{1mm}}
\let\@fnsymbol\@alph
\title{%
Green hyperbolic complexes on Lorentzian manifolds
}
\author{%
Marco Benini$^{1,2,a}$, 
Giorgio Musante$^{1,b}$\ and\ 
Alexander Schenkel$^{3,c}$\vspace{4mm}\\
{\small ${}^1$ Dipartimento di Matematica, Universit\`a di Genova,}\\
{\small Via Dodecaneso 35, 16146 Genova, Italy.}\vspace{2mm}\\
{\small ${}^2$ INFN, Sezione di Genova,}\\
{\small Via Dodecaneso 33, 16146 Genova, Italy.}\vspace{2mm}\\
{\small ${}^3$ School of Mathematical Sciences, University of Nottingham,}\\
{\small University Park, Nottingham NG7 2RD, United Kingdom.}\vspace{4mm}\\
{\small \begin{tabular}{ll}
Email: & ${}^a$~\texttt{benini@dima.unige.it}\\
& ${}^b$~\texttt{musante@dima.unige.it}\\
& ${}^c$~\texttt{alexander.schenkel@nottingham.ac.uk}\vspace{2mm}
\end{tabular}
}
}
\date{July 2023}
\begin{document}

\maketitle

\begin{abstract}
\noindent We develop a homological generalization of Green hyperbolic operators, called Green hyperbolic complexes, which cover many examples of derived critical loci for gauge-theoretic quadratic action functionals in Lorentzian signature. We define Green hyperbolic complexes through a generalization of retarded and advanced Green's operators, called retarded and advanced Green's homotopies, which are shown to be unique up to a contractible space of choices. We prove homological generalizations of the most relevant features of Green hyperbolic operators, namely that (1)~the retarded-minus-advanced cochain map is a quasi-isomorphism, (2)~a differential pairing (generalizing the usual fiber-wise metric) on a Green hyperbolic complex leads to covariant and fixed-time Poisson structures and (3)~the retarded-minus-advanced cochain map is compatible with these Poisson structures up to homotopy. 
\end{abstract}

\paragraph*{Keywords:} homological methods in gauge theory, globally hyperbolic Lorentzian manifolds, 
Green hyperbolic operators, dg-categories.

\paragraph*{MSC 2020:} 81T70, 81T20, 18G35, 58J45.

\renewcommand{\baselinestretch}{0.8}\normalsize
\tableofcontents
\renewcommand{\baselinestretch}{1.0}\normalsize



\section{\label{sec:intro}Introduction and summary}
Green hyperbolic operators \cite{Bar}, 
generalizing normally hyperbolic ones \cite{BGP}, 
are one of the cornerstones of mathematical field theory 
on (globally hyperbolic) Lorentzian manifolds $M$. 
By definition, these are linear differential operators 
$P: \Gamma(E) \to \Gamma(E)$ acting on sections of a vector bundle 
$E \to M$ that admit retarded and advanced Green's operators 
$G_\pm: \Gamma_\cc(E) \to \Gamma(E)$, i.e.\ linear maps 
satisfying $G_\pm\, P \varphi = \varphi = P\, G_\pm \varphi$ 
that propagate any compactly supported section 
$\varphi \in \Gamma_\cc(E)$ to the causal future/past 
$\supp(G_\pm \varphi) \subseteq J_M^\pm(\supp (\varphi))$ of its support. 
In particular, Green hyperbolic operators cannot vanish 
on non-zero compactly supported sections. This basic observation 
shows that the differential operator governing the dynamics 
of any gauge theory can {\it not} be Green hyperbolic as 
it is necessarily degenerate (it must vanish on gauge transformations). 
To resolve this incompatibility, the present paper develops 
the concept of Green hyperbolic complexes, a homological 
generalization of Green hyperbolic operators 
that encompasses the typical degeneracies of gauge theories.
\sk

To start with, we consider complexes of linear differential operators 
$(F,Q)$ on $M$, which consist of a $\bbZ$-graded vector bundle $F \to M$ 
and of a collection $Q$ of degree-increasing linear 
differential operators squaring to zero. From the perspective of 
the derived critical locus of a gauge-theoretic quadratic action 
functional, i.e.\ the Batalin-Vilkovisky formalism, 
the degree $0$ part of $F$ accommodates the (gauge) fields, 
the negative degrees accommodate the (higher) gauge transformations, 
also known as ghosts, and the positive degrees 
accommodate the anti-fields. The collection of degree-increasing 
differential operators $Q$ simultaneously encodes 
both the action of (higher) gauge transformations 
and the equation of motion of the (gauge) fields. 
\sk

In analogy with the definition of a Green hyperbolic operator, 
a {\it Green hyperbolic complex} is by Definition \ref{defi:Green-hyp-cpx} 
a complex of linear differential operators $(F,Q)$ 
that admits so-called {\it retarded and advanced 
Green's homotopies} $\Lambda_\pm$. The latter are homological 
generalizations of retarded and advanced Green's operators, 
which are formalized by $(-1)$-cochains 
$\Lambda_\pm \in \map(\FFF_{J_M^\pm},\FFF_{J_M^\pm})^{-1}$ 
such that $\delta \Lambda_\pm = \id$, 
where $\map$ denotes a suitable mapping complex 
(with differential $\delta$) defining a dg-enrichment 
on the category of cochain complex valued functors 
(see Section \ref{subsec:FunCh}), 
$\FFF_{J_M^\pm}$ is the functor that assigns the cochain complex 
$\FFF_{J_M^\pm(K)}$ of sections of $(F,Q)$ supported in the causal 
future/past of any compact subset $K \subseteq M$ and the $0$-cocycle 
$\id \in \map(\FFF_{J_M^\pm},\FFF_{J_M^\pm})^{0}$ 
is the identity natural transformation of the functor $\FFF_{J^\pm_M}$. 
Informally, one may think of the defining condition 
$\delta \Lambda_\pm = \id$ as the analog of the usual identities 
$G_\pm\, P \varphi = \varphi = P\, G_\pm \varphi$ involving the 
retarded and advanced Green's operators $G_\pm$ associated with a 
Green hyperbolic operator $P$ and of the functors $\FFF_{J^\pm_M}$ 
as a formalization of the future/past propagation of supports 
typical of $G_\pm$. 
Encoding the support conditions via the functors 
$\FFF_{J^\pm_M}$ and using the homologically well-behaved mapping complex $\map$ 
are the key ingredients that distinguish our approach 
from previous work on formal PDE theory and elliptic complexes, 
see e.g.\ \cite[Chs.~1 and 2]{Tarkhanov} for a textbook reference. 
\sk

Let us emphasize that Green hyperbolic complexes are 
a genuine generalization of Green hyperbolic operators 
in the following sense. Any linear differential operator 
$P$ acting on sections of a vector bundle $E$ defines 
a two-term complex of linear differential operators 
$(F_{(E,P)},Q_{(E,P)})$, which we take to be concentrated 
in degrees $0$ and $1$, see Example \ref{ex:single-diff-op}. 
It turns out that $P$ is a Green hyperbolic operator 
if and only if $(F_{(E,P)},Q_{(E,P)})$ is a Green hyperbolic complex. 
This follows from the observation that a retarded/advanced 
Green's homotopy for $(F_{(E,P)},Q_{(E,P)})$ is the same
as a retarded/advanced Green's operator for $P$, 
see Example \ref{ex:Lambdapm=Gpm}.
\sk

A key result in the theory of Green hyperbolic operators 
is the uniqueness of the associated retarded and advanced 
Green's operators. One may wonder if Green hyperbolic complexes 
behave similarly, namely the associated retarded and advanced 
Green's homotopies are unique (in the appropriate sense, see below). 
Proposition \ref{propo:Greenhyp-unique} provides a positive answer
to this question, which also clarifies the role of the mapping complex $\map$ 
entering the definition of retarded and advanced Green's homotopies. 
Crucially, $\map$ is a derived functor and hence compatible with weak equivalences, 
i.e.\ it sends natural quasi-isomorphisms to quasi-isomorphisms. 
A non-trivial, yet direct consequence of this feature 
is that the {\it spaces} of retarded and advanced Green's homotopies 
$\und{\G\H}_\pm$ from \eqref{eqn:GHpm} are either empty or contractible. 
Since at least one retarded and one advanced Green's homotopy exist 
by definition of a Green hyperbolic complex, it follows that 
$\und{\G\H}_\pm$ is contractible, which provides the correct 
homotopical formalization of uniqueness for retarded and advanced 
Green's homotopies. In contrast, note that uniqueness in the ordinary 
sense is not a relevant question in this context. For instance, adding 
any $(-1)$-coboundary $\delta \lambda_\pm$ from the mapping complex 
$\map(\FFF_{J_M^\pm},\FFF_{J_M^\pm})$ to a retarded/advanced 
Green's homotopy $\Lambda_\pm$ yields a new one 
$\Lambda_\pm + \delta \lambda_\pm$, however one would like to regard 
the latter as being equivalent to the former since they represent 
the same cohomology class. What would be a genuinely new 
retarded/advanced Green's homotopy is one that differs 
from $\Lambda_\pm$ by a $(-1)$-cocycle that represents 
a non-trivial cohomology class. Our contractibility result 
ensures in particular that such a $(-1)$-cocycle does not exist. 
\sk

The relevance of Green hyperbolic operators $P$ 
for mathematical field theory on globally hyperbolic Lorentzian manifolds \cite{BFV,FV} 
relies on the following key results \cite{BGP, Bar}. 
\begin{enumerate}[label=(\arabic*)]
\item The so-called retarded-minus-advanced propagator 
$G := G_+ - G_-$ descends to the isomorphism 
\begin{flalign}
G: \operatorname{coker}_\cc(P) \overset{\cong}{\longrightarrow} \ker_\sc(P)
\end{flalign}
that characterizes the vector space 
$\ker_\sc(P) := \ker(P: \Gamma_\sc(E) \to \Gamma_\sc(E))$ 
of solutions with spacelike compact support $_\sc$ 
in terms of the vector space 
$\operatorname{coker}_\cc(P) := \operatorname{coker}(P: \Gamma_\cc(E) \to \Gamma_\cc(E)) = \Gamma_\cc(E) \big/ P\Gamma_\cc(E)$ 
of sections with compact support $_\cc$ modulo equations of motion. 

\item When $P$ is formally self-adjoint with respect to a fiber metric 
$\langle-,-\rangle$ on $E$, we use the retarded-minus-advanced propagator 
$G$ or the retarded/advanced Green's operator $G_\pm$ to construct the (covariant) Poisson structure 
\begin{subequations}
\begin{flalign}
\tau_M: \operatorname{coker}_\cc(P)^{\otimes 2} \longrightarrow \bbR 
\end{flalign}
by any of the following four equivalent definitions: 
for all $[\varphi_1], [\varphi_2] \in \operatorname{coker}_\cc(P)$, 
\begin{flalign}
\tau_M([\varphi_1] \otimes [\varphi_2]) &:= \int_M \langle \varphi_1 \wedge \ast G \varphi_2 \rangle = - \int_M \langle \varphi_2 \wedge \ast G \varphi_1 \rangle \nn \\ 
&\phantom{:}= \pm \int_M \langle \varphi_1 \wedge \ast G_\pm \varphi_2 \rangle \mp \int_M \langle \varphi_2 \wedge \ast G_\pm \varphi_1 \rangle \quad,
\end{flalign}
\end{subequations}
where $\ast$ denotes the Hodge star operator fixed 
by the metric and orientation of $M$. 

\item When $P = \Box^\nabla + B$ is the sum of 
the d'Alembert operator of a metric connection $\nabla$ 
on $(E,\langle-,-\rangle)$ and of a symmetric endomorphism $B$, 
one obtains the (fixed-time) Poisson structure\footnote{In the literature this is sometimes called the (pre)symplectic structure on the spacelike compact solution space.} associated with a spacelike Cauchy surface $\Sigma \subseteq M$
\begin{subequations}
\begin{flalign}
\sigma_{\Sigma}: \ker_\sc(P)^{\otimes 2} \longrightarrow \bbR 
\end{flalign}
by defining, for all $\psi_1, \psi_2 \in \ker_\sc(P)$, 
\begin{flalign}
\sigma_{\Sigma}(\psi_1 \otimes \psi_2) := - \int_\Sigma \langle \psi_1 \wedge \ast \nabla \psi_2 - \psi_2 \wedge \ast \nabla \psi_1 \rangle \quad.
\end{flalign}
\end{subequations}
The isomorphism $G: \operatorname{coker}_\cc(P) \stackrel{\cong}{\longrightarrow} \ker_\sc(P) $ 
from item~(1) is compatible with the Poisson structures in the sense that
\begin{flalign}
\sigma_\Sigma \circ G^{\otimes 2} = \tau_M \quad.
\end{flalign}
Hence, it defines an isomorphism of Poisson vector spaces 
\begin{flalign}
G: \big(\operatorname{coker}_\cc(P),\tau_M\big) \overset{\cong}{\longrightarrow} 
 \big(\ker_\sc(P),\sigma_\Sigma\big)\quad.
\end{flalign}
\end{enumerate}

It turns out that our broader concept of Green hyperbolic complexes 
leads to very similar results. 
More explicitly, in this paper we prove the following. 
\begin{enumerate}[label=(\arabic*$^\prime$)]
\item To any choice of retarded and advanced Green's homotopies 
$\Lambda_\pm$ for $(F,Q)$ we associate the so-called 
{\it retarded-minus-advanced cochain map} $\Lambda_\hh$ 
and show in Theorem \ref{th:ret-minus-adv-qiso} 
that it provides a quasi-isomorphism 
\begin{flalign}
\Lambda_{\hh} := \Lambda_{+\, \hh} - \Lambda_{-\, \hh}: \FFF_{\hh\cc}[1] \overset{\sim}{\longrightarrow} \FFF_{\hh\sc} \quad, 
\end{flalign}
where $\Lambda_{\pm\, \hh} := \hocolim(\Lambda_{\pm})$, 
between the $1$-shift of the cochain complex $\FFF_{\hh\cc}$ of 
sections of $(F,Q)$ with compact support and the cochain complex 
$\FFF_{\hh\sc}$ of sections with spacelike compact support. 
Here the subscript $_{\hh}$ denotes that the support conditions 
are implemented via suitable homotopy colimits 
replacing the usual ordinary colimits. (As we explain below, 
this technical complication can be removed in the presence 
of a so-called Green's witness.) 

\item Based on the concept from Definition \ref{defi:diff-pairing} of a differential pairing $(-,-)$ on $(F,Q)$, 
which is a homological replacement of the fiber metric $\langle-,-\rangle$, 
and mimicking the multiple equivalent ways of 
writing the classical Poisson structure from item~(2), 
in Proposition \ref{propo:covariant-Poisson} we construct 
three (covariant) Poisson structures 
\begin{subequations}
\begin{flalign}
\tau_M^\pm,\; \tau_M := \asym(\widetilde{\tau}_M): \FFF_{\hh\cc}[1]^{\otimes 2} \longrightarrow \bbR
\end{flalign}
using either $\Lambda_{\pm\, \hh}$ 
or the retarded-minus-advanced cochain map $\Lambda_\hh$. 
Explicitly, $\tau_M^\pm$ is given, for all homogeneous 
$\varphi_1, \varphi_2 \in \FFF_{\hh\cc}[1]$, by
\begin{flalign}
\tau_M^\pm (\varphi_1 \otimes \varphi_2) := \pm \int_M (\varphi_1,\Lambda_{\pm\, \hh} \varphi_2) \mp (-1)^{|\varphi_1| |\varphi_2|} \int_M (\varphi_2,\Lambda_{\pm\, \hh} \varphi_1) \quad, 
\end{flalign}
and is manifestly graded anti-symmetric. 
Using also graded anti-symmetrization, the Poisson structure
$\tau_M$ is defined from the cochain map 
$\widetilde{\tau}_M: \FFF_{\hh\cc}[1]^{\otimes 2} \to \bbR$ given, 
for all homogeneous 
$\varphi_1, \varphi_2 \in \FFF_{\hh\cc}[1]$, by  
\begin{flalign}
\widetilde{\tau}_M (\varphi_1 \otimes \varphi_2) & := \int_M (\varphi_1,\Lambda_\hh \varphi_2) \quad. 
\end{flalign}
\end{subequations}
Graded anti-symmetrization is needed to define $\tau_M$ because 
the chosen retarded and advanced Green's homotopies $\Lambda_\pm$, 
which enter the definition of $\Lambda_\hh$, 
may not be compatible with the differential pairing $(-,-)$. 
For the same reason, the three Poisson structures 
$\tau_M^+$, $\tau_M^-$ and $ \tau_M$ in general do not coincide on the nose, 
however they are related by a homotopy 
$\lambda_M \in [\FFF_{\hh\cc}[1]^{\otimes 2},\bbR]^{-1}$ according to 
\begin{flalign}
\tau_M^\pm = \tau_M \pm \partial \lambda_M \quad. 
\end{flalign}
Graded anti-symmetrization and the homotopy $\lambda_M$ 
are complications that can be removed in the presence of a so-called 
formally self-adjoint Green's witness, as we explain below. 
Under this additional assumption, 
it turns out that the three Poisson structures coincide 
$\tau_M^+ = \tau_M^- = \tau_M$.

\item Upon the choice of a spacelike Cauchy surface 
$\Sigma \subseteq M$, the concept of a differential pairing $(-,-)$ 
leads also to the (fixed-time) Poisson structure 
\begin{subequations}
\begin{flalign}
\sigma_\Sigma: \FFF_{\hh\sc}^{\otimes 2} \longrightarrow \bbR
\end{flalign}
from Proposition \ref{propo:time-zero-Poisson}, which is defined, 
for all homogeneous $\psi_1, \psi_2 \in \FFF_{\hh\sc}$, by the formula 
\begin{flalign}
\sigma_\Sigma(\psi_1 \otimes \psi_2) := (-1)^{m-1} \int_{\Sigma} (\psi_1,\psi_2)\quad.
\end{flalign}
\end{subequations}
We prove in Theorem \ref{th:Linfty-qiso} 
that the retarded-minus-advanced cochain map 
$\Lambda_\hh$ is compatible with the Poisson structures 
\begin{flalign}
\sigma_\Sigma \circ \Lambda_\hh^{\otimes 2} = \tau_M + \partial \lambda_\hh
\end{flalign}
up to an explicitly constructed graded anti-symmetric homotopy 
$\lambda_\hh \in [\FFF_{\hh\cc}[1]^{\otimes 2},\bbR]^{-1}$. 
This defines an equivalence
\begin{flalign}
(\Lambda_\hh,\lambda_\hh) : \big(\FFF_{\hh\cc}[1],\tau_M\big)  \overset{\sim}{\longrightarrow}  \big(\FFF_{\hh\sc},\sigma_\Sigma\big) 
\end{flalign}
in the simplicial category of Poisson complexes set up in \cite[Sec.~3.1]{GwilliamHaugseng}.
\end{enumerate}
Let us emphasize that the new results from items~(1$^\prime$--3$^\prime$) 
in the theory of Green hyperbolic complexes recover precisely 
the corresponding results from items~(1--3) in the theory of 
Green hyperbolic operators, see e.g.\ Remark \ref{rem:exact=qiso}, 
when $(F,Q) = (F_{(E,P)},Q_{(E,P)})$ 
is the Green hyperbolic complex from Example \ref{ex:single-diff-op} 
associated to a Green hyperbolic operator $P$ 
acting on sections of a vector bundle $E$. 
\sk

Coming back to the connection with the Batalin-Vilkovisky formalism, 
the results listed above show that the proposed concept 
of a Green hyperbolic complex captures the essential information 
encoded by Lorentzian linear gauge theories. 
Auxiliary structures that are often considered in concrete applications, 
such as gauge-fixings and auxiliary fields, do not play 
any distinguished role and have no conceptual significance in the proposed approach, 
reflecting what is expected from a physical point of view.  
\sk

Of course, auxiliary structures may be very useful in practice. 
For instance, we identify a convenient auxiliary structure 
on a complex of linear differential operators $(F,Q)$ 
by introducing the concept of a {\it Green's witness} $W$. 
The latter consists of a collection of degree-decreasing 
linear differential operators such that $P := Q\, W + W\, Q$ 
is a degree-wise Green hyperbolic operator. 
The name Green's witness for $W$ is motivated 
by Theorem \ref{th:witness-Lambdapm}, which states that $(F,Q)$ 
is a Green hyperbolic complex as a consequence of the presence of $W$
and furthermore provides specific choices of retarded and advanced 
Green's homotopies $\Lambda_\pm := W\, G_\pm$ constructed out of 
the retarded and advanced Green's operators $G_\pm$ for $P$. 
This specific choice of $\Lambda_\pm$ coming from a Green's witness $W$ 
removes the technical complications (associated with homotopy colimits) 
that arise in our approach, demonstrating the usefulness of Green's witnesses. 
More precisely, in item~(1$^\prime$) 
one can replace $\Lambda_\hh$ with the simpler 
$\Lambda := \colim(\Lambda_+ - \Lambda_-): \FFF_\cc[1] \to \FFF_\sc$ 
involving sections with compact and respectively spacelike compact 
supports in the ordinary sense, 
see Remarks \ref{rem:Lambda} and \ref{rem:Lambda-qiso}. 
Furthermore, if the Green's witness $W$ 
is {\it formally self-adjoint} with respect to a differential pairing 
$(-,-)$ on $(F,Q)$ in the sense of 
Definition \ref{defi:self-adj-witness}, then graded anti-symmetrization 
and the homotopy $\lambda_M$ in item~(2$^\prime$) become superfluous, 
see Proposition \ref{propo:covariant-Poisson-self-adj-witness} 
and Remark \ref{rem:covariant-Poisson-self-adj-witness}. 
In this case also item~(3$^\prime$) becomes simpler, 
as one sees from Theorem \ref{th:Linfty-qiso-self-adj-witness}. 
\sk

Green's witnesses are frequently available in gauge-theoretic examples of interest. 
As a matter of fact, Green hyperbolic complexes 
endowed with a differential pairing and a formally self-adjoint Green's witness 
encompass linear Chern-Simons theory, see 
Examples \ref{ex:deRham}, \ref{ex:CS-diff-pairing}, 
\ref{ex:deRham-witness}, \ref{ex:Lambdapm-deRham}
and \ref{ex:CS-self-adj-witness}, 
formally self-adjoint normally hyperbolic operators $P = \Box^\nabla + B$, 
see Examples \ref{ex:single-diff-op}, 
\ref{ex:single-diff-op-diff-pairing}, \ref{ex:single-diff-op-witness}, 
\ref{ex:Lambdapm-single-diff-op} and 
\ref{ex:single-diff-op-self-adj-witness}, 
and Maxwell $p$-forms, see Examples \ref{ex:Maxwell-p-forms}, 
\ref{ex:MW-diff-pairing}, \ref{ex:Maxwell-p-forms-witness}, 
\ref{ex:Lambdapm-Maxwell-p-forms} and \ref{ex:MW-self-adj-witness}. 
In particular, the simplified versions of 
items~(1$^\prime$--3$^\prime$) apply to these examples. 
\sk

The previous discussion may induce the reader to think that it might be 
worth to regard a Green's witness as an essential, as opposed to auxiliary, 
structure. We stress that this is not the case. It is the general theory 
of Green hyperbolic complexes that allows one to correctly formalize 
uniqueness of retarded and advanced Green's homotopies in terms 
of contractibility of suitably defined spaces. 
The presence of a Green's witness only ensures that 
such spaces contain specific points that are particularly well-behaved, 
which is very useful in applications in the context of quantum field theory, 
see \cite{BeniniMusanteSchenkel}. Furthermore, in \cite{CS-WZW} 
a gauge-theoretic model is illustrated that possesses Green's homotopies 
(actually, a variant of this concept that is relevant in that context), 
but does not seem to admit a Green's witness. 
\sk

Let us now briefly outline the structure of the remainder of the paper. 
Section \ref{sec:prelim} collects the necessary preliminary material. 
More in detail, Section \ref{subsec:Green} reviews the basics 
of the theory of Green hyperbolic operators, 
Section \ref{subsec:Ch} recalls some useful tools from homological 
algebra and Section \ref{subsec:FunCh} is devoted to the category of 
cochain complex valued functors. In particular, we explain that 
the latter is a dg-category, whose cochain complex of morphisms 
from $\V$ to $\W$ is given by a mapping complex $\map(\V,\W)$ 
formalizing a concept of homotopy coherent natural transformations 
and (higher) homotopies between them, we illustrate 
a concrete model for the homotopy colimit functor 
and we observe that the latter is dg-left adjoint 
to the diagonal dg-functor. The core of the paper is 
Section \ref{sec:Green-hyp-cpx}. Retarded and advanced Green's 
homotopies, as well as Green hyperbolic complexes, 
are introduced in Section \ref{subsec:ret-adv-Green-homotopies}. 
In particular, Proposition \ref{propo:Greenhyp-acyclic} 
provides a recognition principle for Green hyperbolic complexes 
and Proposition \ref{propo:Greenhyp-unique} shows that the spaces 
of retarded/advanced Green's homotopies 
are either empty or contractible. 
The retarded-minus-advanced cochain map $\Lambda_\hh$ is constructed 
in Section \ref{subsec:ret-adv-map}, which is devoted to the proof 
of Theorem \ref{th:ret-minus-adv-qiso} stating that $\Lambda_\hh$ 
is a quasi-isomorphism. Section \ref{subsec:Poisson} defines 
the concept of a differential pairing. This is used in Propositions 
\ref{propo:covariant-Poisson} and \ref{propo:time-zero-Poisson} 
to construct two types of Poisson structures $\tau_M^\pm,\tau_M$ 
(which involve $\Lambda_{\pm\, \hh}, \Lambda_\hh$ and coincide 
up to a specified homotopy) and $\sigma_\Sigma$ (depending 
on the choice of a spacelike Cauchy surface $\Sigma \subseteq M$) 
defined on the domain and respectively on the codomain of $\Lambda_\hh$. 
Theorem \ref{th:Linfty-qiso} shows that 
$\Lambda_\hh$ is compatible with $\tau_M$ and $\sigma_\Sigma$ 
up to a homotopy that is constructed explicitly. 
The paper is completed by Section \ref{sec:witness}, 
which is devoted to Green's witnesses. 
Those are defined in Section \ref{subsec:witness}, 
which shows with Theorem \ref{th:witness-Lambdapm} 
that a Green's witness ensures Green hyperbolicity and provides 
specific choices of retarded and advanced Green's homotopies. 
This result yields all examples of Green hyperbolic complexes 
presented in this paper. Furthermore, we explain 
in Remarks \ref{rem:Lambda} and \ref{rem:Lambda-qiso} 
that a Green's witness simplifies considerably 
the construction of the retarded-minus-advanced cochain map 
and the proof that the latter is a quasi-isomorphism. 
Remark \ref{rem:exact=qiso} emphasizes 
that this quasi-isomorphism recovers the well-known exact sequence 
\eqref{eqn:PGP-exact-seq} associated with a Green hyperbolic operator. 
Section \ref{subsec:self-adj} concludes the paper by 
defining formally self-adjoint Green's witnesses, 
which lead to simplified versions of the Poisson structures 
$\tau_M$ and $\sigma_\Sigma$, 
see Propositions \ref{propo:covariant-Poisson-self-adj-witness} 
and \ref{propo:time-zero-Poisson-ordinary-colim}, and 
of their compatibility with the (simplified) retarded-minus-advanced 
cochain map $\Lambda$, see Theorem \ref{th:Linfty-qiso-self-adj-witness}.

\paragraph*{Relation to previous approaches:} The problem of constructing
Poisson structures for (gauge) field theories on Lorentzian manifolds
has a rich history, see e.g.\ \cite{Marolf,Marolf2,ForgerRomero} for
important earlier contributions and the introduction of \cite{Khavkine} for a detailed historical overview.
In the context of gauge theories, the traditional aim was to endow the algebra of \textit{gauge invariant on-shell observables}
of a gauge field theory with a suitable Poisson bracket structure, generalizing Peierls' original
construction \cite{Peierls} for theories without gauge symmetry.
Explicit proposals for such generalizations are given in \cite{HackSchenkel,Khavkine,Sharapov}.
It is important to emphasize that endowing the algebra of \textit{gauge invariant on-shell observables}
with a Poisson bracket is only a truncation (obtained by passing to $0^{\mathrm{th}}$ cohomology)
of the homological problem that we address and solve in our paper. Our Poisson structures
are defined at the level of cochain complexes (without ever passing to cohomology),
which is crucial to make contact with the recent developments in homotopical algebraic (quantum) field
theory \cite{BSW}. The use of cohomological methods in these earlier approaches, most notably 
in \cite{Khavkine,Sharapov,Benini,Khavkine2}, is mostly of practical nature in order to
facilitate the construction and prove properties (e.g.\ non-degeneracy) 
of the Poisson bracket on gauge invariant on-shell observables.
\sk

An alternative approach that has been taken in the literature is to construct
Poisson brackets in the context of the BRST or BV formalism, see e.g.\ \cite{FredenhagenRejzner,Sharapov2,WrochnaZahn}.
While these approaches are intrinsically homological, such as ours, the main difference
lies in the fact that all previous approaches (that we are aware of) manifestly make use of auxiliary
structures in their constructions, such as redundant fields and suitable gauge fixings. While this is
completely acceptable from a practical point of view, it leaves open important questions, most notably:
What is a quasi-isomorphism invariant definition of the concept of retarded/advanced Green's operators
and their associated Poisson structures? Are these in a suitable sense unique, and hence independent of any 
auxiliary choices such as redundant fields and gauge fixings? Our paper answers these questions.


\section{\label{sec:prelim}Preliminaries}

\subsection{\label{subsec:Green}Retarded and advanced Green's operators}
In this section we recall some basic concepts 
pertaining to the theory of Green hyperbolic linear differential operators 
on oriented and time-oriented globally hyperbolic Lorentzian manifolds
that will be used in the rest of this work. 
More details on this topic, including proofs of the statements 
recalled below, can be found in \cite{BGP,Bar}.
\sk

Consider an oriented and time-oriented globally hyperbolic 
Lorentzian manifold $M$ of dimension $m \geq 2$. Given a finite rank 
real or complex vector bundle $E \to M$, 
denote the vector space of its smooth sections with support 
contained in a closed subset $C \subseteq M$ by $\Gamma_C(E)$. 
(Following the usual convention we omit the subscript $_M$ to 
denote the vector space $\Gamma(E)$ of all smooth sections, 
without any support restriction.) 
We will often consider sections with support prescribed 
according to a {\it directed system} $\scrD$, that is a (non-empty) 
directed subset of the directed set $\cl$ of closed subsets of $M$. 
(More explicitly, we require that, for all $D_1,D_2 \in \scrD$, 
there exists $D \in \scrD$ such that $D_1 \subseteq D \supseteq D_2$). 
We can therefore define the vector space 
\begin{flalign}\label{eqn:colim-support}
\Gamma_{\scrD}(E) := \colim_{D \in \scrD}^{}\, \Gamma_D(E)
\end{flalign}
of smooth sections with $\scrD$-support as a colimit over a directed set. 
We shall consider the directed systems $\scrD = \cc, \pc, \fc, \spc, \sfc, \sc$ 
of compact, past compact, future compact, strictly past compact, 
strictly future compact and respectively spacelike compact 
closed subsets of $M$. For example, \eqref{eqn:colim-support} 
for $\scrD = \cc$ returns the usual vector space $\Gamma_{\cc}(E)$ 
of sections with compact support. 
\sk

A linear differential operator $P: \Gamma(E) \to \Gamma(E)$ is called 
{\it Green hyperbolic} if there exist 
{\it retarded and advanced Green's operators} $G_\pm$, 
which are by definition linear maps 
$G_\pm: \Gamma_{\cc}(E) \to \Gamma(E)$ such that, 
for all $\varphi \in \Gamma_{\cc}(E)$, 
\begin{enumerate}[label=(\roman*)]
\item $G_\pm\, P \varphi = \varphi$, 
\item $P\, G_\pm \varphi = \varphi$, 
\item $\supp(G_\pm \varphi)$ is contained 
in the causal future/past $J_M^\pm(\supp(\varphi))$ of the support 
of $\varphi$. 
\end{enumerate} 
In \cite{Bar} linear extensions 
$G_\pm: \Gamma_{\pc/\fc}(E) \to \Gamma_{\pc/\fc}(E)$ are defined 
on sections with past/future compact support 
in such a way that the properties~(i-iii) above hold for all $\varphi \in \Gamma_{\pc/\fc}(E)$. 
In particular, this entails that the restricted differential operators  
$P: \Gamma_{\pc/\fc}(E) \to \Gamma_{\pc/\fc}(E)$ are linear isomorphisms 
and the extended retarded/advanced Green's operators 
$G_\pm: \Gamma_{\pc/\fc}(E) \to \Gamma_{\pc/\fc}(E)$ 
are their (unique) inverses. Since the latter restrict 
to the retarded/advanced Green's operators 
$G_\pm: \Gamma_{\cc}(E) \to \Gamma(E)$, 
those are necessarily unique too.
\sk

We shall often consider the {\it retarded-minus-advanced propagator} 
\begin{flalign}\label{eqn:ret-minus-adv-prop}
G := G_+ - G_- : \Gamma_{\cc}(E) \longrightarrow \Gamma_{\sc}(E)\quad . 
\end{flalign}
The most relevant feature of the retarded-minus-advanced propagator $G$ 
is the exact sequence 
\begin{flalign}\label{eqn:PGP-exact-seq}
\xymatrix{
0 \ar[r] & \Gamma_{\cc}(E) \ar[r]^-{P} & \Gamma_{\cc}(E) \ar[r]^-{G} & \Gamma_{\sc}(E) \ar[r]^-{P} & \Gamma_{\sc}(E) \ar[r] & 0
}
\end{flalign}
see e.g.\ \cite{BeniniDappiaggi, Bar}, which follows directly 
from the properties of the (extended) Green's operators $G_\pm$. 
\sk

Consider a real vector bundle $E \to M$ that is endowed 
with a {\it fiber metric} $\langle-,-\rangle$, i.e.\ a fiber-wise non-degenerate, 
symmetric, bilinear form. One defines the integration pairing 
\begin{flalign}\label{eqn:fiberpairingintegration}
\langle\!\langle \varphi,\widetilde\varphi\rangle\!\rangle \,:=\, \int_M \langle\varphi,\widetilde\varphi\rangle\,\vol_M\quad, 
\end{flalign}
for all pairs of sections $(\varphi,\widetilde\varphi) \in \Gamma(E)^2$ 
with compact overlapping support, where $\vol_M$ denotes the volume form on $M$.
Given two real vector bundles endowed with fiber metrics $(E_1,\langle-,-\rangle_1)$ 
and $(E_2,\langle-,-\rangle_2)$ and a linear differential operator 
$Q: \Gamma(E_1) \to \Gamma(E_2)$, one defines its {\it formal adjoint} 
$Q^\ast: \Gamma(E_2) \to \Gamma(E_1)$ 
as the unique linear differential operator such that, 
for all pairs of sections $(\varphi_1,\varphi_2) \in \Gamma(E_1) \times \Gamma(E_2)$ 
with compact overlapping support, one has 
\begin{flalign}
\langle\!\langle \varphi_2, Q \varphi_1\rangle\!\rangle_2 = \langle\!\langle Q^\ast \varphi_2, \varphi_1 \rangle\!\rangle_{1} \quad.
\end{flalign}
Given a real vector bundle with fiber metric $(E,\langle-,-\rangle)$ and a linear differential 
operator $P: \Gamma(E) \to \Gamma(E)$, one says that $P$ is 
{\it formally self-adjoint} if $P^\ast = P$. 
When $P: \Gamma(E) \to \Gamma(E)$ 
is a formally self-adjoint Green hyperbolic linear differential operator, 
it follows that the associated retarded and advanced Green's operators 
$G_\pm$ are ``formal adjoints'' of each other, 
namely, for all $\varphi,\widetilde\varphi \in \Gamma_{\cc}(E)$, one has 
\begin{flalign}\label{eqn:Gpmadjoint}
\langle\!\langle \varphi, G_+ \widetilde\varphi \rangle\!\rangle \,=\, \langle\!\langle G_- \varphi,\widetilde\varphi \rangle\!\rangle \,=\, \langle\!\langle \widetilde\varphi,G_- \varphi \rangle\!\rangle\quad. 
\end{flalign}
As a consequence, the retarded-minus-advanced propagator $G = G_+ - G_-$ 
is ``formally skew-adjoint'', namely, 
for all $\varphi,\widetilde\varphi \in \Gamma_{\cc}(E)$, one has 
\begin{flalign}\label{eqn:Gskewadjoint}
\langle\!\langle \varphi, G \widetilde\varphi \rangle\!\rangle = - \langle\!\langle G \varphi, \widetilde\varphi \rangle\!\rangle = - \langle\!\langle \widetilde\varphi, G \varphi \rangle\!\rangle \quad.
\end{flalign}
This implies that the linear map
\begin{flalign}\label{eqn:standardtau}
\tau_M :=\langle\!\langle -, G(-)\rangle\!\rangle : \Gamma_{\cc}(E)^{\otimes 2} \longrightarrow \bbR
\end{flalign}
is anti-symmetric and hence it descends to a Poisson structure
$\tau_M:\Gamma_{\cc}(E)^{\wedge 2}\to \bbR $ on $\Gamma_{\cc}(E)$.
Using also \eqref{eqn:PGP-exact-seq}, one finds that this Poisson structure
descends to the cokernel $\operatorname{coker}(P:\Gamma_{\cc}(E)\to \Gamma_{\cc}(E)) = \Gamma_{\cc}(E)\big/ P\,\Gamma_{\cc}(E)$,
on which it becomes a non-degenerate Poisson structure.
As a consequence of the property \eqref{eqn:Gpmadjoint} of retarded/advanced Green's operators,
the Poisson structure \eqref{eqn:standardtau} can be presented in multiple equivalent ways
\begin{flalign}\label{eqn:standardtaudifferent}
\tau_M(\varphi\otimes\tilde{\varphi})= 
\langle\!\langle \varphi, G \widetilde{\varphi}\rangle\!\rangle = 
- \langle\!\langle \widetilde{\varphi}, G \varphi\rangle\!\rangle =
\pm \langle\!\langle \varphi, G_\pm\widetilde{\varphi}\rangle\!\rangle \mp
\langle\!\langle \widetilde{\varphi}, G_\pm\varphi\rangle\!\rangle \quad,
\end{flalign}
for all $\varphi,\tilde{\varphi}\in  \Gamma_{\cc}(E)$.

\subsection{\label{subsec:Ch}Cochain complexes}
This section reviews some elementary aspects of the theory 
of cochain complexes and sets our conventions. 
This topic is widely covered by the literature, 
see e.g.\ \cite{Weibel} and \cite{Hovey}. 
\sk

Let $\bbK$ be a field of characteristic zero. 
(In the main part of this work $\bbK$ will be either the real numbers $\bbR$ 
or the complex numbers $\bbC$.) A {\it cochain complex} 
$V=((V^n)_{n\in\bbZ}^{}, (Q^n)_{n\in\bbZ}^{})$ 
consists of a $\bbZ$-graded $\bbK$-vector space, i.e.\ a collection 
of $\bbK$-vector spaces $V^n$ labeled by their degree $n \in \bbZ$, 
and a differential $Q=(Q^n)_{n\in\bbZ}^{}$, 
i.e.\ a collection of $\bbK$-linear maps $Q^n : V^n\to V^{n+1}$ 
that increase the degree by $1$ 
and satisfy $Q^{n+1}\, Q^n = 0$ for all $n\in\bbZ$. 
A {\it cochain map} $f=(f^n)_{n\in\bbZ}^{}: V \to W$ 
consists of a collection of $\bbK$-linear maps $f^n : V^n\to W^n$, $n \in \bbZ$, 
that is compatible with the differentials of $V$ and $W$, 
i.e.\ $Q_W^n\,f^n = f^{n+1}\,Q_V^n$, for all $n\in\bbZ$. 
The category of cochain complexes over $\bbK$ 
with cochain maps as morphisms is denoted by $\Ch_\bbK$. 
\sk

The category $\Ch_\bbK$ is endowed with a closed symmetric monoidal structure, 
whose tensor product, monoidal unit, symmetric braiding and internal hom 
are described below. 
Given two cochain complexes $V,W\in\Ch_\bbK$, 
their tensor product $V\otimes W\in\Ch_\bbK$ 
consists of the graded vector space defined degree-wise for all $n\in\bbZ$ by 
\begin{subequations}\label{eqn:tensorproduct}
\begin{flalign}
(V\otimes W)^n\,:=\,\bigoplus_{q\in\bbZ} \big(V^q\otimes W^{n-q}\big)\quad,
\end{flalign}
and of the differential 
$Q_\otimes = (Q_\otimes^n)_{n\in\bbZ}^{}$ defined by the graded Leibniz rule
\begin{flalign}
Q_\otimes^n(v\otimes w) := Q_V^q v \otimes w + (-1)^{q}\,  v\otimes Q_W^{n-q} w\quad,
\end{flalign}
\end{subequations} 
for all $v\in V^q$ and $w\in W^{n-q}$.
The monoidal unit $\bbK\in\Ch_\bbK$ is obtained regarding the ground field 
as a cochain complex concentrated in degree $0$ 
(whose differential necessarily vanishes). 
The symmetric braiding is given by the cochain isomorphisms
$V\otimes W \overset{\cong}{\to} W\otimes V$, 
$v\otimes w \mapsto (-1)^{\vert v\vert \,\vert w\vert} \,w\otimes v$ 
determined by the Koszul sign rule, for all homogeneous $v\in V$ and $w\in W$
whose degree is denoted by $\vert v\vert,\vert w\vert \in\bbZ$. 
Furthermore, given two cochain complexes $V,W\in\Ch_\bbK$, 
their internal hom $[V,W]\in\Ch_\bbK$ consists of 
the graded vector space defined degree-wise for all $n\in\bbZ$ by 
\begin{subequations}\label{eqn:internalhom}
\begin{flalign}
[V,W]^n \,:=\, \prod_{q\in\bbZ} \Hom_\bbK(V^q, W^{n+q})\quad,
\end{flalign}
where $\Hom_\bbK$ denotes the vector space of linear maps, 
and of the differential $\partial=(\partial^n)_{n\in\bbZ}^{}$ defined by 
\begin{flalign}
\partial^n f \,:=\, Q_W \circ f - (-1)^n f \circ Q_V \quad,
\end{flalign}
\end{subequations}
for all $f \in [V,W]^n$. 
The category of cochain complexes $\Ch_\bbK$ becomes a dg-category 
when endowed with the cochain complexes 
of morphisms from $V$ to $W$ given by the internal hom $[V,W]$ 
and the obvious identities and compositions. 
\sk

Associated with every cochain complex $V\in\Ch_\bbK$ is its cohomology 
$H^\bullet(V)$, a graded vector space defined degree-wise by 
$H^n(V) := \ker(Q^n)/\operatorname{im}(Q^{n-1})$, for all $n\in\bbZ$. 
Cohomology extends in an obvious way to a functor $H^\bullet$ on $\Ch_\bbK$ 
taking values in the category of graded vector spaces. 
A cochain map $f:V\to W$ is a {\it quasi-isomorphism} 
if passing to cohomology gives an isomorphism 
$H^\bullet(f): H^\bullet(V)\to H^\bullet(W)$ of graded vector spaces. 
When this is the case, one says that $V$ and $W$ are {\it quasi-isomorphic}. 
Informally, quasi-isomorphic cochain complexes should be regarded 
as ``being the same''. One approach to the formalization of this idea 
is offered by model category theory \cite{Hovey}. 
Concretely, a model category is a bicomplete category that comes endowed 
with a model structure, consisting of three distinguished classes 
of morphisms, called weak equivalences, fibrations and cofibrations, 
subject to suitable axioms, see \cite[Sec.~1.1]{Hovey}. 
Conceptually, the primary role is played by the weak equivalences, 
which formalize a relaxed notion of ``being the same'' compared to isomorphisms. 
Fibrations and cofibrations, instead, are crucial from a practical 
viewpoint as they allow to construct homotopical functors, 
i.e.\ functors that preserve weak equivalences, see \cite[Ch.~2]{Riehl}. 
For instance, ordinary (co)limits frequently fail to preserve weak 
equivalences; model category theory fixes this issue replacing 
them with the respective derived functors, called homotopy (co)limits 
\cite[Ch.~19]{Hirschhorn}, which are homotopical functors by construction. 
(Indeed, homotopy (co)limits play a key role in the present paper 
precisely because they preserve weak equivalences.) 
\sk

It is proven in \cite[Secs.~2.3 and 4.2]{Hovey} that the category of cochain complexes 
$\Ch_\bbK$ carries a closed symmetric monoidal model category structure, 
which is determined by defining the weak equivalences as the 
quasi-isomorphisms and the fibrations as the degree-wise surjective 
cochain maps. (Cofibrations are detected by the so-called 
left-lifting property against acyclic fibrations, 
i.e.\ morphisms that are simultaneously weak equivalences and fibrations.) 
It is not difficult to check that every cochain complex $V$ 
in the model category $\Ch_\bbK$ is both fibrant and cofibrant, 
i.e.\ the unique morphism $V \to 0$ to the terminal object is a fibration
and the unique morphism $0 \to V$ from the initial object is a cofibration. 
In particular, this guarantees that both
the tensor product functor $\otimes : \Ch_\bbK\times\Ch_\bbK\to  \Ch_\bbK$
and the internal hom functor 
$[-,-]: \Ch_\bbK^\op\times \Ch_\bbK \to \Ch_\bbK$ are homotopical, i.e.\ 
they preserve quasi-isomorphisms.
\sk

The internal hom $[V,W]\in\Ch_\bbK$ between two cochain complexes 
$V,W \in \Ch_\bbK$ admits an elegant interpretation in terms of (higher) 
cochain homotopies. In fact, given two $n$-cocycles $f,g \in [V,W]^n$, 
i.e.\ $\partial f = 0 = \partial g$, 
one defines a {\it cochain homotopy} $\lambda$ from $f$ to $g$ 
as an $(n-1)$-cochain $\lambda \in [V,W]^{n-1}$ 
such that $\partial \lambda = g - f$. More explicitly, a cochain homotopy 
$\lambda$ consists of a collection of linear maps 
$(\lambda^q : V^q \to W^{q+n-1})_{q\in\bbZ}^{}$ such that 
$Q_W^{q+n-1} \circ \lambda^q - (-1)^{n-1} \lambda^{q+1} \circ Q_V^q = g^q - f^q$, 
for all $q\in\bbZ$. Notice that $\partial \lambda$ is an $n$-coboundary 
in $[V,W]$, therefore a necessary and sufficient condition 
for the existence of a cochain homotopy is that the cohomology classes 
$[f]=[g] \in H^n([V,W])$ coincide. Note further that this concept of
cochain homotopies specializes for $n=0$ to the ordinary concept of cochain homotopies 
between two cochain maps $f,g: V \to W$. 
In fact, by definition of the internal hom complex, 
see \eqref{eqn:internalhom}, both $f$ and $g$ are $0$-cocycles in $[V,W]$. 
\sk

Finally, let us set our convention for shifts of cochain complexes. 
Given a cochain complex $V\in \Ch_\bbK$ and an integer $p\in\bbZ$, 
we define its {\it $p$-shift} $V[p]\in \Ch_\bbK$ as the cochain complex 
that consists of the graded vector space defined degree-wise 
for all $n\in\bbZ$ by 
\begin{subequations}
\begin{flalign}
V[p]^n := V^{n+p}\quad,
\end{flalign} 
and of the differential $Q_{V[p]}=(Q_{V[p]}^n)_{n\in \bbZ}^{}$ defined by 
\begin{flalign}
Q_{V[p]}^n := (-1)^p\,Q_V^{n+p}\quad,
\end{flalign}
\end{subequations}
for all $n\in\bbZ$.
One immediately observes that $V[p][q] = V[p+q]$, for all $p,q\in\bbZ$, 
and $V[0] = V$. Recalling also the definition of the tensor product 
\eqref{eqn:tensorproduct}, one obtains natural cochain isomorphisms 
$\bbK[p]\otimes V \cong V[p]$ for all $p\in\bbZ$.

\subsection{\label{subsec:FunCh}\texorpdfstring{$\Ch_\bbK$-valued}{Cochain complex-valued} functors}
Along with cochain complexes, we shall also consider functors 
$\V: \CC \to \Ch_\bbK$ on a (small) category $\CC$ (often just a directed set) 
taking values in $\Ch_\bbK$. Taking also the natural transformations 
$\eta: \V \to \W$ between such functors as morphisms, 
one obtains the functor category $\Ch_\bbK^\CC$. 
Using the closed symmetric monoidal structure on $\Ch_\bbK$, 
we can equip the functor category $\Ch_\bbK^\CC$ 
with tensoring, powering and enriched hom over $\Ch_\bbK$ as follows. 
\sk

Given a cochain complex $V \in \Ch_\bbK$ and a functor 
$\V \in \Ch_\bbK^\CC$, their {\it tensoring} 
\begin{flalign}
V \otimes \V \in \Ch_\bbK^\CC
\end{flalign}
is defined as the functor that assigns to each $c \in \CC$ 
the tensor product $V \otimes \V(c) \in \Ch_\bbK$ 
and to each morphism $\gamma: c_0 \to c_1$ in $\CC$ the cochain map 
$\id \otimes \V(\gamma): V \otimes \V(c_0) \to V \otimes \V(c_1)$ 
in $\Ch_\bbK$. 
\sk

Given a cochain complex $V \in \Ch_\bbK$ and a functor 
$\V \in \Ch_\bbK^\CC$, their {\it powering} 
\begin{flalign}
\V^V \in \Ch_\bbK^\CC
\end{flalign}
is defined as the functor that assigns to each $c \in \CC$ the internal hom 
$[V,\V(c)] \in \Ch_\bbK$ and to each morphism $\gamma: c_0 \to c_1$ 
in $\CC$ the cochain map 
$[\id,\V(\gamma)]: [V,\V(c_0)] \to [V,\V(c_1)]$ in $\Ch_\bbK$. 
\sk

Given two functors $\V,\W \in \Ch_\bbK^\CC$, their {\it enriched hom} 
\begin{subequations}\label{eqn:enriched-hom}
\begin{flalign}
\hom(\V,\W) \in \Ch_\bbK
\end{flalign}
is defined as the equalizer 
\begin{flalign}\label{eqn:enriched-hom-pair}
\hom(\V,\W) := \lim \Bigg(
\xymatrix@C=3em{
\displaystyle\prod_{c_0 \in \CC} \big[ \V(c_0),\W(c_0) \big]~~ \ar@<2pt>[r]^-{\V^\ast} \ar@<-2pt>[r]_-{\W_\ast} & \displaystyle\prod_{\gamma: c_0 \to c_1} \big[ \V(c_0),\W(c_1) \big]
}
\Bigg)\quad,
\end{flalign}
where $\V^\ast$ is defined on the $\gamma$-component of the codomain 
by ``pull-back'' along $\gamma : c_0\to c_1$ of the $c_1$-component of the domain, 
i.e.\ $\pr_{\gamma}\, \V^\ast := [\V(\gamma),\id]\, \pr_{c_1}$, 
while $\W_\ast$ is defined on the $\gamma$-component of the codomain 
by ``push-forward'' along $\gamma:c_0\to c_1$ of the $c_0$-component of the domain, 
i.e.\ $\pr_{\gamma}\, \W_\ast := [\id,\W(\gamma)]\, \pr_{c_0}$. 
More explicitly, $\hom(\V,\W) \in \Ch_\bbK$ consists of the graded 
vector space given degree-wise for all $n \in \bbZ$ by 
\begin{flalign}\label{eqn:enriched-hom-graded}
\hom(\V,\W)^n = \Bigg\{ \eta \in \prod_{c_0 \in \CC} \big[ \V(c_0),\W(c_0) \big]^n 
\,:\; \pr_{c_1}\eta \circ \V(\gamma) = \W(\gamma) \circ \pr_{c_0}\eta \,,\; \forall \gamma: c_0 \to c_1 \Bigg\} \quad,
\end{flalign}
i.e.\ $\eta = (\eta_{c_0})_{c_0 \in \CC}$ is a degree $n$ natural transformation 
from $\V$ to $\W$ (regarded as functors valued in graded vector spaces), 
and of the differential $\partial$ defined component-wise for all $c_0 \in \CC$ by 
\begin{flalign}
\pr_{c_0}(\partial \eta) := \partial(\pr_{c_0}\eta)\quad.
\end{flalign}
\end{subequations}

Tensoring, powering and enriched hom are related via the adjunctions 
exhibited by the isomorphisms 
\begin{flalign}
\Ch_\bbK^\CC \big( \V, \W^V \big) \cong \Ch_\bbK^\CC \big( V \otimes \V, \W \big) \cong \Ch_\bbK \big( V,\hom(\V,\W) \big)\quad, 
\end{flalign}
which are natural with respect to $V \in \Ch_\bbK$ 
and $\V,\W \in \Ch_\bbK^\CC$. 
\sk

Also in the functor category $\Ch_\bbK^\CC$ 
one has a notion of weak equivalences 
given by the {\em natural quasi-isomorphisms} $f: \V \to \W$ 
in $\Ch_\bbK^\CC$, i.e.\ the natural transformations whose components 
$f_c: \V(c) \to \W(c)$ in $\Ch_\bbK$ are quasi-isomorphisms, 
for all $c \in \CC$. 
This class of weak equivalences 
is part of the (projective) model category structure on $\Ch_\bbK^\CC$, 
which is entirely determined by defining also the fibrations 
as the natural transformations $f: \V \to \W$ in $\Ch_\bbK^\CC$ 
whose components $f_c: \V(c) \to \W(c)$ in $\Ch_\bbK$ are fibrations, 
i.e.\ degree-wise surjective cochain maps, for all $c \in \CC$. 
(Once again, the cofibrations are detected by the left-lifting property 
against acyclic fibrations.)
With this model category structure and the tensoring, powering 
and enriched hom from above, $\Ch_\bbK^\CC$ turns out to be 
a $\Ch_\bbK$-model category, see \cite[Sec.~4.2]{Hovey}. 
In contrast to the model category $\Ch_\bbK$, whose objects are all fibrant and cofibrant, 
objects of the functor category $\Ch_\bbK^\CC$ are in general fibrant, but they may 
fail to be cofibrant. 
As a consequence, the enriched hom functor 
$\hom(-,-): (\Ch_\bbK^\CC)^\op\times \Ch_\bbK^\CC \to \Ch_\bbK$ may fail 
to preserve weak equivalences. This shortcoming is solved 
by constructing the associated derived functor 
$\map(-,-): (\Ch_\bbK^\CC)^\op\times \Ch_\bbK^\CC \to \Ch_\bbK$, 
which instead preserves weak equivalences 
(see Remark \ref{rem:map-preserves-we}). 
Concretely, we replace the enriched hom $\hom(\V,\W) \in \Ch_\bbK$ 
from $\V \in \Ch_\bbK^\CC$ to $\W \in \Ch_\bbK^\CC$ 
with the mapping complex $\map(\V,\W) \in \Ch_\bbK$ defined below 
as the homotopy limit of a suitable cosimplicial cochain complex 
$C(\V,\W) \in \Ch_\bbK^\Delta$. See also \cite[Sec.~3]{Tamarkin} 
for an alternative presentation of this construction 
and Remark \ref{rem:map-preserves-we} for a description in terms of resolutions and derived functors. 

\paragraph*{Mapping complex:} The parallel pair in \eqref{eqn:enriched-hom-pair}
is only a truncation of the cosimplicial cochain complex 
\begin{subequations}\label{eqn:C}
\begin{flalign}
C(\V,\W) = \Bigg( \xymatrix@C=2em{
C(\V,\W)^0 \ar@<4pt>[r] \ar@<-4pt>[r] & C(\V,\W)^1 
\ar@<8pt>[r] \ar@<0pt>[r] \ar@<-8pt>[r] \ar@<0pt>[l] & \ldots \ar@<4pt>[l] \ar@<-4pt>[l]
} \Bigg) \in \Ch_\bbK^\Delta 
\end{flalign} 
that consists of the cochain complexes 
\begin{flalign}\label{eqn:C-cpx}
C(\V,\W)^n := \prod_{\und{c}: [n] \to \CC} \big[ \V(c_0),\W(c_n) \big] \in \Ch_\bbK\quad,
\end{flalign}
for all integers $n \geq 0$, of the coface maps 
\begin{flalign}\label{eqn:C-coface}
d^k: C(\V,\W)^n \longrightarrow C(\V,\W)^{n+1}
\end{flalign}
in $\Ch_\bbK$, defined in \eqref{eqn:C-coface-def} below 
for all integers $n \geq 0$ and $k=0,\ldots,n+1$, and of the codegeneracy maps 
\begin{flalign}\label{eqn:C-codeg}
s^k: C(\V,\W)^{n+1} \longrightarrow C(\V,\W)^{n}
\end{flalign}
\end{subequations}
in $\Ch_\bbK$, defined in \eqref{eqn:C-codeg-def} below 
for all integers $n \geq 0$ and $k=0,\ldots,n$. 
The product in \eqref{eqn:C-cpx} runs over all $\CC$-valued functors 
$\und{c}: [n] \to \CC$ on the totally ordered set 
$[n] := \{0 < 1 < \cdots < n\}$ (regarded as a category). 
(Equivalently, $\und{c} = 
(c_0 \overset{\gamma_0}{\to} c_1 \overset{\gamma_1}{\to} \cdots \overset{\gamma_{n-1}}{\to} c_n)$ 
is an $n$-tuple of composable morphisms in $\CC$.) 
Denoting by $\hat{k}: [n] \to [n+1]$ the injective order preserving map 
that skips the element $k \in [n+1]$, one defines the coface map $d^0$ by
\begin{subequations}\label{eqn:C-coface-def}
\begin{flalign}
\xymatrix@C=5em{
C(\V,\W)^n \ar[d]_-{\pr_{\und{c} \circ \hat{0}}} \ar@{-->}[r]^-{d^0} & C(\V,\W)^{n+1} \ar[d]^-{\pr_{\und{c}}} \\ 
\big[\V(c_1),\W(c_{n+1})\big] \ar[r]_-{[\V(\gamma_0),\id]} & \big[\V(c_0),\W(c_{n+1})\big]
}
\end{flalign}
the coface map $d^k$, $k=1,\ldots,n$, by 
\begin{flalign}
\xymatrix@C=5em{
C(\V,\W)^n \ar[d]_-{\pr_{\und{c} \circ \hat{k}}} \ar@{-->}[r]^-{d^k} & C(\V,\W)^{n+1} \ar[d]^-{\pr_{\und{c}}} \\ 
\big[\V(c_0),\W(c_{n+1})\big] \ar[r]_-{\id} & \big[\V(c_0),\W(c_{n+1})\big]
}
\end{flalign}
and the coface map $d^{n+1}$ by 
\begin{flalign}
\xymatrix@C=5em{
C(\V,\W)^n \ar[d]_-{\pr_{\und{c} \circ \widehat{n+1}}} \ar@{-->}[r]^-{d^{n+1}} & C(\V,\W)^{n+1} \ar[d]^-{\pr_{\und{c}}} \\ 
\big[\V(c_0),\W(c_{n})\big] \ar[r]_-{[\id,\W(\gamma_{n})]} & \big[\V(c_0),\W(c_{n+1})\big]
}
\end{flalign}
\end{subequations}
Similarly, denoting by $\check{k}: [n+1] \to [n]$ the surjective order 
preserving map that hits the element $k \in [n]$ twice, one defines 
the codegeneracy map $s^k$, $k=0,\ldots,n$, by 
\begin{flalign}\label{eqn:C-codeg-def}
\xymatrix@C=5em{
C(\V,\W)^{n+1} \ar[d]_-{\pr_{\und{c} \circ \check{k}}} \ar@{-->}[r]^-{s^k} & C(\V,\W)^{n} \ar[d]^-{\pr_{\und{c}}} \\ 
\big[\V(c_0),\W(c_{n})\big] \ar[r]_-{\id} & \big[\V(c_0),\W(c_{n})\big]
}
\end{flalign}
The {\em mapping complex} from $\V \in \Ch_\bbK^\CC$ to $\W \in \Ch_\bbK^\CC$ is defined as the homotopy limit 
\begin{flalign}\label{eqn:map}
\map(\V,\W) := \holim(C(\V,\W)) \in \Ch_\bbK
\end{flalign}
of the cosimplicial diagram 
$C(\V,\W) \in \Ch_\bbK^\Delta$ in \eqref{eqn:C}. 
This may be computed by the $\prod$-total complex 
associated with the cosimplicial cochain complex $C(\V,\W)$. Explicitly, 
$\map(\V,\W) \in \Ch_\bbK$ consists of the graded $\bbK$-vector space 
defined degree-wise for all $n \in \bbZ$ by 
\begin{flalign}\label{eqn:map-graded}
\map(\V,\W)^n := \prod_{q \geq 0} C(\V,\W)^{q,n-q}\quad,
\end{flalign}
where $q\geq 0$ denotes the cosimplicial degree and $n-q\in\bbZ$ 
the cochain degree. The (total) differential 
\begin{subequations}\label{eqn:map-diff-def}
\begin{flalign}\label{eqn:map-diff}
\delta := \delta_{\hh} + \delta_{\vv}
\end{flalign}
is the sum of the horizontal differential $\delta_{\hh}$, 
defined component-wise by 
\begin{flalign} 
\pr_0 \circ \delta_{\hh} := 0\quad, \qquad \pr_q \circ \delta_{\hh} := \sum_{k=0}^q (-1)^k\, d^{q-k} \circ \pr_{q-1}\quad, 
\end{flalign}
for all $q \geq 1$, and of the vertical differential 
$\delta_{\vv}$, defined component-wise for all $q \geq 0$ 
and all $\und{c}: [q] \to \CC$ by 
\begin{flalign} 
\pr_{q,\und{c}} \circ \delta_{\vv} := (-1)^q\, \partial \circ \pr_{q,\und{c}} \quad, 
\end{flalign}
\end{subequations}
where $\pr_{q,\und{c}} := \pr_{\und{c}} \circ \pr_{q}$ 
denotes the projection onto the $(q,\und{c})$-component 
$[\V(c_0),\W(c_q)]^{n-q}$ of $\map(\V,\W)^n$ and $\partial$ is the 
differential of the internal hom $[\V(c_0),\W(c_q)] \in \Ch_\bbK$. 
The signs in $\delta_{\hh}$ xcan be understood projecting 
onto $\und{c}$-components. The $k$-th summand of 
$\delta_{\hh}$ acts via $d^{q-k}$, hence 
$\delta_{\hh}$ is pulled through the last $k$ morphisms in $\CC$ 
of the tuple $\und{c}$ before acting. Since each of those morphisms 
contributes $1$ to the total degree and $\delta_{\mathrm{h}}$ has degree 
$1$, this gives rise to the sign $(-1)^k$. Similarly, the sign in 
$\delta_{\vv}$ arises from the fact that $\delta_{\vv}$ acts by $\partial$ 
after being pulled through $\und{c}$. Since the latter contributes $q$ 
to the total degree and $\delta_{\vv}$ has degree $1$, 
this gives rise to the sign $(-1)^q$. 

\begin{rem}\label{rem:map-preserves-we}
The explicit model of the mapping complex $\map(-,-)$ 
presented above stems from a cotriple resolution 
in the sense of \cite[Sec.~13.3]{Fresse}. 
Let us explain this relation in more detail. 
Consider the forgetful functor $U : \Ch_\bbK^{\CC}\to \prod_{c\in\CC} \Ch_\bbK$
that assigns to a functor $\V\in \Ch_\bbK^{\CC}$ its family of values $\big(\V(c)\big)_{c\in \CC}$
on all objects. Because the category $\Ch_\bbK$ is cocomplete, 
this functor admits a left adjoint, i.e.\ there exists an adjunction
\begin{flalign}
\xymatrix{
F: \prod\limits_{c\in\CC} \Ch_\bbK 
\ar@<0.5ex>[r]~ &~ \ar@<0.5ex>[l] \Ch_\bbK^{\CC}: U
}\quad,
\end{flalign}
which defines a concept of free objects in $\Ch_\bbK^{\CC}$.
Associated with this adjunction, one defines a comonad 
(sometimes also called a cotriple) 
$T := F U: \Ch_\bbK^{\CC} \to \Ch_\bbK^{\CC}$ 
with coproduct $T = F U \to F (U F) U = T^2$ 
constructed out of the adjunction unit and counit 
$T = F U \to \id$ given by the adjunction counit. 
For $\V \in \Ch_\bbK^{\CC}$, the comonad $T$ allows us to define 
a simplicial resolution 
\begin{flalign}
\mathrm{Res}(\V) := 
\bigg( \xymatrix@C=1.5em{
T(\V) \ar@<0pt>[r] & T^2(\V) \ar@<4pt>[l] \ar@<-4pt>[l] \ar@<4pt>[r] \ar@<-4pt>[r] & \cdots \ar@<8pt>[l] \ar@<0pt>[l] \ar@<-8pt>[l]
} \bigg) \in (\Ch_\bbK^\CC)^{\Delta^\op} \;. 
\end{flalign}
For $\W \in \Ch_\bbK^{\CC}$, composing the simplicial resolution above 
with the enriched hom $\hom(-,\W)$ computes the cosimplicial cochain complex   
\begin{flalign}
C(\V,\W) = \hom \big( \mathrm{Res}(\V),\W \big) = \bigg( \xymatrix@C=1.5em{
\hom \big( T(\V), \W \big) \ar@<0pt>[r] & \hom \big( T^2(\V), \W \big) \ar@<4pt>[l] \ar@<-4pt>[l] \ar@<4pt>[r] \ar@<-4pt>[r] & \cdots \ar@<8pt>[l] \ar@<0pt>[l] \ar@<-8pt>[l]
} \bigg) \in \Ch_\bbK^\Delta
\end{flalign}
from \eqref{eqn:C}. Finally, forming the $\prod$-total complex 
computes the homotopy limit and determines the mapping complex 
$\map(\V,\W) = \holim(C(\V,\W)) \in \Ch_\bbK$ as in \eqref{eqn:map}. 
Similar techniques as in \cite[Lem.~13.3.3 and Ch.~17]{Fresse} 
show that this construction presents the mapping complex $\map(-,-)$ 
as the derived functor of the enriched hom $\hom(-,-)$. 
\sk

Being a derived functor, it follows that $\map(-,-)$ 
preserves weak equivalences. This can also be seen more directly 
via the following argument. 
Let $f:\V^\prime \to \V$ and $g:\W \to \W^\prime$ in $\Ch_\bbK^\CC$ 
be natural quasi-isomorphisms. For each integer $n \geq 0$ 
and each functor $\und{c}: [n] \to \CC$, one has that 
$[f_{c_0},g_{c_n}]: [\V(c_0),\W(c_n)] \to [\V^\prime(c_0),\W^\prime(c_n)]$ 
in $\Ch_\bbK$ is a quasi-isomorphism because the internal hom 
$[-,-]: \Ch_\bbK^\op \times \Ch_\bbK \to \Ch_\bbK$ 
is a homotopical functor. This entails that 
$C(f,g): C(\V,\W) \to C(\V^\prime,\W^\prime)$ in $\Ch_\bbK^\Delta$ 
is a natural quasi-isomorphism of cosimplicial cochain complexes, 
i.e.\ a natural transformation that is a quasi-isomorphism in each cosimplicial degree. 
Since the homotopy limit $\holim: \Ch_\bbK^\Delta \to \Ch_\bbK$ 
is a homotopical functor, it follows that 
$\map(f,g) := \holim(C(f,g)): \map(\V,\W) \to \map(\V^\prime,\W^\prime)$ 
in $\Ch_\bbK$ is a quasi-isomorphism. 
\end{rem}

\begin{rem}\label{rem:hom-vs-map}
The mapping complex $\map(\V,\W)\in \Ch_\bbK$ in \eqref{eqn:map} 
formalizes a concept of {\em homotopy coherent natural transformations} 
from $\V$ to $\W$, as well as notions of (higher) homotopies between such
mappings. Indeed, an $n$-cochain $\eta \in \map(\V,\W)^n$ consists of 
an $(n-q)$-cochain $\pr_q \eta \in C(\V,\W)^{q,n-q}$
for each $q \geq 0$, whose components 
$\pr_{q,\und{c}} \eta \in [\V(c_0),\W(c_q)]^{n-q}$ 
are labeled also by functors $\und{c}: [q] \to \CC$. 
The cocycle condition $\delta \eta = 0$ can be interpreted as follows. 
For all $c_0 \in \CC$, $\pr_{0,c_0} \eta$ are cocycles 
since $\partial (\pr_{0,c_0} \eta) = \pr_{0,c_0} (\delta \eta) = 0$. 
Even though these may fail to be the components of a natural 
transformation, $\eta$ contains the data $\pr_{1,\gamma_0} \eta$, 
for all $\gamma_0: c_0 \to c_1$ in $\CC$, of a homotopy 
witnessing this failure. Indeed, one has   
\begin{flalign}
\W(\gamma_0) \circ (\pr_{0,c_0} \eta) - (\pr_{0,c_1} \eta) \circ \V(\gamma_0) = \pr_{1,\gamma_0} (\delta_{\mathrm{h}} \eta) = - \pr_{1,\gamma_0} (\delta_{\mathrm{v}} \eta) = \partial (\pr_{1,\gamma_0} \eta) \quad. 
\end{flalign}
The data $\pr_{1,\gamma_0} \eta$, 
for all $\gamma_0: c_0 \to c_1$ in $\CC$, may again fail to be 
natural (in the appropriate sense), but again this failure is controlled 
in a similar fashion by the higher homotopy consisting of the data 
$\pr_{2,\und{c}} \eta$, for all 
$\und{c} = (c_0 \overset{\gamma_0}{\to} c_1 \overset{\gamma_1}{\to} c_2)$ in $\CC$. 
This pattern goes on with increasingly higher homotopies. 
\sk

Note that any $n$-cochain $\eta \in \hom(\V,\W)^n$ in the enriched hom
determines a corresponding $n$-cochain 
$\widetilde{\eta} \in \map(\V,\W)^n$ in the mapping complex. 
The latter is defined by setting 
$\pr_0 \widetilde{\eta} := \eta \in C(\V,\W)^{0,n}$ 
and $\pr_q \widetilde{\eta} := 0 \in C(\V,\W)^{q,n-q}$ 
for $q \geq 1$. It is straightforward to confirm 
that the assignment $\eta \mapsto \widetilde{\eta}$ is compatible 
with the respective differentials, hence we obtain an inclusion 
\begin{flalign}\label{eqn:hom-vs-map}
\hom(\V,\W) \overset{\subseteq}{\longrightarrow} \map(\V,\W)
\end{flalign}
in $\Ch_\bbK$. This inclusion may be interpreted by saying that 
(strict) naturality is a special case of homotopy coherent naturality. 
In the rest of the paper we shall identify the $n$-cochains 
$\eta \in \hom(\V,\W)^n$ in the enriched hom 
with the corresponding $n$-cochains $\widetilde{\eta} \in \map(\V,\W)^n$ 
in the mapping complex, thus dropping the decoration 
$\widetilde{\phantom{\eta}}$ from our notation. 
\end{rem}

\paragraph*{dg-category structure:}
The functor category $\Ch_\bbK^\CC$ can be endowed with 
a dg-category structure whose cochain complex of morphisms 
from $\V$ to $\W$ is given by the mapping complex $\map(\V,\W)$. 
The identities are the obvious ones and the compositions are explicitly 
given, for all $\V, \W, \Z \in \Ch_\bbK^\CC$, by the cochain maps 
\begin{subequations}
\begin{flalign}
\circ: \map(\W,\Z) \otimes \map(\V,\W) \longrightarrow \map(\V,\Z)
\end{flalign}
in $\Ch_\bbK$ that send $g \in \map(\W,\Z)^n$ and 
$f \in \map(\V,\W)^m$ to $g \circ f \in \map(\V,\Z)^{m+n}$, 
which is defined component-wise, for all $q \geq 0$ 
and all functors $\und{c}: [q] \to \CC$, by 
\begin{flalign}
\pr_{q,\und{c}} (g \circ f) := \sum_{k=0}^q (-1)^{k(q-k+n)}\, (\pr_{q-k,\und{c}^{\geq k}} g) \circ (\pr_{k,\und{c}^{\leq k}} f) \quad. 
\end{flalign}
\end{subequations}
In the equation displayed above 
$\und{c}^{\leq k}:= (c_0 \overset{\gamma_0}{\to} \cdots \overset{\gamma_{k-1}}{\to} c_k)$
and $\und{c}^{\geq k}:=(c_k \overset{\gamma_{k+1}}{\to} \cdots \overset{\gamma_{q-1}}{\to} c_q)$ 
denote the tuples of composable morphisms in $\CC$ obtained 
by splitting $\und{c}$ at $c_k$. 
Recalling that each morphism in $\CC$ contributes $1$ to the total 
degree in the mapping complex, the sign of the $k$-th summand 
can be understood as the one that arises when $\und{c}^{\leq k}$ 
is pulled through $\und{c}^{\geq k}$ and $g$. 

\paragraph*{Homotopy colimits:}
Along with mapping complexes, we shall make extensive use of homotopy 
colimits $\hocolim : \Ch_\bbK^\CC\to \Ch_\bbK$. Specifically, we shall use the explicit model presented below. 
Given a functor $\V \in \Ch_\bbK^\CC$, 
consider the simplicial cochain complex (called the simplicial replacement of $\V$)
\begin{subequations}
\begin{flalign}
S(\V) = \Bigg( \xymatrix@C=2em{
S(\V)^0 \ar@<0pt>[r] & S(\V)^1 \ar@<4pt>[l] \ar@<-4pt>[l] \ar@<4pt>[r] \ar@<-4pt>[r] & \ldots \ar@<8pt>[l] \ar@<0pt>[l] \ar@<-8pt>[l]
} \Bigg) \in \Ch_\bbK^{\Delta^\op} 
\end{flalign}
that consists of the cochain complexes 
\begin{flalign}
S(\V)^n := \bigoplus_{\und{c}: [n] \to \CC} \V(c_0) \in \Ch_\bbK \quad,
\end{flalign}
for all integers $n \geq 0$, of the face maps 
\begin{flalign}
d_k: S(\V)^{n+1} \longrightarrow S(\V)^n
\end{flalign}
in $\Ch_\bbK$, defined in \eqref{eqn:S-face-def} below 
for all integers $n \geq 0$ and $k = 0, \ldots, n+1$, 
and of the degeneracy maps 
\begin{flalign}
s_k: S(\V)^n \longrightarrow S(\V)^{n+1}
\end{flalign}
in $\Ch_\bbK$, defined in \eqref{eqn:S-deg-def} below for all 
integers $n \geq 0$ and $k = 0, \ldots, n$. 
\end{subequations}
One defines the face map $d_0$ by 
\begin{subequations}\label{eqn:S-face-def}
\begin{flalign}
\xymatrix@C=5em{
S(\V)^{n+1} \ar@{-->}[r]^-{d_0} & S(\V)^n \\ 
\V(c_0) \ar[u]^-{\iota_{\und{c}}} \ar[r]_-{\V(\gamma_0)} & \V(c_1) \ar[u]_-{\iota_{\und{c} \circ \hat{0}}}
}
\end{flalign}
and the face map $d_k$, $k=1,\ldots,n+1$, by 
\begin{flalign}
\xymatrix@C=5em{
S(\V)^{n+1} \ar@{-->}[r]^-{d_k} & S(\V)^n \\ 
\V(c_0) \ar[u]^-{\iota_{\und{c}}} \ar[r]_-{\id} & \V(c_0) \ar[u]_-{\iota_{\und{c} \circ \hat{k}}}
}
\end{flalign}
\end{subequations}
Furthermore, one defines the degeneracy map $s_k$, $k=0,\ldots,n$, by 
\begin{flalign}\label{eqn:S-deg-def}
\xymatrix@C=5em{
S(\V)^{n} \ar@{-->}[r]^-{s_k} & S(\V)^{n+1} \\ 
\V(c_0) \ar[u]^-{\iota_{\und{c}}} \ar[r]_-{\id} & \V(c_0) \ar[u]_-{\iota_{\und{c} \circ \check{k}}}
}
\end{flalign}
The {\it homotopy colimit} 
\begin{flalign}\label{eqn:hocolim}
\hocolim(\V) \in \Ch_\bbK
\end{flalign}
of $\V\in\Ch_\bbK^{\CC}$
then may be computed by the $\bigoplus$-total complex associated 
with the simplicial cochain complex $S(\V) \in \Ch_\bbK^{\Delta^\op}$. Explicitly, 
$\hocolim(\V) \in \Ch_\bbK$ consists of the graded $\bbK$-vector space 
defined degree-wise for all $n \in \bbZ$ by 
\begin{flalign}
\hocolim(\V)^n := \bigoplus_{q \geq 0} S(\V)^{q,n+q} \quad,
\end{flalign}
where $-q \leq 0$ denotes the (cohomological) simplicial degree 
and $n+q \in \bbZ$ the cochain degree. The (total) differential 
\begin{subequations}
\begin{flalign}
\dd := - \dd_{\mathrm{h}} + \dd_{\mathrm{v}}
\end{flalign}
is the sum of (the opposite of) the horizontal differential 
$\dd_{\mathrm{h}}$, defined component-wise by 
\begin{flalign}
\dd_{\mathrm{h}} \circ \iota_0 := 0 \quad, \qquad \dd_{\mathrm{h}} \circ \iota_q := \sum_{k=0}^q (-1)^{-k}\, \iota_{q-1} \circ d_k \quad,
\end{flalign}
for all $q \geq 1$, and of the vertical differential $\dd_{\mathrm{v}}$, 
defined component-wise for all $q \geq 0$ and all $\und{c}: [q] \to \CC$ 
by 
\begin{flalign}
\dd_{\mathrm{v}} \circ \iota_{q,\und{c}} := (-1)^{-q}\, \iota_{q,\und{c}} \circ Q \quad, 
\end{flalign}
\end{subequations}
where $\iota_{q,\und{c}} := \iota_{q} \circ \iota_{\und{c}}$ 
denotes the inclusion of the $(q,\und{c})$-component $\V(c_0)^{n+q}$ 
of $\hocolim(\V)^n$ and $Q$ is the differential of 
$\V(c_0) \in \Ch_\bbK$. The signs in $\dd_{\mathrm{h}}$ 
can be understood including $\und{c}$-components. 
Since the $k$-th summand of $\dd_{\mathrm{h}}$ acts via $d_k$, 
$\dd_{\mathrm{h}}$ is pulled through the first $k$ morphisms in $\CC$
of the tuple $\und{c}$. Since each of those morphisms contributes $-1$ 
to the total degree and $\dd_{\mathrm{h}}$ has degree $1$, 
this gives rise to the sign $(-1)^{-k}$. 
Similarly, the sign in $\dd_{\mathrm{v}}$ arises from the fact that 
it acts by $Q$ after being pulled through $\und{c}$. Since the latter 
contributes $-q$ to the total degree and $\dd_{\mathrm{v}}$ 
has degree $1$, this gives rise to the sign $(-1)^{-q}$. The additional 
relative sign between the horizontal and vertical parts 
of the total differential $\dd$ is purely conventional and
chosen to ensure that the dg-adjunction from Proposition \ref{propo:dg-adjunction} below holds true. 
\begin{rem}\label{rem:hocolim-colim}
The homotopy colimit \eqref{eqn:hocolim} comes (as usual)
with a canonical natural transformation 
\begin{flalign}\label{eqn:hocolim-colim}
\hocolim \longrightarrow \colim
\end{flalign}
to the ordinary colimit, whose component 
at $\V \in \Ch_\bbK^\CC$ sends $\iota_{0,c_0} v \in \hocolim(\V)^n$, 
with $c_0 \in \CC$ and $v \in \V(c_0)^n$, 
to $\iota_{c_0} v \in \colim(\V)^n$ 
and $\iota_{q,\und{c}} v \in \hocolim(\V)^n$, 
with $q \geq 1$, $\und{c}: [q] \to \CC$ 
and $v \in \V(c_0)^{n+q}$, to $0 \in \colim(\V)^n$. 
(Here $\iota_{c_0}: \V(c_0) \to \colim(\V)$ in $\Ch_\bbK$, 
for all $c_0 \in \CC$, 
denote the canonical cochain maps from a diagram to its colimit.) 
Let us emphasize that \eqref{eqn:hocolim-colim} is furthermore 
a natural quasi-isomorphism when $\CC$ is a filtered category 
as in this case the ordinary colimit is a model for the 
homotopy colimit (by the AB5 axiom of Grothendieck Abelian 
categories). This situation will occur frequently 
in Sections \ref{sec:Green-hyp-cpx} and \ref{sec:witness}. 
\end{rem}

\paragraph*{The homotopy colimit as a dg-left adjoint:}
Let us extend the homotopy colimit to a dg-functor by defining, 
for all $\V, \W \in \Ch_\bbK^\CC$, 
its action on cochain complexes of morphisms
\begin{subequations}
\begin{flalign}\label{eqn:hocolimdgfunctorTMP}
\hocolim: \map(\V, \W) \longrightarrow [\hocolim(\V), \hocolim(\W)]
\end{flalign}
in $\Ch_\bbK$ through 
the adjunct cochain map (denoted with abuse of notation by the same symbol)
\begin{flalign}
\hocolim: \map(\V, \W) \otimes \hocolim(\V) \longrightarrow \hocolim(\W)
\end{flalign}
in $\Ch_\bbK$ that sends $\eta \in \map(\V,\W)^m$ and 
$\iota_{q,\und{c}} v \in \hocolim(\V)^n$, 
with $q \geq 0$, $\und{c}: [q] \to \CC$ and $v \in \V(c_0)^{n+q}$, to  
\begin{flalign}
\hocolim(\eta) (\iota_{q,\und{c}} v) := \sum_{k=0}^q (-1)^{-qm+k(q-k)}\, \iota_{q-k,\und{c}^{\geq k}} ((\pr_{k,\und{c}^{\leq k}} \eta)^{n+q}\, v) \in \hocolim(\W)^{m+n} \quad.
\end{flalign}
\end{subequations} 
Recalling that each morphism in $\CC$ contributes $-1$ 
to the total degree in the homotopy colimit, 
the sign of each summand can be understood by observing that 
$\eta$ is pulled through $\und{c}$ 
and furthermore $\und{c}^{\leq k}: [k] \to \CC$ is pulled through $\und{c}^{\geq k}: [q-k] \to \CC$. 
A straightforward check shows that this defines a dg-functor 
\begin{flalign}\label{eqn:hocolim-dg-fun}
\hocolim: \Ch_\bbK^\CC \longrightarrow \Ch_\bbK \quad.
\end{flalign}
Consider now also the diagonal dg-functor 
\begin{flalign}\label{eqn:diagonal-dg-fun}
\Delta: \Ch_\bbK \longrightarrow \Ch_\bbK^\CC
\end{flalign}
that sends a cochain complex $V \in \Ch_\bbK$ to the constant functor 
$\Delta V \in \Ch_\bbK^\CC$ and, for all $V,W \in \Ch_\bbK$, 
an $n$-cochain $f \in [V,W]^n$ to the $n$-cochain 
$\Delta f \in \map(\Delta V,\Delta W)^n$ 
defined by $\pr_{0,c_0} (\Delta f) := f$, for all $c_0 \in \CC$, 
and $\pr_{q,\und{c}} (\Delta f) := 0$, for all $q \geq 1$ and 
$\und{c}: [q] \to \CC$. Direct inspection shows that, for all $V \in \Ch_\bbK$ 
and $\V \in \Ch_\bbK^\CC$, the cochain map 
\begin{subequations}
\begin{flalign}\label{eqn:dg-adjunction}
\map(\V,\Delta V) \longrightarrow [\hocolim(\V), \hocolim(\Delta V)]\longrightarrow  [\hocolim(\V),V]
\end{flalign}
in $\Ch_\bbK$, which is obtained by composing \eqref{eqn:hocolimdgfunctorTMP} 
with the map induced by $\hocolim(\Delta V) \to \colim(\Delta V) = V$ (see \eqref{eqn:hocolim-colim}),
is an isomorphism that is natural with respect to both $\V$ and $V$ (in the dg-enriched sense). Explicitly, 
the previous cochain map is given by the adjunct of the cochain map 
\begin{flalign}
\map(\V,\Delta V) \otimes \hocolim(\V) \longrightarrow V
\end{flalign}
\end{subequations}
in $\Ch_\bbK$ that sends $\eta \in \map(\V,\Delta V)^m$ and $\iota_{q,\und{c}} v \in \hocolim(\V)^n$, 
with $q \geq 0$, $\und{c}: [q] \to \CC$ and $v \in \V(c_0)^{n+q}$, to 
$(-1)^{-qm}\, (\pr_{q,\und{c}} \eta)^{n+q}\, v \in V^{m+n}$. 
The isomorphism \eqref{eqn:dg-adjunction} will be used frequently 
throughout the rest of the paper to identify $\map(\V,\Delta V)$ 
and $[\hocolim(\V),V]$. This result is summarized below.  
\begin{propo}\label{propo:dg-adjunction}
The homotopy colimit dg-functor $\hocolim: \Ch_\bbK^\CC \to \Ch_\bbK$ 
from \eqref{eqn:hocolim-dg-fun} is dg-left adjoint 
to the diagonal dg-functor $\Delta: \Ch_\bbK \to \Ch_\bbK^\CC$ 
from \eqref{eqn:diagonal-dg-fun}. 
\end{propo}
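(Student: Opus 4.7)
My plan is to verify the three claims implicit in the discussion immediately preceding the statement, since the candidate natural isomorphism \eqref{eqn:dg-adjunction} has already been written down explicitly. Concretely, setting $\Phi_{\V,V}: \map(\V,\Delta V) \to [\hocolim(\V),V]$ to be the map whose adjunct sends $\eta \otimes \iota_{q,\und c} v$ to $(-1)^{-qm}\,(\pr_{q,\und c}\eta)^{n+q}\,v$, I would show that (i)~$\Phi_{\V,V}$ is a bijection of underlying graded vector spaces, (ii)~it is a cochain map, and (iii)~the family $\{\Phi_{\V,V}\}$ is dg-natural in both arguments. Once these are established, a dg-left adjunction is produced by standard enriched category theory.

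For (i), I would unfold the two sides degree-wise. By \eqref{eqn:map-graded} together with the fact that $[\V(c_0),\Delta V(c_q)]^{n-q} = [\V(c_0),V]^{n-q}$, an element of $\map(\V,\Delta V)^m$ is precisely a family of linear maps $(\pr_{q,\und c}\eta : \V(c_0) \to V)_{q,\und c}$ of degree $m-q$. On the other hand, since $\hocolim(\V)$ is a direct sum over pairs $(q,\und c)$, an element of $[\hocolim(\V),V]^m$ amounts to the same data. The sign twist $(-1)^{-qm}$ does not affect bijectivity, so this step is formal.

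For (ii), the substance of the proof, I would compute both $\partial \Phi(\eta)$ and $\Phi(\delta\eta)$ on a generator $\iota_{q,\und c}v$ and match the four types of contributions. Because $\Delta V$ is constant, in the definition \eqref{eqn:C-coface-def} of the coface maps the only nontrivial ``edge'' is $d^0$, which yields pre-composition by $\V(\gamma_0)$; analogously for the face map $d_0$ in \eqref{eqn:S-face-def}. Thus each side decomposes into: (a)~a term involving $\pr_{q-1,\und c\circ\hat0}\eta$ composed with $\V(\gamma_0)$, (b)~a sum over $k \geq 1$ of terms $\pr_{q-1,\und c\circ\hat k}\eta$, (c)~a post-composition of $\pr_{q,\und c}\eta$ by $Q_V$, coming from $\delta_{\vv}$ on one side and from the first summand of $\partial$ on the other, and (d)~a pre-composition of $\pr_{q,\und c}\eta$ by $Q_{\V(c_0)}$, coming from $\delta_{\vv}$ and from $\dd_{\vv}$ respectively. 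Chasing signs, the factors $(-1)^q$ in $\delta_{\vv}$, $(-1)^{q-k}$ in $\delta_{\hh}$, $(-1)^{-k}$ in $\dd_{\hh}$, $(-1)^{-q}$ in $\dd_{\vv}$, and the Koszul signs from the twist $(-1)^{-qm}$ and from $\partial f = Q\circ f - (-1)^m f\circ Q$ align term by term. In particular, the chosen relative sign $-\dd_{\hh}+\dd_{\vv}$ in the differential of $\hocolim(\V)$ is precisely what is needed for (c) and for the $k=0$ part of (a). This is the main obstacle: the verification is routine but delicate, and essentially forces the sign conventions used earlier.

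For (iii), dg-naturality in $\V$ is tested by the obvious precomposition with any cochain $\beta \in \map(\V',\V)^\ell$, while naturality in $V$ is tested by postcomposition with any $g \in [V,V']^\ell$. Both commutativities follow by expanding the composition formula for $\map$ and the dg-functoriality of $\hocolim$ from \eqref{eqn:hocolimdgfunctorTMP}, and reduce to comparing two sums of Koszul signs that differ only by the rewriting of $(-1)^{-qm}$ into $(-1)^{-q(m+\ell)}(-1)^{q\ell}$; the residual sign is absorbed into the Koszul sign for interchanging $\beta$ (respectively $g$) past the tuple $\und c$. Combined with (i)–(ii), this yields the dg-enriched natural isomorphism $\map(\V,\Delta V) \cong [\hocolim(\V),V]$, which is the definition of a dg-adjunction $\hocolim \dashv \Delta$.
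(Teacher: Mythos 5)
Your proposal is correct and takes essentially the same route as the paper: the paper already writes down the explicit isomorphism $\map(\V,\Delta V)\cong[\hocolim(\V),V]$ immediately before the proposition, with the same sign twist $(-1)^{-qm}$, and asserts by ``direct inspection'' that it is a dg-natural cochain isomorphism. Your three-step unpacking (graded bijection via reindexing the $\prod$/$\bigoplus$, compatibility with $\delta=\delta_{\hh}+\delta_{\vv}$ versus $\partial$ and $\dd=-\dd_{\hh}+\dd_{\vv}$ term by term, and dg-naturality in both slots using the composition formula for $\map$ and the dg-functoriality of $\hocolim$) is precisely the content of that ``direct inspection,'' including the observation that the relative sign $-\dd_{\hh}+\dd_{\vv}$ is forced by the adjunction.
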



\section{\label{sec:Green-hyp-cpx}Green hyperbolic complexes}
The purpose of this section is to introduce a generalization 
of Green hyperbolic linear differential operators, see \cite{Bar}, 
which we call {\it Green hyperbolic complexes}. 
Those play a central role in the study of derived critical loci 
of gauge-theoretic quadratic action functionals on globally hyperbolic Lorentzian manifolds, 
see \cite{linYM}. 
The core idea is based on the new concept of 
{\it retarded and advanced Green's homotopies}, 
which extend the well-known retarded and advanced Green's operators to a higher homological context. 
Examples \ref{ex:single-diff-op} and \ref{ex:Lambdapm=Gpm} clarify how one derives 
from Green hyperbolic linear differential operators a class 
of examples of Green hyperbolic complexes. 
In analogy with existence and uniqueness 
of retarded and advanced Green's operators for Green hyperbolic 
linear differential operators, Proposition 
\ref{propo:Greenhyp-unique} asserts that Green hyperbolic 
complexes admit unique (in the appropriate sense) 
retarded and advanced Green's homotopies. Furthermore, 
introducing the analog of the retarded-minus-advanced propagator, 
which we name {\it retarded-minus-advanced cochain map}, 
we shall prove in Theorem \ref{th:ret-minus-adv-qiso} 
a non-trivial generalization of the exact sequence \eqref{eqn:PGP-exact-seq} 
associated with any Green hyperbolic linear differential operator. 
Our theorem offers the following new interpretation of the exact sequence 
\eqref{eqn:PGP-exact-seq}: the latter witnesses the fact that 
the retarded-minus-advanced propagator 
establishes a quasi-isomorphism between 
(suitably shifted versions of) the complexes of sections 
with compact and respectively spacelike compact support, 
whose differential is just the original Green hyperbolic 
linear differential operator. 
\sk

For the rest of this section $M$ will denote a fixed oriented 
and time-oriented globally hyperbolic Lorentzian manifold 
of dimension $m \geq 2$. 
We work over the field $\bbK = \bbR$ of real or $\bbK = \bbC$ 
of complex numbers. 
We adopt the conventions of Section \ref{subsec:Green} 
for sections of vector bundles with restricted support. 
Furthermore, given a ($\bbZ$-)graded ($\bbK$-)vector bundle $F \to M$ 
(degree-wise of finite rank), 
\begin{flalign}\label{eqn:FFF}
\FFF^n := \Gamma(F^n)
\end{flalign}
will denote the vector space 
of degree $n$ smooth sections, i.e.\ smooth sections 
of the degree $n$ vector bundle $F^n \to M$. 
Similarly, $\FFF_{C}^n := \Gamma_{C}(F^n)$ 
and $\FFF_{\scrD}^n := \Gamma_{\scrD}(F^n)$ will denote 
the vector spaces of degree $n$ smooth sections 
with support contained in a closed subset $C \subseteq M$ 
and, respectively, with $\scrD$-support.

\subsection{\label{subsec:ret-adv-Green-homotopies}Retarded and advanced Green's homotopies}
\begin{defi}
A {\it complex of linear differential operators} on $M$ is a pair $(F,Q)$ 
consisting of a graded vector bundle $F \to M$ and of a collection 
$Q = (Q^n: \FFF^n \to \FFF^{n+1})_{n \in \bbZ}$ 
of degree increasing linear differential operators 
such that, for all $n \in \bbZ$, $Q^{n+1}\, Q^n = 0$. 
\end{defi}

\begin{ex}\label{ex:deRham}
The prime example of a complex of linear differential operators on $M$ 
is the de~Rham complex $(\Lambda^\bullet M,\dd_{\dR})$, 
which consists of the graded vector bundle $\Lambda^\bullet M$ 
of differential forms on $M$ 
and of the usual de~Rham differential $\dd_{\dR}$. 
Note that, for dimension $m=3$, shifting by $1$ the de~Rham complex provides 
the complex of linear differential operators 
$(F_{\CS},Q_{\CS}) = (\Lambda^\bullet M[1],\dd_{\dR\, [1]})$ 
associated with linear Chern-Simons theory. 
\end{ex}

\begin{ex}\label{ex:single-diff-op}
Another class of examples of complexes 
of linear differential operators arise from pairs $(E,P)$ consisting 
of a vector bundle $E \to M$ and of a linear differential operator $P$ 
acting on its sections. 
The associated complex of linear differential operators $(F_{(E,P)},Q_{(E,P)})$ 
consists of the graded vector bundle $F_{(E,P)} \to M$ 
concentrated in degrees $0$ and $1$ and 
defined by $F_{(E,P)}^n:=E$, for $n=0,1$, 
and of the collection of linear differential operators $Q_{(E,P)}$, 
whose only non-vanishing component is 
$Q_{(E,P)}^0:=P: \FFF_{(E,P)}^0 \to \FFF_{(E,P)}^1$. 
The linear differential operator 
$P = \Box + m^2: C^\infty(M) \to C^\infty(M)$, 
which governs the dynamics of a Klein-Gordon field of mass $m \geq 0$,
falls within this class. 
\end{ex}

\begin{ex}\label{ex:Maxwell-p-forms}
A richer example is provided by Maxwell $p$-forms on $M$, for $p \leq m-1$. 
Here the complex of linear differential operators $(F_{\MW},Q_{\MW})$ consists 
of the graded vector bundle $F_{\MW} \to M$ concentrated between degrees $-p$ 
and $p+1$ and defined by 
\begin{flalign}
F_{\MW}^n := 
\begin{cases}
\Lambda^{p+n} M\quad, & n=-p,\ldots,0\quad, \\
\Lambda^{p+1-n} M\quad, & n=1,\ldots,p+1\quad,
\end{cases}
\end{flalign}
where $\Lambda^k M \to M$ denotes the vector bundle 
of differential $k$-forms on $M$, 
and of the collection $Q_{\MW}$ of degree increasing linear differential operators, 
whose only non-vanishing components are 
\begin{flalign}
Q_{\MW}^n :=
\begin{cases}
\dd_{\dR} \quad, & n=-p,\ldots,-1 \quad, \\
\delta_{\dR}\, \dd_{\dR} \quad, & n=0 \quad, \\ 
\delta_{\dR} \quad, & n=1,\ldots,p \quad,
\end{cases}
\end{flalign}
where $\dd_{\dR}$ and $\delta_{\dR} := (-1)^k \ast^{-1} \dd_{\dR}\, \ast$ 
denote the de Rham differential and respectively codifferential 
on $k$-forms. 
(The latter is obtained using the Hodge star operator $\ast$, 
which is fixed by the metric and the orientation of $M$.) 
We observe that the underlying cochain complex $(\FFF_{\MW},Q_\MW)\in\Ch_\bbR$
reproduces the derived critical locus of linear Yang-Mills theory 
when $p=1$, see \cite{linYM}, and its higher generalizations 
for $p=2,\ldots,m-1$, see \cite{hrep}. 
\end{ex}

Recalling our conventions from Section \ref{subsec:Green}, 
we shall denote the complex of sections 
with support contained in a closed subset $C \subseteq M$ 
by $\FFF_C \in \Ch_\bbK$. 
When also a directed system $\scrD$ is considered, 
the cochain complexes $\FFF_D \in \Ch_\bbK$, for $D \in \scrD$, 
and their inclusions $\FFF_D \subseteq \FFF_{D^\prime}$, 
for $D \subseteq D^\prime \in \scrD$, define the functor 
\begin{flalign}\label{eqn:FFF(-)}
\FFF_{(-)} \in \Ch_\bbK^\scrD 
\end{flalign}
in an obvious way. Passing to the (homotopy) colimit provides 
the cochain complexes 
\begin{flalign}\label{eqn:FFF(-)colimit}
\FFF_{(\hh)\scrD} := \mathrm{(ho)}\colim \big( \FFF_{(-)}: \scrD \to \Ch_\bbK \big) \in \Ch_\bbK 
\end{flalign}
of $\scrD$-supported sections and the canonical quasi-isomorphism 
\begin{flalign}
\FFF_{\hh\scrD} \overset{\sim}{\longrightarrow} \FFF_{\scrD}
\end{flalign}
in $\Ch_\bbK$, see Remark \ref{rem:hocolim-colim}. 
For instance, when $\scrD = \cc$ is the directed system 
of compact subsets of $M$, 
one obtains the cochain complex $\FFF_{(\hh)\cc} \in \Ch_\bbK$ 
of compactly supported sections as the (homotopy) colimit of 
$\FFF_{(-)} \in \Ch_\bbK^\cc$. 
Since the causal future/past $J_M^\pm(K) \subseteq M$ 
of a compact subset $K \subseteq M$ is closed 
and forming the causal future/past preserves inclusions, 
one obtains order preserving maps $J_M^\pm: \cc \to \cl$ 
between directed sets. This allows us to define the functor 
\begin{flalign}\label{eqn:FFFJpm(-)}
\FFF_{J_M^\pm(-)} := \FFF_{(-)} \circ J_M^\pm \in \Ch_\bbK^\cc
\end{flalign} 
by composing the functor $\FFF_{(-)} \in \Ch_\bbK^\cl$ 
with the order preserving map $J_M^\pm$ regarded as a functor. 

\begin{defi}\label{defi:Green-hyp-cpx}
Let $(F,Q)$ be a complex of linear differential operators on $M$. 
\begin{itemize}
\item[(i)] A {\it retarded/advanced Green's homotopy} 
$\Lambda_\pm \in \map(\FFF_{J_M^\pm(-)},\FFF_{J_M^\pm(-)})^{-1}$ 
is a $(-1)$-cochain in the mapping complex \eqref{eqn:map} from 
the functor $\FFF_{J_M^\pm(-)} \in \Ch_\bbK^\cc$ to itself, 
whose differential $\delta \Lambda_\pm = \id$ is the identity 
natural transformation of the functor 
$\FFF_{J_M^\pm(-)} \in \Ch_\bbK^\cc$.

\item[(ii)] We say that $(F,Q)$ is a {\it Green hyperbolic complex}
if it admits a retarded and an advanced Green's homotopy $\Lambda_\pm$.
\end{itemize}
\end{defi}

\begin{ex}\label{ex:Lambdapm=Gpm}
Let us consider retarded/advanced Green's homotopies $\Lambda_\pm$ 
for the complex of linear differential operators $(F_{(E,P)},Q_{(E,P)})$ 
arising from a single linear differential operator $P$ acting 
on sections of a vector bundle $E \to M$, 
see Example \ref{ex:single-diff-op}. 
In this case, as we explain below, retarded/advanced Green's homotopies 
are in one-to-one correspondence with retarded/advanced Green's 
operators for $P$. Hence, $(F_{(E,P)},Q_{(E,P)})$ is a Green hyperbolic complex
if and only if $P$ is a Green hyperbolic linear differential operator.
\sk

Let us provide the relevant arguments. Since $F_{(E,P)}$ is concentrated in degrees 
$0$ and $1$, the only possibly non-vanishing components of a 
retarded/advanced Green's homotopy $\Lambda_\pm$ for 
$(F_{(E,P)},Q_{(E,P)})$ are the linear maps 
\begin{flalign}
(\pr_{0,K_0} \Lambda_\pm)^1: \FFF_{J_M^\pm(K_0)}^1 \longrightarrow \FFF_{J_M^\pm(K_0)}^{0} \quad,
\end{flalign} 
for all compact subsets $K_0 \subseteq M$. Recalling 
the definition \eqref{eqn:map-diff-def} of the mapping complex 
differential $\delta$, one finds that the condition 
$\delta \Lambda_\pm = \id$ has components of two types. 
The first type, involving only the horizontal differential 
$\delta_{\hh}$, is given by 
\begin{subequations}
\begin{flalign}
\FFF^0_{J_M^\pm(K_0 \subseteq K_1)} \circ (\pr_{0,K_0} \Lambda_{\pm})^1 - (\pr_{0,K_1} \Lambda_{\pm})^1 \circ \FFF^1_{J_M^\pm(K_0 \subseteq K_1)} = (\pr_{1,K_0 \subseteq K_1}(\delta_{\hh} \Lambda_\pm))^1 = 0 \quad,
\end{flalign}
for all inclusions $K_0 \subseteq K_1$ between compact subsets of $M$. 
The second type, involving only the vertical differential $\delta_{\vv}$, 
is given by 
\begin{flalign}
(\pr_{0,K_0} \Lambda_\pm)^{1} \circ P = (\pr_{0,K_0} \Lambda_\pm)^{1} \circ Q_{(E,P)}^0 = (\pr_{0,K_0}(\delta_{\vv} \Lambda_\pm))^0 = \id \quad, \\
P \circ (\pr_{0,K_0} \Lambda_\pm)^{1} = Q_{(E,P)}^0 \circ (\pr_{0,K_0} \Lambda_\pm)^{1} = (\pr_{0,K_0}(\delta_{\vv} \Lambda_\pm))^1 = \id \quad,
\end{flalign}
\end{subequations}
for all compact subsets $K_0 \subseteq M$. 
We summarize below all conditions in fully explicit form: 
\begin{enumerate}
\item[(a)] The linear maps $(\pr_{0,K_0} \Lambda_\pm)^1: \FFF_{J_M^\pm(K_0)}^1 \to \FFF_{J_M^\pm(K_0)}^{0}$ 
for all compact subsets $K_0 \subseteq M$ are compatible 
with inclusions $K_0 \subseteq K_1$ between compact subsets of $M$. 
In other words, these are the components of a natural transformation 
$(\pr_{0,(-)} \Lambda_\pm)^1: \FFF_{J_M^\pm(-)}^1 \to \FFF_{J_M^\pm(-)}^0$. 
\item[(b-c)] The linear maps 
$P \circ (\pr_{0,K_0} \Lambda_\pm)^{1} = \id: \FFF_{J_M^\pm(K_0)}^1 \to \FFF_{J_M^\pm(K_0)}^1$ 
and $(\pr_{0,K_0} \Lambda_\pm)^{1} \circ P = \id: \FFF_{J_M^\pm(K_0)}^0 \to \FFF_{J_M^\pm(K_0)}^0$ 
coincide for all compact subsets $K_0 \subseteq M$. 
\end{enumerate}
Recalling also the unique extensions of retarded/advanced Green's 
operators from \cite{Bar}, see also Section \ref{subsec:Green}, 
the datum of the natural transformation 
$(\pr_{0,(-)} \Lambda_\pm)^1: \FFF_{J_M^\pm(-)}^1 \to \FFF_{J_M^\pm(-)}^0$ 
from (a) subject to (b-c) is 
equivalent to the datum of a linear map $G_\pm: \Gamma_c(E) \to \Gamma(E)$ 
such that, for all $\varphi \in \Gamma_c(E)$, $\supp(G_\pm \varphi)$ is contained 
in the causal future/past $J_M^\pm(\supp(\varphi))$ of the support of $\varphi$
and $P\, G_\pm \varphi = \varphi = G_\pm\, P \varphi$. 
This means that the datum of a retarded/advanced Green's homotopy 
$\Lambda_\pm$ for $(F_{(E,P)},Q_{(E,P)})$ is equivalent to 
the datum of a retarded/advanced Green's operator 
$G_\pm: \Gamma_c(E) \to \Gamma(E)$ for $P$. 
\end{ex}

\begin{ex}
The complexes of linear differential operators from Examples 
\ref{ex:deRham} and \ref{ex:Maxwell-p-forms} are Green 
hyperbolic. We will prove this fact later in Section
\ref{sec:witness} by using the concept of {\em Green's witnesses}.
See in particular Examples \ref{ex:deRham-witness}, \ref{ex:Maxwell-p-forms-witness},
\ref{ex:Lambdapm-deRham} and \ref{ex:Lambdapm-Maxwell-p-forms}.
\end{ex}

\begin{rem}
Our approach to the definition of retarded and advanced Green's 
homotopies $\Lambda_\pm$ is purely algebraic, in the sense that 
we define $\Lambda_\pm$ as a collection of linear maps 
subject to suitable algebraic conditions. 
In doing so, we follow the approach to retarded and advanced 
Green's operators $G_\pm$ adopted in \cite{BGP,Bar}. 
Indeed, these references prove continuity (with respect to 
suitable topologies) of $G_\pm$ as a consequence 
of their algebraic definition. As a by-product, $G_\pm$ 
admit a presentation in terms of suitable distributional kernels. 
Retarded and advanced Green's homotopies are more abstract concepts, 
hence one does not expect that their algebraic definition already 
ensures continuity and presentation via distributional kernels. 
To achieve this goal one might, for instance, try to endow 
the relevant  complexes of sections with suitable additional 
structures, such as topologies. Unfortunately, topological vector spaces
are well-known to have a bad interplay with homological algebra (since they do not form
an Abelian category), 
while having a homologically well-behaved concept 
of retarded and advanced Green's homotopies is crucial to establish 
fundamental properties, such as uniqueness, 
see Proposition \ref{propo:Greenhyp-unique}. 
To circumvent this issue, we shall introduce in Section \ref{sec:witness} 
the concept of a Green's witness $W$, consisting of a collection of 
degree decreasing linear differential operators. 
Using $W$ one constructs retarded and advanced Green's homotopies 
$\Lambda_\pm := W\, G_\pm$ by composing ordinary retarded and advanced 
Green's operators with linear differential operators. As a consequence, 
$\Lambda_\pm$ turn out to be continuous and admit distributional kernels 
in the classical sense. 
Summing up: on the one hand, the algebraic definition 
of retarded and advanced Green's homotopies $\Lambda_\pm$ ensures 
their uniqueness in the sense of contractible spaces of choices; 
on the other hand, the presence of a Green's witness $W$ 
(which is very frequently available in concrete examples), 
ensures also that the spaces of choices contain 
analytically well-behaved points $\Lambda_\pm := W\, G_\pm$. 
\end{rem}

We have the following recognition principle for Green hyperbolic complexes.
\begin{propo}\label{propo:Greenhyp-acyclic}
A complex of linear differential operators $(F,Q)$ on $M$ is Green hyperbolic 
if and only if, for all compact subsets $K \subseteq M$, 
the cochain complexes $\FFF_{J_M^+(K)}$ and $\FFF_{J_M^-(K)}$ are both acyclic. 
\end{propo}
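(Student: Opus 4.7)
My approach is to prove the two implications separately for each sign $\pm$, exploiting the projection from the mapping complex to its $(0,K)$-component together with the homotopy-invariance of $\map$ recorded in Remark \ref{rem:map-preserves-we}.

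For the forward direction, suppose a retarded/advanced Green's homotopy $\Lambda_\pm$ exists. For every compact $K \subseteq M$, inspection of the differential \eqref{eqn:map-diff-def} shows that $\pr_0 \circ \delta_{\hh} = 0$ and $\pr_{0,K} \circ \delta_{\vv} = \partial \circ \pr_{0,K}$, so the projection
\begin{flalign*}
\pr_{0,K} : \map(\FFF_{J_M^\pm(-)},\FFF_{J_M^\pm(-)}) \longrightarrow [\FFF_{J_M^\pm(K)},\FFF_{J_M^\pm(K)}]
\end{flalign*}
is a cochain map in $\Ch_\bbK$. Applied to $\delta\Lambda_\pm=\id$ it yields $\partial(\pr_{0,K}\Lambda_\pm) = \id_{\FFF_{J_M^\pm(K)}}$, a null-homotopy of the identity. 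A standard argument (for any cocycle $v$ one has $v = Q(\pr_{0,K}\Lambda_\pm)\,v$) then shows that $\FFF_{J_M^\pm(K)}$ is contractible, in particular acyclic.

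For the converse, assume that $\FFF_{J_M^\pm(K)}$ is acyclic for every compact $K\subseteq M$. Then the unique natural transformation $\FFF_{J_M^\pm(-)} \to 0$ in $\Ch_\bbK^\cc$ is a natural quasi-isomorphism. Invoking Remark \ref{rem:map-preserves-we}, pre-composition with this morphism induces a quasi-isomorphism $\map(0,\FFF_{J_M^\pm(-)}) \overset{\sim}{\to} \map(\FFF_{J_M^\pm(-)},\FFF_{J_M^\pm(-)})$ in $\Ch_\bbK$. Since $[0,V]=0$ for every $V \in \Ch_\bbK$, the cosimplicial cochain complex $C(0,\FFF_{J_M^\pm(-)})$ is pointwise zero, so $\map(0,\FFF_{J_M^\pm(-)})=0$. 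Hence $\map(\FFF_{J_M^\pm(-)},\FFF_{J_M^\pm(-)})$ is acyclic, and therefore the $0$-cocycle $\id$ (identified with its image in the mapping complex via \eqref{eqn:hom-vs-map}) is a coboundary. This produces the desired $(-1)$-cochain $\Lambda_\pm$ with $\delta\Lambda_\pm=\id$.

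The delicate step is the converse direction: one must upgrade a pointwise family of contractions of the cochain complexes $\FFF_{J_M^\pm(K)}$ into a single coherent $(-1)$-cochain $\Lambda_\pm$ in the mapping complex, which carries all higher homotopies along the directed system $\cc$. The homotopy-invariance of $\map$ is precisely what makes this lift automatic; it is also the reason the plain enriched hom $\hom$ of \eqref{eqn:enriched-hom} would not suffice here, since it generally fails to preserve weak equivalences.
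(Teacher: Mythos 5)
Your proof is correct and follows essentially the same route as the paper: for the forward direction you project to the $(0,K)$-component to extract a contracting homotopy of each $\FFF_{J_M^\pm(K)}$, and for the converse you use the homotopy-invariance of $\map$ (Remark \ref{rem:map-preserves-we}) applied to $\FFF_{J_M^\pm(-)}\to 0$ to deduce acyclicity of the mapping complex, hence exactness of the $0$-cocycle $\id$. The only difference is cosmetic — you spell out explicitly why $\pr_{0,K}$ is a cochain map, which the paper leaves implicit — but the decomposition, the key lemma, and the logical structure coincide.
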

\begin{proof}
Given a retarded/advanced Green's homotopy $\Lambda_\pm$, 
for each compact subset $K_0 \subseteq M$, one has 
\begin{flalign}
\partial(\pr_{0,K_0} \Lambda_{\pm}) = \pr_{0,K_0}(\delta \Lambda_{\pm}) = \id \in [\FFF_{J_M^\pm(K_0)},\FFF_{J_M^\pm(K_0)}]^0 \quad.
\end{flalign}
This defines a contracting homotopy, hence the cochain complex $\FFF_{J_M^\pm(K_0)} \in \Ch_\bbK$ is acyclic. 
Vice versa, when the cochain complexes
$\FFF_{J_M^\pm(K)} \in \Ch_\bbK$ are acyclic for all 
compact subsets $K \subseteq M$, it follows that the natural 
transformation $\FFF_{J_M^\pm(-)} \to 0$ in $\Ch_\bbK^\cc$ 
is a weak equivalence. Since forming mapping complexes
$\map(-,\FFF_{J_M^\pm(-)}): (\Ch_\bbK^\cc)^\op \to \Ch_\bbK$ 
preserves weak equivalences, we obtain the quasi-isomorphism 
\begin{flalign}\label{eqn:map-acyclic}
0\cong  \map(0,\FFF_{J_M^\pm(-)}) \overset{\sim}{\longrightarrow} \map(\FFF_{J_M^\pm(-)},\FFF_{J_M^\pm(-)}) 
\end{flalign}
in $\Ch_\bbK$, meaning that the mapping complex $\map(\FFF_{J_M^\pm(-)},\FFF_{J_M^\pm(-)})$ is acyclic. 
It follows that the $0$-cocycles
$\id \in \map(\FFF_{J_M^\pm(-)},\FFF_{J_M^\pm(-)})^0$ 
must be exact, i.e.\ there exist $(-1)$-cochains 
$\Lambda_\pm \in \map(\FFF_{J_M^\pm(-)},\FFF_{J_M^\pm(-)})^{-1}$ 
such that $\delta \Lambda_\pm = \id$. These provide 
retarded and advanced Green's homotopies according to Definition \ref{defi:Green-hyp-cpx}.
\end{proof}

For a complex of linear differential operators $(F,Q)$ on $M$,
Definition \ref{defi:Green-hyp-cpx} actually provides a 
(possibly empty) set $\G\H_\pm$ of retarded/advanced 
Green's homotopies, given by the pullback 
\begin{flalign}
\xymatrix{
\G\H_\pm \ar@{-->}[r] \ar@{-->}[d] & \{\ast\} \ar[d]^-{\id} \\
\map \big( \FFF_{J_M^\pm(-)},\FFF_{J_M^\pm(-)} \big)^{-1} \ar[r]_-{\delta} & Z^0 \Big( \map \big( \FFF_{J_M^\pm(-)},\FFF_{J_M^\pm(-)} \big) \Big)
}
\end{flalign}
in $\Set$, where $\{\ast\} \in \Set$ 
denotes the singleton, the vertical arrow 
denotes the map that assigns the $0$-cocycle 
$\id \in Z^0(\map(\FFF_{J_M^\pm(-)},\FFF_{J_M^\pm(-)}))$ 
in the mapping complex and the horizontal arrow is given by the 
mapping complex differential $\delta$ acting on $(-1)$-cochains. 
The set $\G\H_\pm$ provides however only an insufficient picture
on the moduli problem of retarded/advanced Green's homotopies 
as it ignores homotopical phenomena. For instance, two non-identical retarded/advanced
Green's homotopies $\Lambda_{\pm}$ and $\Lambda_{\pm}^\prime$ 
that differ by an exact term $\Lambda_{\pm} - \Lambda_{\pm}^\prime = \delta\lambda_\pm$, for
some $\lambda_\pm\in \map(\FFF_{J_M^\pm(-)},\FFF_{J_M^\pm(-)})^{-2}$,
should be considered as ``being the same'', as they define the same
cohomology class $[\Lambda_{\pm}]=[\Lambda_{\pm}^\prime]$,
but they describe different elements in the set $\G\H_\pm$.
This issue can be solved by upgrading the set $\G\H_\pm$ to a space 
(Kan complex) through the following standard construction. 
Denoting the normalized chains functor 
by $N: \sSet \to \Ch_\bbK$ 
and the simplicial $n$-simplex 
by $\Delta^n \in \sSet$, we define the simplicial set 
$\und{\G\H}_\pm \in \sSet$ as the pullback 
\begin{flalign}\label{eqn:GHpm}
\xymatrix{
\und{\G\H}_\pm \ar@{-->}[r] \ar@{-->}[d] & \{\ast\} \ar[d]^-{\id} \\
\big[ N(\Delta^\bullet), \map \big( \FFF_{J_M^\pm(-)}, \FFF_{J_M^\pm(-)} \big) \big]^{-1} \ar[r]_-{\partial} & Z^0 \big( \big[ N(\Delta^\bullet), \map \big( \FFF_{J_M^\pm(-)},\FFF_{J_M^\pm(-)} \big) \big] \big)
}
\end{flalign}
in $\sSet$, where $\{\ast\} \in \sSet$ 
denotes the (constant) simplicial set with only one point, 
the vertical arrow denotes the simplicial map 
that assigns the $0$-cocycle 
$\id \in Z^0([N(\Delta^\bullet),\map(\FFF_{J_M^\pm(-)},\FFF_{J_M^\pm(-)})])$ 
that takes the constant value $\id$ in the mapping complex 
and the horizontal arrow is given by the internal hom differential 
$\partial$ acting on $(-1)$-cochains. 
The set of $0$-simplices $(\und{\G\H}_\pm)_0 = \G\H_\pm$
coincides with the set of retarded/advanced Green's homotopies
and the higher simplices encode the desired homotopical 
phenomena mentioned above. One realizes that, when non-empty, the simplicial set 
$\und{\G\H}_\pm$ is affine over the simplicial vector space 
\begin{flalign}\label{eqn:GHpm-vector}
\und{GH}_\pm := Z^{-1} \big( \big[ N(\Delta^\bullet), \map \big( \FFF_{J_M^\pm(-)}, \FFF_{J_M^\pm(-)} \big) \big] \big) \in \sVec_\bbK \quad, 
\end{flalign} 
with the affine action 
\begin{flalign}
\und{\G\H}_\pm \times \und{GH}_\pm \longrightarrow \und{\G\H}_\pm
\end{flalign}
in $\sSet$ that sends $n$-simplices 
$\rho_{\pm} \in (\und{\G\H}_\pm)_n$ and 
$\eta_\pm \in (\und{GH}_\pm)_n$ to their sum 
$\rho_\pm + \eta_\pm \in (\und{\G\H}_\pm)_n$. 
As a consequence of its affine structure, 
the simplicial set of retarded/advanced Green's homotopies 
$\und{\G\H}_\pm \in \sSet$ is a Kan complex, 
see e.g.\ \cite[Lem.~8.2.8]{Weibel}.  
These preliminaries allow us to prove a uniqueness result 
for retarded and advanced Green's homotopies. 

\begin{propo}\label{propo:Greenhyp-unique}
The Kan complex of retarded/advanced Green's homotopies 
$\und{\G\H}_\pm \in \sSet$ from \eqref{eqn:GHpm} is either empty 
or contractible. In particular, when it exists, a retarded/advanced Green's homotopy 
is unique up to the contractible space of choices $\und{\G\H}_\pm$. 
\end{propo}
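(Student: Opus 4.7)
The plan is to reduce contractibility of $\und{\G\H}_\pm$, when non-empty, to acyclicity of the mapping complex $M := \map(\FFF_{J_M^\pm(-)},\FFF_{J_M^\pm(-)})$, and then to exploit the Dold-Kan correspondence combined with the affine structure discussed just before the proposition.

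First, I would argue that non-emptiness of $\und{\G\H}_\pm$ forces $M$ to be acyclic. Indeed, if $\Lambda_\pm \in \G\H_\pm$ is any retarded/advanced Green's homotopy, then for each compact $K_0 \subseteq M$ the $0$-cochain $\pr_{0,K_0}\Lambda_\pm \in [\FFF_{J_M^\pm(K_0)},\FFF_{J_M^\pm(K_0)}]^{-1}$ satisfies $\partial(\pr_{0,K_0}\Lambda_\pm) = \id$, i.e.\ it is a contracting homotopy. Hence the functor $\FFF_{J_M^\pm(-)}$ is naturally weakly equivalent to the zero functor in $\Ch_\bbK^\cc$. Since mapping complexes are homotopical (Remark \ref{rem:map-preserves-we}), this yields a quasi-isomorphism $0 \cong \map(0,\FFF_{J_M^\pm(-)}) \overset{\sim}{\to} M$, proving that $M$ is acyclic (this is exactly the argument leading to \eqref{eqn:map-acyclic} in the proof of Proposition \ref{propo:Greenhyp-acyclic}).

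Next, I would use the affine structure recalled just before the statement. Since $\und{\G\H}_\pm$ is non-empty, the simply transitive action of $\und{GH}_\pm$ on $\und{\G\H}_\pm$ exhibits $\und{\G\H}_\pm$ as a torsor over the simplicial vector space $\und{GH}_\pm$ in $\sSet$, and choosing the basepoint $\Lambda_\pm$ determines an isomorphism of Kan complexes $\und{GH}_\pm \cong \und{\G\H}_\pm$. It therefore suffices to prove that $\und{GH}_\pm$ is contractible.

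Finally, I would invoke the Dold-Kan correspondence to compute the homotopy groups of $\und{GH}_\pm$ in terms of the cohomology of $M$. Since $N(\Delta^n) \in \Ch_\bbK$ is (fibrant and) cofibrant and $M$ is acyclic, the internal hom $[N(\Delta^n),M]$ is acyclic for every $n \geq 0$. Under Dold-Kan, the simplicial vector space $n \mapsto Z^{-1}([N(\Delta^n),M])$ has homotopy groups $\pi_k(\und{GH}_\pm)$ expressible through the cohomology of (appropriate shifts of) $M$, all of which vanish. Consequently $\und{GH}_\pm$ is contractible, which via the torsor isomorphism proves the claim. The main technical point of this plan is the Dold-Kan step, which translates acyclicity of the mapping complex into contractibility of its associated simplicial set of $(-1)$-cocycles; everything else is a direct unwinding of the structure already set up in Section \ref{subsec:ret-adv-Green-homotopies}.
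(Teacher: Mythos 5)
Your proposal is correct and follows essentially the same route as the paper's own proof: acyclicity of the mapping complex (the paper cites Proposition~\ref{propo:Greenhyp-acyclic} rather than re-deriving it as you do), the affine/torsor structure of $\und{\G\H}_\pm$ over $\und{GH}_\pm$ reducing to contractibility of the latter, and Dold--Kan to translate acyclicity of $\map(\FFF_{J_M^\pm(-)},\FFF_{J_M^\pm(-)})$ into contractibility of $\und{GH}_\pm$. The only cosmetic difference is in the final step: the paper identifies $\und{GH}_\pm \cong \Gamma\big(\tau^{\leq 0}(\map(\FFF_{J_M^\pm(-)},\FFF_{J_M^\pm(-)})[-1])\big)$ and appeals to $\tau^{\leq 0}$ and $\Gamma$ preserving weak equivalences, whereas you compute homotopy groups via Dold--Kan directly; these are equivalent formulations of the same argument.
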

\begin{proof}
Suppose that $\und{\G\H}_\pm \in \sSet$ is non-empty.
We already observed that the Kan complex of retarded/advanced Green's 
homotopies $\und{\G\H}_\pm \in \sSet$ from \eqref{eqn:GHpm} is affine 
over the simplicial vector space $\und{GH}_\pm \in \sVec_\bbK$ 
from \eqref{eqn:GHpm-vector}. By definition of the latter, it follows
that $\und{GH}_\pm \cong \Gamma(\tau^{\leq0}(\map(\FFF_{J_M^\pm(-)},\FFF_{J_M^\pm(-)})[-1])) \in \sVec_\bbK$ 
is isomorphic to the simplicial vector space assigned by 
the Dold-Kan correspondence  $\Gamma: \Ch_\bbK^{\leq0} \to \sVec_\bbK$ 
(see \cite[Sec.~4.1]{SchwedeShipley} for a concise overview)
to the good truncation $\tau^{\leq0}: \Ch_\bbK \to \Ch_\bbK^{\leq0}$ 
of the $(-1)$-shifted mapping complex 
$\map(\FFF_{J_M^\pm(-)},\FFF_{J_M^\pm(-)})[-1] \in \Ch_\bbK$. 
As $\und{\G\H}_\pm$ is by hypothesis non-empty, there exists a retarded/advanced 
Green's homotopy $\Lambda_\pm$, hence it follows from Proposition \ref{propo:Greenhyp-acyclic}
that the mapping complex $\map(\FFF_{J_M^\pm(-)},\FFF_{J_M^\pm(-)}) \in \Ch_\bbK$ is acyclic. 
Since both $\tau^{\leq0}$ and $\Gamma$ preserve weak equivalences, 
it follows that $\und{GH}_\pm \in \sVec_\bbK$ is contractible, 
hence $\und{\G\H}_\pm \in \sSet$ is contractible too. 
\end{proof}

\begin{rem}
We would like to note that all definitions and results stated so far in this section admit 
a straightforward generalization to the category $\Loc_m$ of $m$-dimensional 
oriented and time-oriented globally hyperbolic Lorentzian manifolds $M$ 
with morphisms $f: M \to M^\prime$ given by orientation 
and time-orientation preserving isometric embeddings whose image 
$f(M) \subseteq M^\prime$ is open and causally convex. 
Instead of a single complex of linear differential operators $(F,Q)$ on $M$, 
consider a family $(F(M),Q_M)_{M\in\Loc_m}^{}$ of complexes of linear differential operators
that is natural with respect to $M \in \Loc_m$. Then one can upgrade 
the functors $\FFF_{J_M^\pm} \in \Ch_\bbK^\cc$ 
to let $M \in \Loc_m$ vary. Explicitly, introduce 
the category $\LocC_m$, whose objects $(M,K)$ consist of 
an object $M \in \Loc_m$ and of a compact subset $K \subseteq M$ 
and whose morphisms $f: (M,K) \to (M^\prime,K^\prime)$ 
consist of a morphism $f: M \to M^\prime$ in $\Loc_m$ such that 
$K^\prime \subseteq f(K)$. Define the functor 
$\FFF_{J^\pm}: \LocC_m^\op \to \Ch_\bbK$ that assigns to an object 
$(M,K) \in \LocC_m^\op$ the cochain complex 
$\FFF_{J^\pm_M(K)} \in \Ch_\bbK$ of sections of the graded vector 
bundle $F(M) \to M$ with support contained in $J_M^\pm(K)$
and to a morphism $f: (M,K) \to (M^\prime,K^\prime)$ in $\LocC_m$ 
the pullback $\FFF_{J^\pm}(f): \FFF_{J_{M^\prime}^\pm(K^\prime)} \to \FFF_{J_{M}^\pm(K)}$ 
in $\Ch_\bbK$ of sections along $f: M \to M^\prime$ in $\Loc_m$. 
Replacing $(F,Q)$ with $(F(M),Q_M)_{M\in\Loc_m}^{}$, $\cc$ with $\LocC_m^\op$ 
and $\FFF_{J_M^\pm} \in \Ch_\bbK^{\cc}$ with 
$\FFF_{J^\pm} \in \Ch_\bbK^{\LocC_m^\op}$ in Definition 
\ref{defi:Green-hyp-cpx}, one obtains a notion of
{\it homotopy coherent $\Loc_m$-natural} retarded and advanced Green's homotopies.
With the same substitutions, the analogs of Propositions 
\ref{propo:Greenhyp-acyclic} and \ref{propo:Greenhyp-unique} hold true. In fact, these 
proofs rely only on the structure of the mapping complex and, 
in particular, on the fact that it preserves weak equivalences, 
but they are not sensitive to the shape of the category indexing the 
$\Ch_\bbK$-valued functors that appear. 
\end{rem}

\subsection{\label{subsec:ret-adv-map}Retarded-minus-advanced quasi-isomorphism}
Each Green hyperbolic linear differential operator $P$ has an associated 
exact sequence \eqref{eqn:PGP-exact-seq} formed by $P$ itself 
and its retarded-minus-advanced propagator $G := G_+ - G_-$, 
which involves sections both with compact and with spacelike compact supports, 
see \cite{BeniniDappiaggi, Bar}. 
The goal of this section is to upgrade
the exact sequence \eqref{eqn:PGP-exact-seq} 
to the broader context of Green hyperbolic complexes. 
More precisely, Theorem \ref{th:ret-minus-adv-qiso} below 
will show that the retarded-minus-advanced cochain map 
$\Lambda_{\hh}: \FFF_{\hh\cc}[1] \to \FFF_{\hh\sc}$ 
$\in \Ch_\bbK$, see Definition \ref{defi:ret-minus-adv}, 
is a quasi-isomorphism from the ($1$-shift of the) complex of 
sections with compact support 
to the complex of sections with spacelike compact support. 
Remark \ref{rem:exact=qiso}, which will appear later in Section \ref{sec:witness},
will clarify how the well-known exact sequence \eqref{eqn:PGP-exact-seq} 
is just a special case of Theorem \ref{th:ret-minus-adv-qiso}. 
\sk

In preparation for the next definition of retarded-minus-advanced 
cochain map, let us introduce the following notation. 
For an inclusion $C_1 \subseteq C_2$ of closed subsets of $M$, 
we denote by 
\begin{flalign}
j_{C_1}^{C_2}: \FFF_{C_1} \overset{\subseteq}{\longrightarrow} \FFF_{C_2}
\end{flalign}
in $\Ch_\bbK$ the associated inclusion of the complexes 
of sections with support in $C_i$, $i=1,2$. 
In particular, associated with 
a natural transformation $\kappa: F_1 \to F_2$ 
between functors $F_i: \cc \to \cl$, $i=1,2$, sending compact subsets 
to closed subsets of $M$, one has a natural transformation 
\begin{flalign}\label{eqn:j}
j_{F_1(-)}^{F_2(-)}: \FFF_{F_1(-)} \longrightarrow \FFF_{F_2(-)}
\end{flalign}
in $\Ch_\bbK^\cc$. 
We shall frequently encounter instances of this natural 
transformation associated with the natural inclusions 
$K \subseteq J_M^\pm(K) \subseteq J_M(K) \subseteq M$, 
for all compact subsets $K \subseteq M$, which follow from 
the definition of causal future/past. 
\sk

The next definition involves the homotopy colimit dg-functor 
$\hocolim: \Ch_\bbK^\cc \to \Ch_\bbK$ from Section \ref{subsec:FunCh}. 
\begin{defi}\label{defi:ret-minus-adv}
Let $(F,Q)$ be a Green hyperbolic complex on $M$ and consider 
a choice of retarded and advanced Green homotopies $\Lambda_\pm$. 
The associated {\it retarded-minus-advanced cochain map} 
\begin{subequations}\label{eqn:ret-minus-adv}
\begin{flalign}
\Lambda_{\hh} := \hocolim(\Lambda): \FFF_{\hh\cc}[1] \longrightarrow \FFF_{\hh\sc}
\end{flalign}
in $\Ch_\bbK$ is defined evaluating the dg-functor $\hocolim$ 
from \eqref{eqn:hocolim-dg-fun} on the $(-1)$-cocycle 
\begin{flalign}
\Lambda := j_{J^+_M(-)}^{J_M(-)} \circ \Lambda_+ \circ j_{(-)}^{J^+_M(-)} - j_{J^-_M(-)}^{J_M(-)} \circ \Lambda_- \circ j_{(-)}^{J^-_M(-)} \in Z^{-1} \big(\map \big( \FFF_{(-)},\FFF_{J_M(-)} \big)\big)
\end{flalign}
in the mapping complex. (Note that $\delta \Lambda = 0$ follows from 
$\delta \Lambda_\pm = \id$.) 
\end{subequations}
\end{defi}

\begin{rem} 
The cochain complex $\FFF_{(\hh)\sc} \in \Ch_\bbK$ is defined 
in \eqref{eqn:FFF(-)colimit} as the (homotopy) colimit 
of the functor $\FFF_{(-)} \in \Ch_\bbK^{\sc}$ 
over the directed system $\scrD = \sc$ of spacelike compact 
closed subsets of $M$. Note, however, that the functor 
$J_M: \cc \to \sc$ is (homotopy) final, 
hence $\FFF_{(\hh)\sc} \in \Ch_\bbK$ 
is (quasi-)isomorphic to the (homotopy) colimit of the functor 
$\FFF_{J_M(-)} = \FFF_{(-)} \circ J_M \in \Ch_\bbK^\cc$. 
Bearing this fact in mind, we shall always implicitly identify 
$\FFF_{(\hh)\sc} \in \Ch_\bbK$ with the (homotopy) 
colimit of $\FFF_{J_M(-)} \in \Ch_\bbK^\cc$. For instance, 
this identification is implicit in Definition \ref{defi:ret-minus-adv}. 
\end{rem}

The main result of this section is the following theorem.
\begin{theo}\label{th:ret-minus-adv-qiso}
Let $(F,Q)$ be a Green hyperbolic complex on $M$. 
Then the retarded-minus-advanced cochain map
$\Lambda_{\hh}: \FFF_{\hh\cc}[1] \to \FFF_{\hh\sc}$ in $\Ch_\bbK$ 
from Definition \ref{defi:ret-minus-adv} is a quasi-isomorphism. 
\end{theo}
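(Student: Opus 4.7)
The plan is to reduce Theorem \ref{th:ret-minus-adv-qiso} to a cocycle-level computation made possible by the acyclicity result of Proposition \ref{propo:Greenhyp-acyclic} combined with the filteredness of the directed systems $\cc$ and $\sc$. By Remark \ref{rem:hocolim-colim}, the canonical comparison maps $\FFF_{\hh\cc} \to \FFF_\cc$ and $\FFF_{\hh\sc} \to \FFF_\sc$ are quasi-isomorphisms, so on cohomology it suffices to prove that the induced map $\FFF_\cc[1] \to \FFF_\sc$ (denoted $\Lambda_\hh$ by abuse) is a quasi-isomorphism. The geometric input is a smooth partition of unity adapted to the cover $J_M(K) = J^+_M(K) \cup J^-_M(K)$: given any section $\chi$ with $\supp \chi \subseteq J_M(K)$, after enlarging $K$ slightly to some $K^\prime \supseteq K$ (harmless in the filtered colimit), one obtains a splitting $\chi = \chi_+ - \chi_-$ with $\supp \chi_\pm \subseteq J^\pm_M(K^\prime)$.

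For surjectivity of $[\Lambda_\hh]$ on cohomology, let $\psi \in \FFF_{J_M(K)}$ represent a class in $H^\bullet(\FFF_\sc)$, split $\psi = \psi_+ - \psi_-$ as above, and note that $Q\psi = 0$ forces $Q\psi_+ = Q\psi_-$. Since differential operators do not enlarge support, $\varphi := Q\psi_+$ is supported in the compact set $J^+_M(K^\prime) \cap J^-_M(K^\prime)$, hence is a cocycle in $\FFF_\cc[1]$. The degree-zero component of $\delta \Lambda_\pm = \id$ reads $Q \Lambda_\pm + \Lambda_\pm Q = \id$ (cf.\ Example \ref{ex:Lambdapm=Gpm}), and a direct computation gives
\begin{flalign*}
\Lambda_\hh \varphi \;=\; \Lambda_+ Q \psi_+ - \Lambda_- Q \psi_- \;=\; (\psi_+ - Q \Lambda_+ \psi_+) - (\psi_- - Q \Lambda_- \psi_-) \;=\; \psi - Q(\Lambda_+ \psi_+ - \Lambda_- \psi_-),
\end{flalign*}
so $[\Lambda_\hh \varphi] = [\psi]$. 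Injectivity is analogous: if $\Lambda_\hh \varphi = Q \eta$ for a cocycle $\varphi$ and $\eta = \eta_+ - \eta_-$ split as above, then rearranging gives $\Lambda_+ \varphi - Q \eta_+ = \Lambda_- \varphi - Q \eta_-$, a section whose support lies in both $J^+_M(K^\prime)$ and $J^-_M(K^\prime)$, hence in $\FFF_\cc$. Applying $Q$ to this common value and using $Q \Lambda_+ \varphi = \varphi - \Lambda_+ Q \varphi = \varphi$ produces a primitive of $\varphi$ in $\FFF_\cc$, proving $[\varphi] = 0$ in $H^\bullet(\FFF_\cc[1])$. This recovers, in the one-operator case of Example \ref{ex:single-diff-op}, the classical exact sequence \eqref{eqn:PGP-exact-seq}.

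The main subtlety, and the principal obstacle, lies in interfacing this cocycle-level computation with the full mapping-complex/hocolim framework: a general Green's homotopy $\Lambda_\pm$ is not a strict natural homotopy but an element of $\map(\FFF_{J^\pm_M(-)}, \FFF_{J^\pm_M(-)})^{-1}$ carrying higher coherence data $\pr_{q, \und{K}} \Lambda_\pm$ for $q \geq 1$, which feed into $\Lambda_\hh = \hocolim(\Lambda)$ through the simplicial replacement underlying $\hocolim$. The filteredness of $\cc$ ensures that these higher components do not affect cohomology (since $\hocolim \simeq \colim$ on filtered diagrams), but to make this rigorous one can package the argument as a homotopy (co)fiber triangle: the Mayer-Vietoris splitting assembles cofinally into a map of functors on $\cc$ whose hocolim has an acyclic middle term (by Proposition \ref{propo:Greenhyp-acyclic} together with the homotopy invariance of $\hocolim$), and whose connecting morphism is identified with $\Lambda_\hh$ through the cocycle computation above.
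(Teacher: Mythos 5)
Your overall strategy is genuinely different from the paper's: you compute directly on cohomology (proving $H^\bullet(\Lambda_\hh)$ is bijective by cocycle-level surjectivity and injectivity arguments), whereas the paper's proof constructs an explicit quasi-inverse $\Theta$ together with homotopies $\Xi$, $\Upsilon$ witnessing $\Theta\circ\Lambda_\hh\sim\id$ and $\Lambda_\hh\circ\Theta\sim\id$, all at the level of homotopy colimits. Your cocycle computations (splitting $\psi=\psi_+-\psi_-$, setting $\varphi:=Q\psi_+$, and using $Q\Lambda_\pm+\Lambda_\pm Q=\id$ to exhibit $\Lambda\varphi\sim\psi$; and the dual argument for injectivity) are essentially correct and do recover the classical argument behind the exact sequence \eqref{eqn:PGP-exact-seq}.

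There is, however, a genuine gap that you acknowledge but do not close. The phrase ``the induced map $\FFF_\cc[1]\to\FFF_\sc$'' presupposes that $\Lambda_\hh$ descends through the comparison maps $\FFF_{\hh\cc}[1]\to\FFF_\cc[1]$ and $\FFF_{\hh\sc}\to\FFF_\sc$. In general it does not: for an arbitrary Green hyperbolic complex, $\Lambda_\pm$ carry non-trivial higher components $\pr_{q,\und{K}}\Lambda_\pm$ for $q\geq 1$, and the formula for $\hocolim(\Lambda)$ mixes these into the image. A commutative square like \eqref{eqn:ret-minus-adv-witness} is available only when $\Lambda_\pm$ are strictly natural, which is the Green's witness situation of Remark \ref{rem:Lambda}, not the general case assumed in Theorem \ref{th:ret-minus-adv-qiso}. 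Your last paragraph gestures at a ``homotopy (co)fiber triangle'' and a ``Mayer–Vietoris splitting assembling cofinally'', but this is not a proof: one would need to (a) show that every class in $H^\bullet(\FFF_{\hh\cc}[1])$ admits a representative of the form $\iota_{0,K_0}\varphi$ so that $\Lambda_\hh$ acts on it by $\pr_{0,K_0}\Lambda$ alone, and (b) verify that the resulting cohomology map is well defined independent of the choice of $K_0$, which requires unwinding the higher components of $\delta\Lambda_\pm=\id$ (e.g.\ the $q=1$ relation controls the failure of strict naturality). None of this appears in your write-up. The paper's construction of $\Theta$, $\Xi$, $\Upsilon$ sidesteps exactly this interfacing problem by working uniformly in the mapping-complex/hocolim dg-framework, which is the technical point of the theorem.

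Two smaller imprecisions: a partition of unity subordinate to the closed cover $J_M(K)=J^+_M(K)\cup J^-_M(K)$ is not a standard object; the paper instead fixes once and for all the open cover $\{I^+_M(\Sigma_-),I^-_M(\Sigma_+)\}$ of $M$ together with the order-preserving map $\Sigma:\cc\to\cc$ satisfying \eqref{eqn:sigma}, which gives a $K$-independent and genuinely smooth splitting. Relatedly, your splitting is chosen per class, whereas the paper's fixed choice is what makes the homotopies $\Xi$ and $\Upsilon$ well-defined natural objects rather than class-by-class choices.
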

\begin{proof}
The proof is rather lengthy and it will be split into four steps. The first step develops 
a geometric construction, which plays a crucial role 
in the next steps, the second step defines a candidate 
quasi-inverse $\Theta$ to $\Lambda_\hh$, the third step 
constructs a homotopy $\Xi$ witnessing 
$\Theta \circ \Lambda_\hh \sim \id$ and the fourth step 
constructs a homotopy $\Upsilon$ witnessing 
$\Lambda_\hh \circ \Theta \sim \id$. 

\paragraph{Geometric construction:}
Choose two spacelike Cauchy surfaces 
$\Sigma_\pm \subseteq M$ such that $\Sigma_+ \subseteq I_M^+(\Sigma_-)$ 
lies in the chronological future of $\Sigma_-$ 
and consider an order preserving map 
\begin{flalign}\label{eqn:geom-constr}
\Sigma: \cc \longrightarrow \cc
\end{flalign}
on the directed set $\cc$ of compact subsets in $M$ 
such that, for each compact subset $K\subseteq M$,
\begin{flalign}\label{eqn:sigma}
J_M^\pm(\Sigma_\mp) \cap J_M^\mp(K) \subseteq \Sigma(K) \quad. \tag{$\Sigma$}
\end{flalign}
Note that property~\eqref{eqn:sigma} implies in particular that $K\subseteq \Sigma(K)$,
for each compact subset $K\subseteq M$.
A (minimal) concrete example for such $\Sigma: \cc \to \cc$ 
is given by setting, for each compact subset $K\subseteq M$,
$\Sigma(K) := (J_M^+(\Sigma_-) \cap J_M^-(K)) \cup (J_M^-(\Sigma_+) \cap J_M^+(K))$.

\paragraph{Quasi-inverse $\Theta$ of $\Lambda_\hh$:}
Choosing in addition to the data from the previous paragraph
a partition of unity $\{\chi_+,\chi_-\}$ subordinate to the open cover 
$\{I_M^+(\Sigma_-),I_M^-(\Sigma_+)\}$ of $M$, 
one constructs a candidate quasi-inverse $\Theta$ to the 
retarded-minus-advanced cochain map $\Lambda_\hh$. 
Explicitly, the cochain map 
\begin{subequations}\label{eqn:Theta}
\begin{flalign}
\Theta: \FFF_{\hh\sc} \longrightarrow \FFF_{\hh\cc}[1]
\end{flalign}
in $\Ch_\bbK$ will be determined uniquely by 
\begin{flalign}
\hocolim \Big( j_{(-)}^{J_M(-)} \Big) \circ \Theta := \pm \partial \theta_\pm \in [\FFF_{\hh\sc},\FFF_{\hh\sc}]^1 \quad, 
\end{flalign}
\end{subequations}
with $\theta_\pm$ defined in \eqref{eqn:theta-pm} below. 
Uniqueness of $\Theta$, provided it exists, is a consequence
of $\hocolim\big(j_{(-)}^{J_M(-)}\big): \FFF_{\hh\cc} \to \FFF_{\hh\sc}$ 
in $\Ch_\bbK$ being degree-wise injective. Recalling the dg-adjunction $\hocolim \dashv \Delta$ 
from Proposition \ref{propo:dg-adjunction}, the $0$-cochain
\begin{subequations}\label{eqn:theta-pm}
\begin{flalign}
\theta_\pm \in \big[ \FFF_{\hh\sc},\FFF_{\hh\sc} \big]^0 \cong \map \big( \FFF_{J_M(-)},\Delta \FFF_{\hh\sc} \big)^0 
\end{flalign}
in the mapping complex is defined by assigning its components 
$\pr_{q,\und{K}} \theta_\pm \in [\FFF_{J_M(K_0)},\FFF_{\hh\sc}]^{-q}$,
for all $q \geq 0$ and $\und{K}: [q] \to \cc$, according to 
\begin{flalign}
(\pr_{q,\und{K}} \theta_\pm)^n \varphi := \iota_{q,\Sigma(\und{K})} (\chi_\pm \varphi) \in \FFF_{\hh\sc}^{n-q} \quad,
\end{flalign}
\end{subequations}
for all $n \in \bbZ$ and $\varphi \in \FFF_{J_M(K_0)}^n$. 
Note that the equation displayed above involves the order 
preserving map $\Sigma$ from \eqref{eqn:geom-constr} 
(regarded here as a functor). 
In particular, property~\eqref{eqn:sigma} entails that $K_0\subseteq \Sigma(K_0)$ and hence
that $\chi_\pm \varphi$ is supported in $J_M(K_0) \subseteq J_M(\Sigma(K_0))$. 
Recalling from Section \ref{subsec:FunCh} the differential 
$\delta = \delta_{\hh} + \delta_{\vv}$ on the mapping complex 
and the differential $\dd = - \dd_{\hh} + \dd_{\vv}$ 
on the homotopy colimit, direct inspection shows that 
\begin{flalign}
\big( \pr_{q,\und{K}} (\delta \theta_\pm) \big)^n \varphi = \iota_{q,\Sigma(\und{K})} \big( Q (\chi_\pm \varphi) - \chi_\pm Q \varphi \big) \quad, 
\end{flalign}
for all $q \geq 0$, $\und{K}: [q] \to \cc$, $n \in \bbZ$ and 
$\varphi \in \FFF_{J_M(K_0)}^n$. Because $Q$ consists of 
differential operators and $\chi_+ + \chi_- = 1$, it follows that 
$Q(\chi_+ \varphi) - \chi_+ Q \varphi = -(Q(\chi_- \varphi) - \chi_- Q \varphi))$, 
hence the latter is a section supported in 
$J_M^+(\Sigma_-) \cap J_M^-(\Sigma_+) \cap J_M(K_0) \subseteq \Sigma(K_0)$, 
where the inclusion follows from property~\eqref{eqn:sigma}. 
In particular, this proves that 
$\partial \theta_+ = - \partial \theta_- \in [\FFF_{\hh\sc},\FFF_{\hh\sc}]^1$ 
factors through the inclusion $\hocolim(j_{(-)}^{J_M(-)})$, 
ensuring that $\Theta$ as defined by \eqref{eqn:Theta} exists. 

\begin{rem}
Let us emphasize that the construction of the cochain map 
$\Theta: \FFF_{\hh\sc} \to \FFF_{\hh\cc}[1]$ in $\Ch_\bbK$ 
makes sense regardless of $(F,Q)$ being Green hyperbolic 
and relies only on the causal structure of $M$ 
(through the choices of the two spacelike Cauchy surfaces 
$\Sigma_\pm$, of the order preserving map $\Sigma$ 
and of the partition of unity $\{\chi_+,\chi_-\}$). 
\end{rem}

\paragraph{Homotopy $\Xi$ witnessing $\Theta \circ \Lambda_\hh \sim \id$:} 
We shall now construct a homotopy $\Xi$ witnessing that 
$\Theta$ is a left quasi-inverse of $\Lambda_\hh$, 
i.e.\ $\partial \Xi = \id - \Theta \circ \Lambda_\hh$. 
Explicitly, pulling the shifts out of the internal hom displayed below 
(the corresponding isomorphism contributes a sign $(-1)^n$ 
in degree $n$) and then using again the 
dg-adjunction $\hocolim \dashv \Delta$ 
from Proposition \ref{propo:dg-adjunction}, the $(-1)$-cochain 
\begin{subequations}\label{eqn:Xi}
\begin{flalign}
\Xi \in \big[ \FFF_{\hh\cc}[1],\FFF_{\hh\cc}[1] \big]^{-1} \cong \map \big( \FFF_{(-)},\Delta \FFF_{\hh\cc} \big)^{-1}
\end{flalign}
in the mapping complex is defined by 
\begin{flalign}
\Xi := \xi_- \circ \Lambda_+ \circ j_{(-)}^{J_M^+(-)} + \xi_+ \circ \Lambda_- \circ j_{(-)}^{J_M^-(-)} + \xi \quad, 
\end{flalign}
\end{subequations}
where $\Lambda_\pm$ are the retarded and advanced Green's homotopies 
chosen in Definition \ref{defi:ret-minus-adv} 
and $\xi_\mp$ and $\xi$ are defined in \eqref{eqn:ximp} 
and respectively in \eqref{eqn:xi} below. The $0$-cochains
\begin{subequations}\label{eqn:ximp}
\begin{flalign}
\xi_\mp \in \map \big( \FFF_{J_M^\pm(-)},\Delta \FFF_{\hh\cc} \big)^0
\end{flalign}
in the mapping complex are defined by assigning the components 
$\pr_{q,\und{K}} \xi_\mp \in [\FFF_{J_M^\pm(K_0)},\FFF_{\hh\cc}]^{-q}$, 
for all $q \geq 0$ and $\und{K}: [q] \to \cc$, according to
\begin{flalign}
(\pr_{q,\und{K}} \xi_\mp)^n \varphi := \iota_{q,\Sigma(\und{K})} (\chi_\mp \varphi) \in \FFF_{\hh\cc}^{n-q} \quad, 
\end{flalign}
\end{subequations}
for all $n \in \bbZ$ and $\varphi \in \FFF_{J_M^\pm(K_0)}^n$. 
Note that the order preserving map $\Sigma$ from \eqref{eqn:geom-constr} 
enters this construction. In particular, $\chi_\mp \varphi$ is supported 
in $J_M^{\mp}(\Sigma_\pm) \cap J_M^\pm(K_0) \subseteq \Sigma(K_0)$ 
because of property~\eqref{eqn:sigma}. Direct inspection shows that 
\begin{flalign}\label{eqn:delta-xi-mp}
\mp \delta \xi_\mp = \Theta \circ j_{J_M^\pm(-)}^{J_M(-)} \in \map \big( \FFF_{J_M^\pm(-)},\Delta \FFF_{\hh\cc} \big)^1 \quad, 
\end{flalign}
with $\Theta$ from \eqref{eqn:Theta} regarded as a $1$-cocycle 
$\Theta \in \map(\FFF_{J_M(-)},\Delta \FFF_{\hh\cc})^1$ 
in the mapping complex. The $(-1)$-cochain
\begin{subequations}\label{eqn:xi}
\begin{flalign}
\xi \in \map \big( \FFF_{(-)},\Delta \FFF_{\hh\cc} \big)^{-1}
\end{flalign}
in the mapping complex is defined by assigning the components 
$\pr_{q,\und{K}} \xi \in [\FFF_{K_0},\FFF_{\hh\cc}]^{-q-1}$, 
for all $q \geq 0$ and $\und{K}: [q] \to \cc$, according to
\begin{flalign}
(\pr_{q,\und{K}} \xi)^n \varphi := \sum_{k=0}^q (-1)^{q-k} \iota_{q+1,\und{K}^{\leq k} \subseteq \Sigma(\und{K}^{\geq k})} \varphi \in \FFF_{\hh\cc}^{n-q-1} \quad,
\end{flalign}
\end{subequations}
for all $n \in \bbZ$ and $\varphi \in \FFF_{K_0}^n$. 
The last formula may be understood as the sum over 
all paths of length $q+1$ in the commutative diagram 
$\und{K} \subseteq \Sigma(\und{K})$ in $\cc$ 
arising from $\und{K}: [q] \to \cc$. 
The sign of the $k$-th summand is interpreted as arising 
from pulling the morphism $K_k \subseteq \Sigma(K_k)$ in $\cc$ 
through $\und{K}^{\geq k}: [q-k] \to \cc$ 
and recalling that each morphism in $\cc$ contributes $-1$ 
to the total degree in the homotopy colimit. 
Direct inspection shows that 
\begin{flalign}\label{eqn:delta-xi}
\delta \xi = \eta - \xi_- \circ j_{(-)}^{J_M^+(-)} - \xi_+ \circ j_{(-)}^{J_M^-(-)} \in \map \big( \FFF_{(-)},\Delta \FFF_{\hh\cc} \big)^{0} \quad,
\end{flalign}
where $\eta: \id \to \Delta \circ \hocolim$ denotes the unit 
of the dg-adjunction $\hocolim \dashv \Delta$. 
Combining \eqref{eqn:delta-xi-mp} and \eqref{eqn:delta-xi}, one finds 
\begin{flalign}
\partial \Xi = \id - \Theta \circ \Lambda_\hh \in \big[ \FFF_{\hh\cc}[1],\FFF_{\hh\cc}[1] \big]^{0} \quad. 
\end{flalign} 

\paragraph{Homotopy $\Upsilon$ witnessing ${\Lambda_\hh \circ \Theta} \sim \id$:}
We shall now construct a homotopy $\Upsilon \in [\FFF_{\hh\sc},\FFF_{\hh\sc}]^{-1}$
witnessing that $\Theta$ is a right quasi-inverse of $\Lambda_\hh$, 
i.e.\ $\partial \Upsilon = \id - \Lambda_\hh \circ \Theta$. 
Explicitly, consider the $(-1)$-cochain
\begin{flalign}\label{eqn:Upsilon}
\Upsilon := \Big( j_{J_M^+(-)}^{J_M(-)} \Big)_\hh \circ (\Lambda_+)_\hh \circ \upsilon_+ + \Big( j_{J_M^-(-)}^{J_M(-)} \Big)_\hh \circ (\Lambda_-)_\hh \circ \upsilon_- + \upsilon \in [\FFF_{\hh\sc},\FFF_{\hh\sc}]^{-1}
\end{flalign}
in the internal hom, where $(-)_\hh := \hocolim$ denotes the homotopy 
colimit dg-functor and $\upsilon_\pm$ and $\upsilon$ are defined 
in \eqref{eqn:upsilonpm} and respectively \eqref{eqn:upsilon} below. 
Recalling from \cite{Bar} the strictly past compact $\spc$ 
and strictly future compact $\sfc$ directed systems, the $0$-cochain 
\begin{subequations}\label{eqn:upsilonpm}
\begin{flalign}
\upsilon_\pm \in \big[ \FFF_{\hh\sc},\FFF_{\hh\substack{\spc\\ \sfc}} \big]^0
\end{flalign}
in the internal hom is uniquely determined by 
\begin{flalign}
\Big( j_{J_M^\pm(-)}^{J_M(-)} \Big)_\hh \circ \upsilon_\pm := \theta_\pm \in \big[ \FFF_{\hh\sc},\FFF_{\hh\sc} \big]^0 \quad, 
\end{flalign}
\end{subequations}
where $\theta_\pm$ is defined in \eqref{eqn:theta-pm}. 
Indeed, $\Big( j_{J_M^\pm(-)}^{J_M(-)} \Big)_\hh$ is degree-wise injective, 
hence $\upsilon_\pm$ is unique, provided it exists. 
Furthermore, property~\eqref{eqn:sigma} of the chosen order preserving map $\Sigma: \cc \to \cc$
entails that $\chi_\pm \varphi$ is supported in 
$J_M^\pm(\Sigma_\mp) \cap J_M(K_0) \subseteq J_M^\pm(\Sigma(K_0))$ 
when $\varphi$ is supported in $J_M(K_0)$, 
hence $\theta_\pm$ factors through 
$\Big( j_{J_M^\pm(-)}^{J_M(-)} \Big)_\hh$ and
$\upsilon_\pm$ as defined by \eqref{eqn:upsilonpm} exists. 
Factoring out $\Big( j_{J_M^\pm(-)}^{J_M(-)} \Big)_\hh$, 
it follows from \eqref{eqn:Theta} that 
\begin{flalign}\label{eqn:partial-upsilon-pm}
\pm \partial \upsilon_\pm = \Big( j_{(-)}^{J_M^\pm(-)} \Big)_\hh \circ \Theta \in \big[ \FFF_{\hh\sc},\FFF_{\hh\substack{\spc\\ \sfc}} \big]^1 \quad.
\end{flalign}
The $(-1)$-cochain 
\begin{flalign}\label{eqn:upsilon}
\upsilon \in \big[ \FFF_{\hh\sc},\FFF_{\hh\sc} \big]^{-1} \cong \map \big( \FFF_{J_M(-)},\Delta \FFF_{\hh\sc} \big)^{-1}
\end{flalign}
in the mapping complex is defined by the same formula as \eqref{eqn:xi} 
(however here $\varphi \in \FFF_{J_M(K_0)}^n$). 
In particular, in a similar fashion one finds 
\begin{flalign}\label{eqn:partial-upsilon}
\partial \upsilon = \id - \Big( j_{J_M^+(-)}^{J_M(-)} \Big)_\hh \circ \upsilon_+ - \Big( j_{J_M^-(-)}^{J_M(-)} \Big)_\hh \circ \upsilon_- \in \big[ \FFF_{\hh\sc},\FFF_{\hh\sc} \big]^{0} \quad.
\end{flalign}
Combining \eqref{eqn:partial-upsilon-pm} and \eqref{eqn:partial-upsilon}, 
one finds 
\begin{flalign}
\partial \Upsilon = \id - \Lambda_\hh \circ \Theta \quad. 
\end{flalign}
This completes the proof of Theorem \ref{th:ret-minus-adv-qiso}.
\end{proof}

\subsection{\label{subsec:Poisson}Poisson structures from differential pairings}
Working over $\bbK = \bbR$, this section defines a suitable concept of {\it differential pairing} 
for complexes of linear differential operators. 
This concept generalizes the usual fiber metrics on vector bundles 
and, at the same time, encodes the structure 
that makes Stokes' theorem available upon integration. 
In analogy with the classical situation of a formally self-adjoint 
Green hyperbolic operator, see e.g.\ \cite{BGP}, it shall be shown that 
endowing a complex of linear differential operators with such a 
differential pairing determines two types of Poisson structures. 
The first type relies crucially on $(F,Q)$ being a 
Green hyperbolic complex and consists of three Poisson structures 
$\tau_M^{\pm}, \tau_M: \FFF_{\hh\cc}[1]^{\wedge 2} \to \bbR$ 
in $\Ch_\bbR$, defined on the $1$-shift of the complex of sections 
with compact support and related to each other by suitable homotopies, 
reflecting the multiple equivalent ways 
\eqref{eqn:standardtaudifferent} of presenting 
the classical Poisson structure \eqref{eqn:standardtau}. 
The second type of Poisson structure
$\sigma_\Sigma: \FFF_{\hh\sc}^{\wedge 2} \to \bbR$ in $\Ch_\bbR$ 
relies on the choice of a spacelike Cauchy surface $\Sigma \subseteq M$ 
and is defined on the complex of sections with spacelike compact support. 
\begin{defi}\label{defi:diff-pairing}
Let $(F,Q)$ be a complex of linear differential operators on $M$. 
A {\it differential pairing} $(-,-)$ on $(F,Q)$ is a 
graded anti-symmetric linear bi-differential operator 
$(-,-): \FFF^{\otimes 2} \to \Omega^\bullet(M)[m-1]$ 
that is a cochain map with respect to the differentials
$Q_\otimes$ and $\dd_{\dR\,[m-1]}$, i.e.\
\begin{flalign}
(Q\varphi_1,\varphi_2) + (-1)^{\vert \varphi_1\vert}\,(\varphi_1,Q\varphi_2) 
= (-1)^{m-1}\,\dd_{\dR}(\varphi_1,\varphi_2)\quad,
\end{flalign}
for all homogeneous sections $\varphi_1,\varphi_2\in\FFF$.
\end{defi}

\begin{ex}\label{ex:CS-diff-pairing}
For $m=3$, recall the complex of linear differential operators 
$(F_{\CS},Q_{\CS})$ associated with linear Chern-Simons theory 
from Example \ref{ex:deRham}. 
On $(F_{\CS},Q_{\CS})$ one can consider the differential pairing 
$(-,-)_{\CS}$, whose only non-vanishing components are given by 
\begin{flalign}\label{eq:CS-diff-pairing}
(\varepsilon_1,\varepsilon_2)_{\CS} &:= \varepsilon_1 \wedge \varepsilon_2 \quad, \nn \\ 
(A,\varepsilon)_{\CS} &:= - A \wedge \varepsilon =: - (\varepsilon,A)_{\CS} \quad, \nn \\ 
(A^\ddagger,\varepsilon)_{\CS} &:= A^\ddagger \wedge \varepsilon =: (\varepsilon,A^\ddagger)_{\CS} \quad,\nn \\ 
(A_1,A_2)_{\CS} &:= - A_1 \wedge A_2 \quad, \nn\\
(\varepsilon^\ddagger,\varepsilon)_{\CS} &:= - \varepsilon^\ddagger \wedge \varepsilon =: - (\varepsilon,\varepsilon^\ddagger)_{\CS} \quad,\nn \\
(A^\ddagger,A)_{\CS} &:= A^\ddagger \wedge A =: - (A,A^\ddagger)_{\CS} \quad,
\end{flalign}
for all $\varepsilon, \varepsilon_1, \varepsilon_2 \in \FFF_{\CS}^{-1} = \Omega^0(M)$, 
$A, A_1, A_2 \in \FFF_{\CS}^{0} = \Omega^1(M)$, 
$A^\ddagger \in \FFF_{\CS}^{1} = \Omega^2(M)$ 
and $\varepsilon^\ddagger \in \FFF_{\CS}^{2} = \Omega^3(M)$. 
Note that $(-,-)_{\CS}$ is just an appropriately shifted version 
of the graded commutative multiplication 
$\wedge: \Omega^\bullet(M) \otimes \Omega^\bullet(M) \to \Omega^\bullet(M)$. 
This shift is the source of the signs displayed in 
\eqref{eq:CS-diff-pairing}. 
\end{ex}

\begin{ex}\label{ex:single-diff-op-diff-pairing}
Recall the complex of linear differential operators 
$(F_{(E,P)},Q_{(E,P)})$ from Example \ref{ex:single-diff-op}. 
Additionally, assume the vector bundle $E$ comes endowed 
with a fiber metric $\langle-,-\rangle$ and suppose that 
$P$ is of the form $P = \Box^\nabla + B$, 
for $\nabla$ a connection on $E$ and $B$ an endomorphism of $E$. 
(Note that all normally hyperbolic linear differential operators 
are of this form, see \cite[Lem.~1.5.5]{BGP}.) 
Furthermore, suppose that the connection $\nabla$ is metric 
and that the endomorphism $B$ is symmetric. With these preparations 
one endows $(F_{(E,P)},Q_{(E,P)})$ with a differential pairing 
$(-,-)_{(E,P)}$, whose only non-vanishing components are given by 
\begin{flalign}
(\varphi_1,\varphi_2)_{(E,P)} &:= (-1)^m \langle \varphi_1 \wedge \ast \nabla \varphi_2 - \varphi_2 \wedge \ast \nabla \varphi_1 \rangle \quad, \nn \\
(\varphi^\ddagger,\varphi)_{(E,P)} &:= \langle \varphi^\ddagger \wedge \ast \varphi \rangle =: - (\varphi,\varphi^\ddagger)_{(E,P)} \quad, 
\end{flalign}
for all $\varphi, \varphi_1, \varphi_2 \in \FFF_{(E,P)}^0 = \Gamma(E)$ 
and $\varphi^\ddagger \in \FFF_{(E,P)}^1 = \Gamma(E)$. 
Compatibility with the differentials follows from $\nabla$ being 
a metric connection and $B$ being a symmetric endomorphism. 
The construction above applies manifestly to the Klein-Gordon field. 
\end{ex}

\begin{ex}\label{ex:MW-diff-pairing}
Also the complex of linear differentials operators $(F_{\MW},Q_{\MW})$ 
associated with Maxwell $p$-forms from Example \ref{ex:Maxwell-p-forms} 
can be endowed with a differential pairing $(-,-)_{\MW}$. 
For instance, when $p=1$ (corresponding to the more 
familiar electromagnetic vector potential), 
the only non-vanishing components of $(-,-)_{\MW}$ are given by 
\begin{flalign}
(A,\varepsilon)_{\MW} &:= - \varepsilon \wedge \ast \dd_\dR A =: - (\varepsilon,A)_{\MW} \quad, \nn \\ 
(A^\ddagger,\varepsilon)_{\MW} &:= (-1)^m \varepsilon \wedge \ast A^\ddagger =: (\varepsilon,A^\ddagger)_{\MW} \quad, \nn \\ 
(\varepsilon^\ddagger,\varepsilon)_{\MW} &:= \varepsilon^\ddagger \wedge \ast \varepsilon =: - (\varepsilon,\varepsilon^\ddagger)_{\MW} \quad, \nn\\
 (A_1,A_2)_{\MW} &:= (-1)^{m-1} \big(A_1 \wedge \ast \dd_\dR A_2 - A_2 \wedge \ast \dd_\dR A_1\big) \quad,\nn\\
 (A^\ddagger,A)_{\MW} &:= A^\ddagger \wedge \ast A =: - (A,A^\ddagger)_{\MW} \quad,
\end{flalign}
for all $\varepsilon \in \FFF_{\MW}^{-1} = \Omega^0(M)$, 
$A, A_1, A_2 \in \FFF_{\MW}^{0} = \Omega^1(M)$, 
$A^\ddagger \in \FFF_{\MW}^{1} = \Omega^1(M)$ 
and $\varepsilon^\ddagger \in \FFF_{\MW}^{2} = \Omega^0(M)$. 
\end{ex}

Let $(-,-)$ be a differential pairing 
on a complex of linear differential operators $(F,Q)$ on $M$.  
The key ingredient to construct the first type of Poisson structures  is the evaluation pairing defined by the composition 
\begin{flalign}\label{eqn:ev-M}
\xymatrix@C=5em{
\ev_M : \FFF_{\cc}[1] \otimes \FFF \ar[r]^-{\id_{\bbR[1]} \otimes (-,-)} & \Omega_{\cc}^\bullet(M)[m] \ar[r]^-{\int_M} & \bbR
}
\end{flalign}
in $\Ch_\bbR$. The construction above uses that, being a 
bi-differential operator, $(-,-)$ preserves supports 
and that, by Stokes' theorem, integration over $M$ defines a cochain map 
$\int_M: \Omega_{\cc}^\bullet(M)[m] \to \bbR$ in $\Ch_\bbR$. 
(Recall that $m = \dim(M)$ is the dimension of $M$.) 
\sk

Let us construct the Poisson structure $\tau_M^\pm$, 
generalizing the last equivalent presentation in 
\eqref{eqn:standardtaudifferent} 
of the classical Poisson structure \eqref{eqn:standardtau}. 
Given a retarded/advanced Green's homotopy 
$\Lambda_\pm$ for $(F,Q)$ and recalling also \eqref{eqn:j}, 
define the $0$-cochain 
\begin{subequations}
\begin{flalign}
\Lambda_{\pm\,\hh} \in [\FFF_{\hh\cc}[1],\FFF]^{0} \cong [\FFF_{\hh\cc},\FFF]^{-1}
\end{flalign} 
as the adjunct of the $(-1)$-cochain 
\begin{flalign}
\hocolim \Big( j_{J^\pm_M(-)}^{M} \circ \Lambda_\pm \circ j_{(-)}^{J^\pm_M(-)} \Big) \in \map(\FFF_{(-)},\Delta \FFF)^{-1}
\end{flalign}
\end{subequations}
with respect to the dg-adjunction $\hocolim \dashv \Delta$ from 
Proposition \ref{propo:dg-adjunction}. 
With these preparations we define the cochain map 
\begin{flalign}\label{eqn:tau-pm}
\tau_M^\pm := \pm \ev_M \circ (\id \otimes \Lambda_{\pm\,\hh}) \mp \ev_M \circ \gamma \circ (\Lambda_{\pm\,\hh} \otimes \id): \FFF_{\hh\cc}[1]^{\otimes 2} \longrightarrow \bbR 
\end{flalign}
in $\Ch_\bbR$, where $\gamma$ denotes the symmetric braiding on 
$\Ch_\bbR$ and the $1$-shift of the $\FFF_{(-)}$-component 
of the natural transformation $\hocolim \to \colim$ 
from \eqref{eqn:hocolim-colim} is suppressed from our notation. 
To check that $\tau_M^\pm$ as defined above is a cochain map, 
for all homogeneous 
$\varphi_1, \varphi_2 \in \FFF_{\hh\cc}[1]$, compute 
\begin{flalign}
(\partial \tau_M^\pm) (\varphi_1 \otimes \varphi_2) &= \big( \pm \ev_M (\id \otimes \partial \Lambda_{\pm\,\hh}) \mp \ev_M \circ \gamma \circ (\partial \Lambda_{\pm\,\hh} \otimes \id) \big) (\varphi_1 \otimes \varphi_2) \nn \\
&= \pm (-1)^{|\varphi_1|} \int_M (\varphi_1,\varphi_2) \mp (-1)^{(|\varphi_1| + 1) |\varphi_2|} \int_M(\varphi_2,\varphi_1) \nn \\
&= 0 \quad, 
\end{flalign}
where in the first step we used that both $\ev_M$ and $\gamma$ 
are cochain maps, in the second step we used that the $1$-cochain 
$\partial \Lambda_{\pm\,\hh} \in [\FFF_{\hh\cc}[1],\FFF]^{1} \cong [\FFF_{\hh\cc},\FFF]^{0}$ 
is just the adjunct of the inclusion 
$\FFF_{(-)} \to \Delta \FFF$ in $\Ch_\bbR^{\cc}$ 
with respect to the dg-adjunction $\hocolim \dashv \Delta$ 
from Proposition \ref{propo:dg-adjunction} 
and in the last step we used the graded anti-symmetry of $(-,-)$. 
Since $\tau_M^\pm$ is manifestly graded anti-symmetric, 
it descends to a Poisson structure 
$\tau_M^\pm: \FFF_{\hh\cc}[1]^{\wedge 2} \to \bbR$. 
\sk

Let us also construct the Poisson structure $\tau_M$. 
Given a choice of retarded and advanced Green's homotopies 
$\Lambda_\pm$ for $(F,Q)$, recall the associated retarded-minus-advanced 
cochain map $\Lambda_\hh$ from Definition \ref{defi:ret-minus-adv} 
and consider the composition
\begin{flalign}\label{eqn:tautildemap}
\widetilde{\tau}_M &:= \ev_M \circ (\id \otimes \Lambda_\hh): \FFF_{\hh\cc}[1]^{\otimes 2} \longrightarrow \bbR 
\end{flalign}
in $\Ch_\bbR$, where we suppressed from our notation 
both the $1$-shift of the $\FFF_{(-)}$-component of the natural 
transformation $\hocolim \to \colim$ from \eqref{eqn:hocolim-colim} 
and the adjunct of the inclusion 
$\FFF_{J_M(-)} \to \Delta \FFF$ in $\Ch_\bbR^{\cc}$ 
with respect to the dg-adjunction $\hocolim \dashv \Delta$ 
from Proposition \ref{propo:dg-adjunction}. 
Graded anti-symmetrization defines the cochain map 
\begin{flalign}\label{eqn:taumap}
\tau_M := \asym(\widetilde{\tau}_M): \FFF_{\hh\cc}[1]^{\otimes 2} \longrightarrow \bbR
\end{flalign}
in $\Ch_\bbR$, which by construction descends to a Poisson structure 
$\tau_M : \FFF_{\hh\cc}[1]^{\wedge 2}\to\bbR$ on $\FFF_{\hh\cc}[1]$. 
\begin{rem}\label{rem:taumap}
Graded anti-symmetrization is required because we did not impose any compatibility conditions between 
the Green's homotopies and the differential pairing. Hence, the cochain map $\widetilde{\tau}_M$ 
in \eqref{eqn:tautildemap} will in general fail to be graded anti-symmetric.
For a large class of examples, including all the 
examples presented in this paper, there exist particular choices of retarded/advanced Green's homotopies
that are compatible with the differential pairing and thereby
turn $\widetilde{\tau}_M$ into a graded anti-symmetric cochain map.
Such choices make the graded anti-symmetrization construction in \eqref{eqn:taumap} superfluous. 
The details will be discussed in Section \ref{subsec:self-adj}.
\end{rem}

The Poisson structures $\tau_M^\pm, \tau_M$ are related by 
homotopies that are constructed by the graded anti-symmetrization 
\begin{subequations}\label{eqn:lambda-M}
\begin{flalign}
\lambda_M := \asym(\widetilde{\lambda}_M) \in [\FFF_{\hh\cc}[1]^{\otimes 2},\bbR]^{-1} 
\end{flalign}
of the $(-1)$-cochain 
\begin{flalign}
\widetilde{\lambda}_M \in [\FFF_{\hh\cc}[1]^{\otimes 2},\bbR]^{-1}
\end{flalign} 
defined below. By abuse of notation, we shall denote 
by $\Lambda_{\pm\,\hh} := \hocolim(\Lambda_\pm)$ 
also the homotopy colimit of the chosen retarded/advanced Green's 
homotopy $\Lambda_\pm$, interpreting 
$\Lambda_{\pm\,\hh} \in [\FFF_{\hh\substack{\spc\\ \sfc}}[1],\FFF_{\hh\substack{\spc\\ \sfc}}]^0$ 
as a $0$-cochain in the internal hom involving cochain complexes 
$\FFF_{\hh\substack{\spc\\ \sfc}}$ of sections with strictly past/future compact support. 
Furthermore, we shall suppress from our notation
all components of the natural transformation $\hocolim \to \colim$ 
from \eqref{eqn:hocolim-colim}, 
as well as the adjuncts of the inclusions $\FFF_{(-)} \to \Delta \FFF$, 
$\FFF_{J_M(-)} \to \Delta \FFF$ and 
$\FFF_{J^\pm_M(-)} \to \Delta \FFF$ in $\Ch_\bbR^{\cc}$ 
with respect to the dg-adjunction $\hocolim \dashv \Delta$ 
from Proposition \ref{propo:dg-adjunction}. 
Recalling that the intersection of a strictly past compact subset 
of $M$ with a strictly future compact one is compact 
and that the differential pairing $(-,-)$ preserves supports, 
we define 
$\widetilde{\lambda}_M \in [\FFF_{\hh\cc}[1]^{\otimes 2},\bbR]^{-1}$ by 
\begin{flalign}
\widetilde{\lambda}_M (\varphi_1 \otimes \varphi_2) := - \int_M (\Lambda_{+\,\hh} \varphi_1, \Lambda_{-\,\hh} \varphi_2) \quad, 
\end{flalign}
\end{subequations}
for all homogeneous $\varphi_1, \varphi_2 \in \FFF_{\hh\cc}[1]$. 
Direct inspection shows that 
\begin{flalign}
\partial \widetilde{\lambda}_M = \tau_M^+ - \widetilde{\tau}_M = (\tau_M^- - \widetilde{\tau}_M) \circ \gamma \quad. 
\end{flalign}
Since their verifications are very similar, 
let us focus on the first equality only. 
For all homogeneous $\varphi_1, \varphi_2 \in \FFF_{\hh\cc}[1]$, 
one computes 
\begin{flalign}
(\partial \widetilde{\lambda}_M) (\varphi_1 \otimes \varphi_2) &= - \int_M (\Lambda_{+\,\hh} \dd_{[1]} \varphi_1, \Lambda_{-\,\hh} \varphi_2) - (-1)^{|\varphi_1|} \int_M (\Lambda_{+\,\hh} \varphi_1, \Lambda_{-\,\hh} \dd_{[1]} \varphi_2) \nn \\ 
&= - \int_M (\dd \Lambda_{+\,\hh} \varphi_1, \Lambda_{-\,\hh} \varphi_2) + \int_M (\varphi_1, \Lambda_{-\,\hh} \varphi_2) \nn \\ 
&\qquad\phantom{=} - (-1)^{|\varphi_1|} \int_M (\Lambda_{+\,\hh} \varphi_1, \dd \Lambda_{-\,\hh} \varphi_2) + (-1)^{|\varphi_1|} \int_M (\Lambda_{+\,\hh} \varphi_1, \varphi_2) \nn \\ 
&= - \int_M \dd_{\dR\, [m-1]} (\Lambda_{+\,\hh} \varphi_1, \Lambda_{-\,\hh} \varphi_2) \nn \\ 
&\qquad\phantom{=} + \int_M (\varphi_1, \Lambda_{-\,\hh} \varphi_2) - (-1)^{|\varphi_1| |\varphi_2|} \int_M (\varphi_2, \Lambda_{+\,\hh} \varphi_1) \nn \\ 
&= (\tau_M^+ - \widetilde{\tau}_M) (\varphi_1 \otimes \varphi_2) \quad. 
\end{flalign}
For the first step we used 
$\partial(\widetilde{\lambda}_M) = \widetilde{\lambda}_M \circ \dd_{[1] \otimes}$, 
for the second step we used $\dd \circ \Lambda_{\pm\,\hh} - \Lambda_{\pm\,\hh} \circ \dd_{[1]} = \partial (\Lambda_{\pm\,\hh}) = \id$, 
for the third step we used 
$\Lambda_{\hh} = \Lambda_{+\,\hh} - \Lambda_{-\,\hh}$, 
the graded anti-symmetry of $(-,-)$ and 
its compatibility with the differentials 
$(-,-) \circ Q_{\otimes} = \dd_{\dR\,[m-1]} \circ (-,-)$ 
(the passage from $\dd$ to $Q$ is suppressed here because 
the natural transformation $\hocolim \to \colim$ 
from \eqref{eqn:hocolim-colim} has been suppressed from our notation). 
For the last step we used Stokes' theorem and the definitions 
of $\tau_M^\pm$ in \eqref{eqn:tau-pm} and of 
$\widetilde{\tau}_M$ in \eqref{eqn:tautildemap}. 
Since $\lambda_M$ is by definition anti-symmetrized, 
it descends to the homotopy 
$\lambda_M \in [\FFF_{\hh\cc}[1]^{\wedge 2},\bbR]^{-1}$ 
such that 
\begin{flalign}\label{eqn:partial-lambda-M}
\partial \lambda_M = \asym(\partial \widetilde{\lambda}_M) = \tau_M^+ - \tau_M = \tau_M - \tau_M^- \quad. 
\end{flalign}

We collect our findings so far in the proposition stated below. 
\begin{propo}\label{propo:covariant-Poisson}
Let $(F,Q)$ be a Green hyperbolic complex on $M$ endowed with a differential pairing $(-,-)$.
For a choice of retarded and advanced Green's homotopies $\Lambda_{\pm}$, 
\eqref{eqn:tau-pm} and \eqref{eqn:taumap} define Poisson structures 
$\tau_M^\pm, \tau_M: \FFF_{\hh\cc}[1]^{\wedge 2} \to \bbR$ on $\FFF_{\hh\cc}[1]$. 
These Poisson structures coincide 
$\tau_M^\pm = \tau_M \pm \partial \lambda_M$ up to the homotopy 
$\pm \lambda_M \in [\FFF_{\hh\cc}[1]^{\wedge 2},\bbR]^{-1}$ 
from \eqref{eqn:lambda-M}. 
\end{propo}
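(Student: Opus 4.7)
The plan is to verify the three assertions in the proposition in turn: that $\tau_M^\pm$ and $\tau_M$ are cochain maps, that they descend to the graded wedge square, and that they agree up to the prescribed homotopy. Most of the relevant computations have already been carried out in the paragraphs immediately preceding the proposition, so the role of the proof is really to package them.

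First, for $\tau_M^\pm$ defined by \eqref{eqn:tau-pm}, I would observe that $\ev_M$ is a cochain map (by Stokes' theorem applied to the compactly supported integrand produced by the differential pairing $(-,-)$) and that the symmetric braiding $\gamma$ is a cochain map. The key input is that the $1$-cochain $\partial\Lambda_{\pm\,\hh} \in [\FFF_{\hh\cc}[1],\FFF]^{1}$ is, under the adjunction $\hocolim \dashv \Delta$ of Proposition \ref{propo:dg-adjunction}, precisely the adjunct of the canonical inclusion $\FFF_{(-)} \to \Delta\FFF$, which follows from $\delta\Lambda_\pm = \id$ and naturality of the inclusions $j_{(-)}^{J_M^\pm(-)}$ and $j_{J_M^\pm(-)}^{M}$. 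Then the direct computation of $\partial\tau_M^\pm$ on homogeneous $\varphi_1 \otimes \varphi_2$ reduces to $\pm(-1)^{|\varphi_1|}\int_M(\varphi_1,\varphi_2) \mp (-1)^{(|\varphi_1|+1)|\varphi_2|}\int_M(\varphi_2,\varphi_1)$, which vanishes by graded anti-symmetry of $(-,-)$. Manifest graded anti-symmetry of $\tau_M^\pm$ from \eqref{eqn:tau-pm} then yields descent to $\FFF_{\hh\cc}[1]^{\wedge 2}$.

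Next, for $\tau_M$ defined by \eqref{eqn:taumap}, I would note that $\widetilde{\tau}_M = \ev_M \circ (\id \otimes \Lambda_\hh)$ is a composition of cochain maps (using Theorem \ref{th:ret-minus-adv-qiso}, or more elementarily just that $\Lambda_\hh$ is a cochain map by Definition \ref{defi:ret-minus-adv}), hence so is the graded anti-symmetrization $\tau_M = \asym(\widetilde{\tau}_M)$. By construction $\tau_M$ is graded anti-symmetric and thus descends to $\FFF_{\hh\cc}[1]^{\wedge 2}$.

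Finally, for the homotopy relation $\tau_M^\pm = \tau_M \pm \partial\lambda_M$, I would carry out the Stokes-type calculation already displayed in the text: evaluating $\partial\widetilde{\lambda}_M$ on $\varphi_1 \otimes \varphi_2$, using $\partial\Lambda_{\pm\,\hh} = \id$ (as above) to produce boundary terms, the compatibility $(-,-)\circ Q_\otimes = \dd_{\dR\,[m-1]} \circ (-,-)$ to rewrite the remaining contributions as an exact de Rham form, Stokes' theorem to kill the exact term (here one uses that $\Lambda_{+\,\hh}\varphi_1$ is strictly past compact and $\Lambda_{-\,\hh}\varphi_2$ is strictly future compact, so their pairing has compact support), and graded anti-symmetry of $(-,-)$ in the final step. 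This yields $\partial\widetilde{\lambda}_M = \tau_M^+ - \widetilde{\tau}_M$, and symmetrically $\partial(\widetilde{\lambda}_M \circ \gamma) = \tau_M^- - \widetilde{\tau}_M \circ \gamma$ (or equivalently the formula with $\gamma$ absorbed). Anti-symmetrizing and using that $\asym(\widetilde{\tau}_M) = \tau_M$ gives \eqref{eqn:partial-lambda-M}, which is the asserted relation. The only delicate point I anticipate is tracking signs carefully across the suppressed adjunction and the $1$-shift in $\FFF_{\hh\cc}[1]$, and verifying that the support conditions indeed make the Stokes step legitimate; both are routine once the bookkeeping conventions of Section \ref{subsec:FunCh} are kept in mind.
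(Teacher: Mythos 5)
Your proposal reproduces the paper's argument: the paper does not give a separate proof environment but instead packages the verifications carried out in the preceding paragraphs — the cochain-map check for $\tau_M^\pm$ via $\partial\Lambda_{\pm\,\hh}$ being the adjunct of the inclusion $\FFF_{(-)}\to\Delta\FFF$, the observation that $\widetilde{\tau}_M$ is a composition of cochain maps, and the Stokes-type computation showing $\partial\widetilde{\lambda}_M = \tau_M^+ - \widetilde{\tau}_M = (\tau_M^- - \widetilde{\tau}_M)\circ\gamma$ followed by graded anti-symmetrization. The one place your sketch is imprecise is the ``symmetric'' companion identity: taken literally, $\partial(\widetilde{\lambda}_M\circ\gamma) = \tau_M^- - \widetilde{\tau}_M\circ\gamma$ would anti-symmetrize to $\partial\lambda_M = -\tau_M^- - \tau_M$, which has the wrong sign; the correct form is $\partial(\widetilde{\lambda}_M\circ\gamma) = \tau_M^- - \widetilde{\tau}_M$ (equivalently $\partial\widetilde{\lambda}_M = (\tau_M^- - \widetilde{\tau}_M)\circ\gamma$), which together with $\partial\widetilde{\lambda}_M = \tau_M^+ - \widetilde{\tau}_M$ yields both $\partial\lambda_M = \tau_M^+ - \tau_M$ and $\partial\lambda_M = \tau_M - \tau_M^-$ upon anti-symmetrization. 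Your parenthetical hedge shows you anticipated this ambiguity; pinning it down is routine bookkeeping and does not change the structure of the argument.
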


\begin{rem}\label{rem:tau}
The cochain map $\tau_M^\pm$ \eqref{eqn:tau-pm} generalizes 
the classical Poisson structure \eqref{eqn:standardtau} 
as presented by the last step of \eqref{eqn:standardtaudifferent}, 
while $\tau_M$ generalizes the classical Poisson structure 
as presented by the second and third steps 
of \eqref{eqn:standardtaudifferent}. 
In contrast to the classical situation, in general 
the Poisson structures $\tau_M^\pm,\tau_M$ 
do not coincide, but are related by the homotopy $\pm \lambda_M$, 
as stated by Proposition \ref{propo:covariant-Poisson}. 
For a large class of examples, including all the 
examples presented in this paper, there exist particular choices of retarded/advanced Green's homotopies
that are compatible with the differential pairing. 
In this case it turns out that $\tau_M^+ = \tau_M^- = \tau_M$ 
coincide. We will come back to this in more detail 
in Section \ref{subsec:self-adj}. 
\end{rem}

To construct the second type of Poisson structure on a 
complex of linear differential operators $(F,Q)$ endowed 
with a differential pairing $(-,-)$, the key ingredient is a second 
type of evaluation pairing, which depends on the choice of a 
spacelike Cauchy surface $\Sigma \subseteq M$ 
and is defined by the composition 
\begin{flalign}\label{eqn:ev-Sigma}
\xymatrix@C=4em{
\ev_\Sigma : \FFF_{\sc} \otimes \FFF \ar[r]^-{(-,-)} & \Omega_{\sc}^\bullet(M)[m-1] \ar[r]^-{\iota^\ast} & \Omega_{\cc}^\bullet(\Sigma)[m-1] \ar[r]^-{\int_\Sigma} & \bbR
}
\end{flalign}
in $\Ch_\bbR$. The construction above uses that, being a 
bi-differential operator, $(-,-)$ preserves supports, 
that intersecting  a spacelike compact subset of $M$ with 
a spacelike Cauchy surface $\Sigma$ returns a compact subset of $\Sigma$ 
and that, by Stokes' theorem, integration over $\Sigma$ defines a cochain map 
$\int_\Sigma: \Omega_{\cc}^\bullet(\Sigma)[m-1] \to \bbR$ in $\Ch_\bbR$. 
(Recall that $\dim(\Sigma) = \dim(M) -1 = m-1$.) 
The construction of the second Poisson structure is summarized in
the proposition below, which is a straightforward
consequence of the graded anti-symmetry of the differential pairing $(-,-)$. 
\begin{propo}\label{propo:time-zero-Poisson}
Let $(F,Q)$ be a complex of linear differential operators on $M$
endowed with a differential pairing $(-,-)$. Denote
by $\Sigma\subseteq M$ a spacelike Cauchy surface of $M$.
Then the composition
\begin{flalign}\label{eqn:sigmamap}
\xymatrix@C=5em{
\sigma_\Sigma: \FFF_{\hh\sc}^{\otimes 2} \ar[r] & \FFF_{\sc} \otimes \FFF \ar[r]^-{(-1)^{m-1} \ev_\Sigma^{}} & \bbR 
}
\end{flalign}
in $\Ch_\bbR$ is graded anti-symmetric and hence it descends
to a Poisson structure $\sigma_\Sigma : \FFF_{\hh\sc}^{\wedge 2}\to\bbR$ on $\FFF_{\hh\sc}$. 
The unlabeled morphism consists in its first factor 
of the $\FFF_{J_M(-)}$-component of the natural transformation 
$\hocolim \to \colim$ from \eqref{eqn:hocolim-colim} 
and in its second factor of the adjunct of the inclusion 
$\FFF_{J_M(-)} \to \Delta \FFF$ in $\Ch_\bbR^{\cc}$ 
with respect to the dg-adjunction $\hocolim \dashv \Delta$ 
from Proposition \ref{propo:dg-adjunction}. 
\end{propo}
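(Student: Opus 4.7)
The plan is to verify the two properties that are implicit in the statement — that $\sigma_\Sigma$ is a cochain map and that it is graded anti-symmetric — from which the descent to a Poisson structure on $\FFF_{\hh\sc}^{\wedge 2}$ follows automatically.

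For the cochain map property, I will observe that $\sigma_\Sigma$ is constructed in \eqref{eqn:sigmamap} as a composition. The first arrow tensors two cochain maps: the component at $\FFF_{J_M(-)}$ of the natural transformation $\hocolim \to \colim$ (landing in $\FFF_{\sc}$) and the adjunct under $\hocolim \dashv \Delta$ of the inclusion $\FFF_{J_M(-)} \to \Delta \FFF$ in $\Ch_\bbR^\cc$ (landing in $\FFF$). The substantive content is therefore that $\ev_\Sigma$ from \eqref{eqn:ev-Sigma} is well-defined and a cochain map. For well-definedness, since $(-,-)$ is a bi-differential operator and hence preserves supports, for $\psi_1 \in \FFF_{\sc}$ and $\psi_2 \in \FFF$ the pairing $(\psi_1,\psi_2)$ lies in $\Omega_\sc^\bullet(M)[m-1]$, and its pullback along the Cauchy surface inclusion $\iota:\Sigma \hookrightarrow M$ lies in $\Omega_\cc^\bullet(\Sigma)[m-1]$ because intersecting a spacelike compact subset of $M$ with $\Sigma$ is compact. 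The cochain map property of $\ev_\Sigma$ then follows from three facts: $(-,-)$ is a cochain map by Definition \ref{defi:diff-pairing}, the restriction $\iota^\ast$ intertwines $\dd_\dR$, and Stokes' theorem ensures that integration over the $(m-1)$-dimensional manifold $\Sigma$ defines a cochain map $\Omega_\cc^\bullet(\Sigma)[m-1] \to \bbR$.

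For graded anti-symmetry, I will exploit the graded anti-symmetry of $(-,-)$ directly. A minor subtlety is that the two prefactor maps in \eqref{eqn:sigmamap} look asymmetric — one factors through $\FFF_{\sc}$ and the other through $\FFF$ — but the adjunct under $\hocolim \dashv \Delta$ of the inclusion $\FFF_{J_M(-)} \to \Delta \FFF$ factors as $\hocolim \FFF_{J_M(-)} \to \colim \FFF_{J_M(-)} = \FFF_{\sc} \hookrightarrow \FFF$, so both factors of $\psi_1 \otimes \psi_2$ produce the same underlying section in $\FFF$ up to the inclusion $\FFF_{\sc} \hookrightarrow \FFF$. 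Consequently, for homogeneous $\psi_1,\psi_2 \in \FFF_{\hh\sc}$, the anti-symmetry identity $(\psi_1,\psi_2) = -(-1)^{|\psi_1|\,|\psi_2|}(\psi_2,\psi_1)$ integrates over $\Sigma$ to $\sigma_\Sigma(\psi_1 \otimes \psi_2) = -(-1)^{|\psi_1|\,|\psi_2|}\,\sigma_\Sigma(\psi_2 \otimes \psi_1)$, so $\sigma_\Sigma$ descends to the wedge power $\FFF_{\hh\sc}^{\wedge 2}$. I do not expect any genuine obstacle: the argument is essentially support tracking combined with the anti-symmetry of the differential pairing, and the only potential pitfall is the careful verification that the two seemingly asymmetric prefactor maps produce consistent underlying sections in $\FFF$.
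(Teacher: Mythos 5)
Your proposal is correct and follows exactly the route the paper takes: the paper offers no separate proof of this proposition, simply stating (in the preceding paragraph setting up $\ev_\Sigma$) that well-definedness follows from support-tracking and Stokes' theorem, and that the graded anti-symmetry of $\sigma_\Sigma$ is a ``straightforward consequence of the graded anti-symmetry of the differential pairing $(-,-)$.'' Your elaboration of the factorization of the adjunct map through $\FFF_\sc \hookrightarrow \FFF$ is a correct and useful unpacking of a detail the paper leaves implicit.
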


Classically, the retarded-minus-advanced propagator 
is compatible with the usual analogs of the two types 
of Poisson structures considered above, 
hence it is an isomorphism of Poisson vector spaces. 
Their cochain complex analogs are Poisson complexes, i.e.\ 
pairs $(V,\tau)$ consisting of a cochain 
complex $V \in \Ch_\bbR$ and a Poisson structure 
$\tau: V^{\wedge 2} \to \bbR$ in $\Ch_\bbR$. 
Morphisms of Poisson complexes $f: (V_1,\tau_1) \to (V_2,\tau_2)$ 
are cochain maps $f: V_1 \to V_2$ such that 
$\tau_2 \circ f^{\wedge 2} = \tau_1$. 
It will be shown that, in the context of Green hyperbolic complexes, 
the retarded-minus-advanced quasi-isomorphism $\Lambda_{\hh}$ is 
{\it not} a morphism from the Poisson complexes 
$(\FFF_{\hh\cc}[1],\tau_M^\pm)$ or $(\FFF_{\hh\cc}[1],\tau_M)$ of Proposition 
\ref{propo:covariant-Poisson} to the Poisson complex 
$(\FFF_{\hh\sc},\sigma_\Sigma)$ of Proposition 
\ref{propo:time-zero-Poisson}, however this failure 
is controlled by prescribed homotopies. 
Such weaker morphisms can be interpreted
as morphisms in a suitable simplicial category 
of Poisson complexes, see \cite[Sec.~3.1]{GwilliamHaugseng}.
\sk

For concreteness, let us concentrate on constructing 
the homotopy $\lambda_{\hh}$ such that 
$\sigma_\Sigma \circ \Lambda_{\hh}^{\wedge 2} = \tau_M + \partial \lambda_{\hh}$. 
(One can easily relate $\sigma_\Sigma$ also to the Poisson structure 
$\tau_M^\pm$ from \eqref{eqn:tau-pm} by combining $\lambda_\hh$ 
with the homotopy $\lambda_M$ from \eqref{eqn:lambda-M} 
according to 
$\sigma_\Sigma \circ \Lambda_{\hh}^{\wedge 2} = \tau_M^{\pm} + \partial (\lambda_{\hh} \mp \lambda_M)$.) 
The relevant 
\begin{subequations}\label{eqn:lambdah}
\begin{flalign}
\lambda_{\hh} := \asym(\widetilde{\lambda}_{\hh}) \in \big[ \FFF_{\hh\cc}[1]^{\wedge 2}, \bbR \big]^{-1}
\end{flalign}
is the graded anti-symmetrization of the $(-1)$-cochain 
\begin{flalign}
\widetilde{\lambda}_{\hh} \in \big[ \FFF_{\hh\cc}[1]^{\otimes 2}, \bbR \big]^{-1}
\end{flalign}
defined, for all homogeneous 
$\varphi_1, \varphi_2 \in \FFF_{\hh\cc}[1]$, by 
\begin{flalign}
\widetilde{\lambda}_{\hh} (\varphi_1\otimes\varphi_2) := \int_{\Sigma^+} \big( (\Lambda_-)_{\hh} \varphi_1, \Lambda_{\hh} \varphi_2 \big) + \int_{\Sigma^-} \big( (\Lambda_+)_{\hh} \varphi_1, \Lambda_{\hh} \varphi_2 \big) \quad, 
\end{flalign}
\end{subequations}
where on the right-hand side $\Sigma^\pm := J_M^\pm(\Sigma)$, 
$(-)_{\hh} := \hocolim$ denotes the homotopy colimit 
dg-functor \eqref{eqn:hocolim-dg-fun} 
and the relevant components of the natural transformation 
$\hocolim \to \colim$ from \eqref{eqn:hocolim-colim} 
are suppressed from our notation. 
Indeed, direct inspection shows that, for all homogeneous sections 
$\varphi_1, \varphi_2 \in \FFF_{\hh\cc}[1]$, one has 
\begin{flalign}\label{eqn:lambdah-computation}
(\partial \widetilde{\lambda}_{\hh})(\varphi_1\otimes\varphi_2) &= \int_{\Sigma^+} \big( (\Lambda_-)_{\hh} \dd_{[1]} \varphi_1, \Lambda_{\hh} \varphi_2 \big) + (-1)^{|\varphi_1|} \int_{\Sigma^+} \big( (\Lambda_-)_{\hh} \varphi_1, \Lambda_{\hh} \dd_{[1]} \varphi_2 \big) \nn \\ 
&\quad \hphantom{=} + \int_{\Sigma^-} \big( (\Lambda_+)_{\hh} \dd_{[1]} \varphi_1, \Lambda_{\hh} \varphi_2 \big) + (-1)^{|\varphi_1|} \int_{\Sigma^-} \big( (\Lambda_+)_{\hh} \varphi_1, \Lambda_{\hh} \dd_{[1]} \varphi_2 \big) \nn \\
&= \int_{\Sigma^+} \big( \dd (\Lambda_-)_{\hh} \varphi_1, \Lambda_{\hh} \varphi_2 \big) - \int_{\Sigma^+} \big( \varphi_1, \Lambda_{\hh} \varphi_2 \big) + (-1)^{|\varphi_1|} \int_{\Sigma^+} \big( (\Lambda_-)_{\hh} \varphi_1, \dd \Lambda_{\hh} \varphi_2 \big) \nn \\ 
&\quad \hphantom{=} + \int_{\Sigma^-} \big( \dd (\Lambda_+)_{\hh} \varphi_1, \Lambda_{\hh} \varphi_2 \big) - \int_{\Sigma^-} \big( \varphi_1, \Lambda_{\hh} \varphi_2 \big) + (-1)^{|\varphi_1|} \int_{\Sigma^-} \big( (\Lambda_+)_{\hh} \varphi_1, \dd \Lambda_{\hh} \varphi_2 \big) \nn \\ 
&= \int_{\Sigma^+} \dd_{\dR\, [m-1]} \big( (\Lambda_-)_{\hh} \varphi_1, \Lambda_{\hh} \varphi_2 \big) + \int_{\Sigma^-} \dd_{\dR\, [m-1]} \big( (\Lambda_+)_{\hh} \varphi_1, \Lambda_{\hh} \varphi_2 \big) - \int_M \big( \varphi_1, \Lambda_\hh \varphi_2 \big) \nn \\ 
&= (-1)^{m-1} \int_\Sigma \iota^\ast \big( \Lambda_\hh \varphi_1, \Lambda_\hh \varphi_2 \big) - \int_M \big( \varphi_1, \Lambda_\hh \varphi_2 \big)\nn\\
&= \sigma_{\Sigma}(\Lambda_{\hh}\varphi_1\otimes \Lambda_\hh\varphi_2) - \widetilde{\tau}_M(\varphi_1\otimes \varphi_2) \quad, 
\end{flalign}
where $\Sigma^\pm := J_M^\pm(\Sigma)$. For the first step we used 
$\partial(\widetilde{\lambda}_{\hh}) = \widetilde{\lambda}_{\hh} \circ \dd_{[1] \otimes}$, 
for the second step we used $\dd \circ (\Lambda_\pm)_\hh - (\Lambda_\pm)_\hh \circ \dd_{[1]} = \partial (\Lambda_\pm)_\hh = \id$ 
and $\dd \circ \Lambda_{\hh} = \Lambda_{\hh} \circ \dd_{[1]}$, 
for the third step we used 
$(-,-) \circ Q_{\otimes} = \dd_{\dR\,[m-1]} \circ (-,-)$ 
(the passage from $\dd$ to $Q$ is suppressed here because 
the natural transformation $\hocolim \to \colim$ 
from \eqref{eqn:hocolim-colim} has been suppressed from our notation) 
and for the fourth step we used Stokes' theorem. 
The last step is the definition of $\widetilde{\tau}_M$ in \eqref{eqn:tautildemap}
and of $\sigma_\Sigma$ in \eqref{eqn:sigmamap}. Recalling also the definition of $\tau_M$
in \eqref{eqn:taumap}, the previous computation shows that 
\begin{flalign}
\partial \lambda_{\hh} = \asym(\partial \widetilde{\lambda}_{\hh}) = 
\sigma_\Sigma \circ \Lambda_{\hh}^{\wedge 2} - \tau_M \quad. 
\end{flalign}
Let us summarize our findings.
\begin{theo}\label{th:Linfty-qiso}
Let $(F,Q)$ be a Green hyperbolic complex on $M$ 
endowed with a differential pairing $(-,-)$. 
Denote by $\Sigma\subseteq M$ a spacelike Cauchy surface of $M$. 
Then the retarded-minus-advanced quasi-isomorphism $\Lambda_{\hh} :
\FFF_{\hh\cc}[1]\to \FFF_{\hh\sc}$ from 
Theorem \ref{th:ret-minus-adv-qiso} is compatible with the Poisson 
structures $\tau_M$ and $\sigma_\Sigma$ from Propositions 
\ref{propo:covariant-Poisson} and \ref{propo:time-zero-Poisson} 
up to the homotopy $\lambda_{\hh}$ defined in \eqref{eqn:lambdah}, i.e.\  
$\sigma_\Sigma \circ \Lambda_{\hh}^{\wedge 2} - \tau_M = \partial \lambda_{\hh}$.
\end{theo}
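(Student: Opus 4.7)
The plan is to follow essentially the same strategy that was used to construct the homotopy $\lambda_M$ in \eqref{eqn:lambda-M}: work first with the non-antisymmetrized $(-1)$-cochain $\widetilde{\lambda}_\hh$ and compute $\partial \widetilde{\lambda}_\hh$ directly, using the defining property $\delta \Lambda_\pm = \id$ of retarded and advanced Green's homotopies (which, after passing to the homotopy colimit and the dg-adjunction $\hocolim \dashv \Delta$, translates to $\partial (\Lambda_\pm)_\hh = \id$ on the $1$-shifted compactly supported complex) together with Stokes' theorem. Only at the very end does one antisymmetrize to obtain the homotopy $\lambda_\hh = \asym(\widetilde{\lambda}_\hh)$ appearing in the statement.

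Concretely, I would fix homogeneous $\varphi_1, \varphi_2 \in \FFF_{\hh\cc}[1]$ and expand $\partial \widetilde{\lambda}_\hh(\varphi_1 \otimes \varphi_2)$ via the graded Leibniz rule. Each of the two terms in the definition \eqref{eqn:lambdah} of $\widetilde{\lambda}_\hh$ gives rise to three contributions after using $\partial(\Lambda_\pm)_\hh = \id$ on the first slot and $\partial \Lambda_\hh = 0$ on the second slot (the latter follows from $\Lambda_\hh = (\Lambda_+)_\hh - (\Lambda_-)_\hh$ being a cochain map, i.e. a quasi-isomorphism by Theorem \ref{th:ret-minus-adv-qiso}). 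Applying the cochain map property of the differential pairing, $(Q\varphi, \psi) + (-1)^{|\varphi|}(\varphi, Q\psi) = (-1)^{m-1}\dd_\dR(\varphi,\psi)$, one rewrites the surviving derivative terms as total $\dd_\dR$-differentials on $\Sigma^+$ and on $\Sigma^-$, leaving behind two terms of the form $-\int_{\Sigma^\pm}(\varphi_1,\Lambda_\hh \varphi_2)$ which combine to $-\int_M(\varphi_1,\Lambda_\hh \varphi_2) = -\widetilde{\tau}_M(\varphi_1 \otimes \varphi_2)$ since $\{\Sigma^+,\Sigma^-\}$ covers $M$.

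The key geometric input is then Stokes' theorem applied separately to $\Sigma^\pm$: their common boundary is $\Sigma$, and the two orientations are opposite, so the boundary contributions from the $(\Lambda_+)_\hh$ and $(\Lambda_-)_\hh$ terms combine into a single integral over $\Sigma$ of $\iota^\ast(\Lambda_\hh \varphi_1, \Lambda_\hh \varphi_2)$, which is precisely $(-1)^{m-1}\sigma_\Sigma(\Lambda_\hh \varphi_1 \otimes \Lambda_\hh \varphi_2)$ after pulling out the overall sign in \eqref{eqn:sigmamap}. The upshot is the identity $\partial \widetilde{\lambda}_\hh = \sigma_\Sigma \circ \Lambda_\hh^{\otimes 2} - \widetilde{\tau}_M$. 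Antisymmetrizing both sides and using that $\sigma_\Sigma$ and $\tau_M = \asym(\widetilde{\tau}_M)$ are already graded anti-symmetric yields the desired $\partial \lambda_\hh = \sigma_\Sigma \circ \Lambda_\hh^{\wedge 2} - \tau_M$.

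The main obstacle I anticipate is purely bookkeeping: the signs from the $1$-shift on $\FFF_{\hh\cc}[1]$, the Koszul signs from moving $\partial$ across graded tensor factors, the sign conventions $\dd = -\dd_\hh + \dd_\vv$ in the homotopy colimit, and the $(-1)^{m-1}$ in the pairing compatibility and in $\sigma_\Sigma$ all need to conspire correctly. One must also verify that the suppressed natural transformations $\hocolim \to \colim$ and the adjuncts of the inclusions $\FFF_{(-)} \to \Delta \FFF$, $\FFF_{J_M^\pm(-)} \to \Delta \FFF$ do not obstruct the use of Stokes' theorem, but this is harmless because the differential pairing is a bi-differential operator and hence preserves the relevant support conditions; in particular $((\Lambda_\mp)_\hh \varphi_1, \Lambda_\hh \varphi_2)$ has strictly past/future compact support intersected with spacelike compact support, which is compact upon further restriction to $\Sigma^\pm$, so all integrals are well defined.
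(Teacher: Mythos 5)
Your proposal follows essentially the same route as the paper's own proof: expand $\partial\widetilde{\lambda}_\hh$ via the graded Leibniz rule, use $\partial(\Lambda_\pm)_\hh=\id$ and $\partial\Lambda_\hh=0$, rewrite the derivative terms through the compatibility $(-,-)\circ Q_\otimes = \dd_{\dR\,[m-1]}\circ(-,-)$, apply Stokes separately on $\Sigma^\pm$ so that the opposite boundary orientations produce $\Lambda_\hh=(\Lambda_+)_\hh-(\Lambda_-)_\hh$ in the first slot, and anti-symmetrize at the end. The only slips are harmless and cosmetic: the $(-1)^{m-1}$ in the final boundary term comes from $\dd_{\dR\,[m-1]}=(-1)^{m-1}\dd_\dR$ (so the boundary contribution already equals $\sigma_\Sigma\circ\Lambda_\hh^{\otimes 2}$ on the nose, without a further sign) and $\Lambda_\hh$ being a cochain map is a consequence of Definition \ref{defi:ret-minus-adv} rather than of Theorem \ref{th:ret-minus-adv-qiso}; neither affects the argument.
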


\begin{rem}
Classically, the Poisson structure $\sigma_\Sigma$ is independent of
the choice of spacelike Cauchy surface $\Sigma\subseteq M$. 
The analogous conclusion in the present context may be formalized 
as follows. Let $\Sigma, \Sigma^\prime \subseteq M$ be two spacelike Cauchy 
surfaces of $M$. Then there exists a homotopy $\lambda_{\Sigma\Sigma^\prime}$ 
comparing the Poisson structures $\sigma_{\Sigma}$ and 
$\sigma_{\Sigma^\prime}$, namely such that 
$\sigma_\Sigma - \sigma_{\Sigma^\prime} = \partial \lambda_{\Sigma\Sigma^\prime}$. 
This statement is an immediate corollary of Theorem  
\ref{th:Linfty-qiso}. 
In fact, associated with $\Sigma$ and $\Sigma^\prime$ one has 
corresponding homotopies $\lambda_{\hh}$ and $\lambda_{\hh}^\prime$, 
both given by \eqref{eqn:lambdah}, such that 
$\sigma_\Sigma \circ \Lambda_{\hh}^{\wedge 2} - \tau_M = \partial \lambda_{\hh}$
and 
$\sigma_{\Sigma^\prime} \circ \Lambda_{\hh}^{\wedge 2} - \tau_M = \partial \lambda_{\hh}^\prime$. 
Comparing these equations, one gets 
$(\sigma_\Sigma - \sigma_{\Sigma^\prime}) \circ \Lambda_{\hh}^{\wedge 2} = \partial (\lambda_{\hh} - \lambda_{\hh}^\prime)$, 
therefore any choice of a quasi-inverse for $\Lambda_\hh$ 
allows one to construct a homotopy $\lambda_{\Sigma\Sigma^\prime}$ 
comparing $\sigma_{\Sigma}$ and $\sigma_{\Sigma^\prime}$. 
For instance, one choice of $\lambda_{\Sigma\Sigma^\prime}$ 
is obtained combining $\lambda_{\hh}$ and $\lambda_{\hh}^\prime$ 
with the quasi-inverse $\Theta$ from \eqref{eqn:Theta} 
and the homotopy $\Upsilon$ from \eqref{eqn:Upsilon} witnessing 
$\Lambda_\hh \circ \Theta \sim \id$. 
\end{rem}


\section{\label{sec:witness}Green's witnesses}
\subsection{\label{subsec:witness}Definition and results}
This section introduces the concept of a Green's witness, 
which consists of a collection of degree decreasing linear 
differential operators defined on a complex of linear differential 
operators satisfying suitable conditions. 
We shall show that finding a Green's witness is particularly useful 
because it entails that the underlying complex of linear differential operators 
is Green hyperbolic. In particular, we shall obtain several examples 
of Green hyperbolic complexes by constructing Green's witnesses. 
Furthermore, 
Green's witnesses provide explicit retarded and advanced Green's 
homotopies that are strictly natural, in contrast to the general case. 
This provides several simplifications compared to Section 
\ref{sec:Green-hyp-cpx}. Most notably, the retarded-minus-advanced 
quasi-isomorphism simplifies considerably, which is particularly 
interesting in view of applications to concrete examples 
of Green hyperbolic complexes admitting a Green's witness. 
Those simplifications shall be explored in detail in this section. 

\begin{defi}\label{defi:Green-witness}
A {\it Green's witness} $W = (W^n)_{n \in \bbZ}$ 
for a complex of linear differential operators $(F,Q)$ on $M$ consists of 
a collection of degree decreasing linear differential operators 
$W^n: \FFF^n \to \FFF^{n-1}$ such that, 
for all $n \in \bbZ$, the linear differential operators 
\begin{flalign}\label{eqn:P}
P^n := Q^{n-1}\, W^n + W^{n+1}\, Q^n: \FFF^n \longrightarrow \FFF^n
\end{flalign}
are Green hyperbolic (in the ordinary sense), 
see Section \ref{subsec:Green}. 
\end{defi}

\begin{rem}\label{rem:P}
Recall that, for a complex of linear differential operators $(F,Q)$, 
the cochain complex of its sections is denoted by 
$\FFF \in \Ch_\bbK$, whose degree $n \in \bbZ$ component
is the vector space $\FFF^n$ from \eqref{eqn:FFF} 
and whose differential is $Q$. A Green's witness $W$ for $(F,Q)$ 
is in particular a $(-1)$-cochain in the internal hom 
$[\FFF,\FFF] \in \Ch_\bbK$ whose differential $\partial W = P$ 
returns a $0$-coboundary whose components $P^n$ 
are ordinary Green hyperbolic linear differential operators. 
Since $Q$, $W$ and $P$ consist of linear differential operators, 
which in particular preserve supports, they naturally restrict 
to the cochain complexes of sections with restricted supports. 
\end{rem}

\begin{rem}\label{rem:CG}
Our concept of Green's witness has its roots in Hodge theory 
and, more generally, in the theory of elliptic complexes. 
A recent and closely related concept, which inspired us, 
is that of gauge fixing operator, appearing in the works of Costello and Gwilliam \cite{CG,CG2}.
We however would like to point out the following differences.
1.)~The analogs in \cite{CG,CG2} of the differential operators $P^n$
are assumed to be elliptic instead of Green hyperbolic. This is due to the fact that
Costello and Gwilliam consider Riemannian instead of Lorentzian manifolds $M$.
2.)~\cite{CG,CG2} include a self-adjointness condition and a square-zero condition. 
In Definition \ref{defi:self-adj-witness},
we shall also introduce a relaxation of the square-zero 
condition, see item~(i), and a certain self-adjointness 
condition, see item~(ii). Let us mention that both conditions 
are met by all examples we shall present.
\end{rem}

\begin{ex}\label{ex:deRham-witness}
One observes that the de~Rham codifferential $\delta_\dR$, 
defined out of the metric and orientation of $M$, is a Green's witness 
for the de~Rham complex $(\Lambda^\bullet M, \dd_\dR)$ 
from Example \ref{ex:deRham}. Indeed, the d'Alembert operator 
$\Box := \delta_{\dR}\, \dd_{\dR} + \dd_{\dR}\, \delta_{\dR}$ 
is normally hyperbolic, hence Green hyperbolic, in all degrees. 
For $m=3$, the $1$-shift of $\delta_\dR$ defines a Green's witness 
$W_{\CS}$ for the complex of linear differential operators 
$(F_{\CS},Q_{\CS})$ associated with linear Chern-Simons theory. 
\end{ex}

\begin{ex}\label{ex:single-diff-op-witness}
Recalling Example \ref{ex:single-diff-op}, 
when $P$ is a Green hyperbolic linear differential operator 
(e.g.\ the Klein-Gordon operator $P = \Box + m^2: C^\infty(M) \to C^\infty(M)$), 
one observes that $W_{(E,P)}$, 
consisting of a single non-vanishing component 
$W_{(E,P)}^1:=\id: \FFF_{(E,P)}^1 \to \FFF_{(E,P)}^0$, 
is a Green's witness for $(F_{(E,P)},Q_{(E,P)})$. 
\end{ex}

\begin{ex}\label{ex:Maxwell-p-forms-witness}
Recalling Example \ref{ex:Maxwell-p-forms}, 
for the complex of linear differential operators $(F_{\MW},Q_{\MW})$ 
associated with Maxwell $p$-forms on $M$
one constructs a Green's witness $W_{\MW}$ 
by defining its only non-vanishing components according to 
\begin{flalign}
W_{\MW}^n :=
\begin{cases}
\delta_{\dR} \quad, & n=-p+1,\ldots,0 \quad, \\
\id \quad, & n=1 \quad, \\ 
\dd_{\dR} \quad, & n=2,\ldots,p+1 \quad.
\end{cases}
\end{flalign}
Direct inspection shows that the only components of the linear differential operator $P_{\MW} = Q_{\MW} W_{\MW}+ W_{\MW}Q_{\MW}$ 
that do not necessarily vanish are d'Alembert operators 
$P_{\MW}^n = \delta_{\dR}\, \dd_{\dR} + \dd_{\dR}\, \delta_{\dR} = \Box$, for $n=-p,\ldots,p+1$. 
Since those are normally hyperbolic and hence Green hyperbolic, 
it follows that $W_{\MW}$ is a Green's witness 
for the complex of linear differential operators $(F_{\MW},Q_{\MW})$. 
\end{ex}

\begin{theo}\label{th:witness-Lambdapm}
Let $(F,Q)$ be a complex of linear differential operators on $M$
endowed with a Green's witness $W$. Then $(F,Q)$ 
is a Green hyperbolic complex. Furthermore, denote by 
$G_\pm^n: \FFF_{\pc/\fc}^n \to \FFF_{\pc/\fc}^n$ 
the (extended) retarded/advanced Green's operator associated with 
the Green hyperbolic operator $P^n$ from Definition 
\ref{defi:Green-witness}. Then the $(-1)$-cochain 
\begin{subequations}
\begin{flalign}
\Lambda_{\pm} \in \hom \big( \FFF_{J_M^{\pm}(-)},\FFF_{J_M^{\pm}(-)} \big)^{-1} \subseteq \map \big( \FFF_{J_M^{\pm}(-)},\FFF_{J_M^{\pm}(-)} \big)^{-1} \quad, 
\end{flalign}
defined, for all compact subsets $K \subseteq M$ and all $n \in \bbZ$, by 
\begin{flalign}
(\pr_K \Lambda_\pm)^n := W^n\, G_\pm^n \quad,
\end{flalign}
\end{subequations}
is a choice of retarded/advanced Green's homotopy for $(F,Q)$. 
\end{theo}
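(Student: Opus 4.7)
The plan is to show that the proposed $\Lambda_\pm$ satisfies the two requirements of Definition \ref{defi:Green-hyp-cpx}: it is a well-defined $(-1)$-cochain in the relevant mapping complex, and its differential equals the identity. Green hyperbolicity of $(F,Q)$ then follows by Definition \ref{defi:Green-hyp-cpx}(ii), providing an independent route to Lemma \ref{lem:witness=>Green-hyp}.

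First I would check that, for each compact $K \subseteq M$ and each $n \in \bbZ$, the formula $(\pr_K \Lambda_\pm)^n := W^n\, G_\pm^n$ defines a linear map $\FFF^n_{J_M^\pm(K)} \to \FFF^{n-1}_{J_M^\pm(K)}$. Since $K$ is compact, $J_M^+(K)$ is past compact and $J_M^-(K)$ is future compact, so the extended Green's operators $G_\pm^n : \FFF^n_{\pc/\fc} \to \FFF^n_{\pc/\fc}$ from \cite{Bar} (recalled in Section \ref{subsec:Green}) are applicable. Their support property gives $\supp(G_\pm^n \varphi) \subseteq J_M^\pm(\supp \varphi) \subseteq J_M^\pm(K)$, and $W^n$ is a differential operator and therefore preserves supports. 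Strict naturality with respect to inclusions $K \subseteq K^\prime$ is then immediate because $W^n$ and $G_\pm^n$ are defined intrinsically, independently of support restrictions; this exhibits $\Lambda_\pm$ as a genuine $(-1)$-cochain in the enriched hom $\hom(\FFF_{J_M^\pm(-)},\FFF_{J_M^\pm(-)})$, which embeds into the mapping complex via \eqref{eqn:hom-vs-map}.

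Next I would compute $\delta \Lambda_\pm = (\delta_{\hh} + \delta_{\vv}) \Lambda_\pm$. Under the inclusion \eqref{eqn:hom-vs-map}, $\Lambda_\pm$ has $\pr_0 \Lambda_\pm$ equal to the natural transformation $\{W^n\, G_\pm^n\}$ and $\pr_q \Lambda_\pm = 0$ for all $q \geq 1$. Inspecting \eqref{eqn:map-diff-def}, the $q \geq 2$ components of $\delta_{\hh} \Lambda_\pm$ vanish because they are built from $\pr_{q-1} \Lambda_\pm = 0$, while the $q=1$ component, which compares $d^0 \pr_0 \Lambda_\pm$ with $d^1 \pr_0 \Lambda_\pm$, vanishes precisely by the strict naturality established in the previous step. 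Thus $\delta \Lambda_\pm = \delta_{\vv} \Lambda_\pm$ is concentrated in cosimplicial degree $0$, with $\pr_{0,K}(\delta_{\vv} \Lambda_\pm) = \partial(W\, G_\pm)$.

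Finally, I would invoke the computation already carried out in the proof of Lemma \ref{lem:witness=>Green-hyp}: using \eqref{eqn:QG=GQ} and Definition \ref{defi:Green-witness}, one obtains
\begin{flalign*}
\partial(W^n\, G_\pm^n) = Q^{n-1}\, W^n\, G_\pm^n + W^{n+1}\, G_\pm^{n+1}\, Q^n = (Q^{n-1}\, W^n + W^{n+1}\, Q^n)\, G_\pm^n = P^n\, G_\pm^n = \id \quad,
\end{flalign*}
so that $\delta \Lambda_\pm = \id$ in $\map(\FFF_{J_M^\pm(-)},\FFF_{J_M^\pm(-)})^0$. This shows that $\Lambda_\pm$ is a retarded/advanced Green's homotopy and therefore that $(F,Q)$ is Green hyperbolic. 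The only non-routine step is the verification of the vanishing of $\delta_{\hh} \Lambda_\pm$, which requires tracing through the identification of the enriched hom as the ``strictly natural'' subcomplex of the mapping complex from Remark \ref{rem:hom-vs-map}; once this is in place, the rest is essentially the computation already appearing in Lemma \ref{lem:witness=>Green-hyp}.
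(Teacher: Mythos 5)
Your proposal is correct and follows essentially the same route as the paper: verify that $W^n\,G_\pm^n$ preserves the support condition (making $\Lambda_\pm$ a genuine strictly natural $(-1)$-cochain in the enriched hom), then reduce $\delta\Lambda_\pm=\id$ to the degree-wise computation $P^n\,G_\pm^n=\id$, which is exactly the identity already verified in the proof of Lemma \ref{lem:witness=>Green-hyp}. The paper's proof is terser — it cites Lemma \ref{lem:witness=>Green-hyp} for Green hyperbolicity and simply asserts that strict naturality follows from the support properties of $G_\pm$ — whereas you make explicit why the horizontal part of $\delta\Lambda_\pm$ vanishes (the $q\geq 2$ components because $\pr_{q-1}\Lambda_\pm=0$, and the $q=1$ component by strict naturality), which is the content of Remark \ref{rem:hom-vs-map}; this is a welcome elaboration rather than a different argument.
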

\begin{proof}
Recalling Proposition \ref{propo:Greenhyp-acyclic}, 
for the first part of the statement 
it suffices to show that, for all compact subsets $K \subseteq M$, 
the complex $\FFF_{J_M^\pm(K)} \in \Ch_\bbK$ 
of sections supported in the causal 
future/past of $K$ is acyclic. To achieve this goal, 
we define a $(-1)$-cochain 
\begin{subequations}\label{eqn:Lambdapm-ext}
\begin{flalign}
\Lambda_{\pm,K} \in [\FFF_{J_M^\pm(K)},\FFF_{J_M^\pm(K)}]^{-1}
\end{flalign}
degree-wise for all $n \in \bbZ$ by 
\begin{flalign}
\Lambda_{\pm,K}^n := W^n\, G_\pm^n \quad.
\end{flalign}
\end{subequations}
Direct inspection shows that $\partial \Lambda_{\pm,K} = \id$, 
hence $\FFF_{J_M^\pm(K)} \in \Ch_\bbK$ is acyclic for all compact subsets $K \subseteq M$, as claimed. 
More explicitly, one has 
\begin{flalign}
(\partial \Lambda_{\pm,K})^n = Q^{n-1}\, W^n\, G_\pm^n + W^{n+1}\, G_\pm^{n+1}\, Q^n = P^n\, G_\pm^n = \id \quad,
\end{flalign}
for all $n \in \bbZ$, 
where we used in the first step the definition of $\partial$ from 
\eqref{eqn:internalhom}, in the second step the identity 
\begin{flalign}\label{eqn:QG=GQ}
G_\pm^{n+1}\, Q^n  = Q^n\, G_\pm^n\quad, 
\end{flalign}
which follows from $Q^n\, P^n = P^{n+1}\, Q^n$, see Remark \ref{rem:P}, 
and in the last step the properties of a retarded/advanced Green's operator, 
see Section \ref{subsec:Green}.
\sk

For the second part of the statement there remains to show 
that the $(-1)$-cochains $\Lambda_{\pm,K} \in [\FFF_{J_M^\pm(K)},\FFF_{J_M^\pm(K)}]^{-1}$, 
for all compact subsets $K \subseteq M$, 
assemble to form a $(-1)$-cochain $\Lambda_{\pm}$ 
in the enriched hom according to $\pr_K \Lambda_\pm = \Lambda_{\pm,K}$. 
Since strict naturality follows from the support properties 
of the retarded/advanced Green's operator, 
see Section \ref{subsec:Green}, the claim follows. 
\end{proof}

\begin{rem}
Note that Green's witnesses provide particularly simple retarded 
and advanced Green's homotopies. Indeed, those are strictly natural 
as they lie in the enriched hom. This should be compared 
to the case of generic retarded and advanced 
Green's homotopies, which instead lie in the mapping complex 
and hence are not necessarily strictly natural. 
The latter consist of a much more intricate hierarchy 
of data expressing naturality in a homotopy coherent fashion. 
While this is not an issue from an abstract point of view, 
for concrete applications the retarded and advanced Green's homotopies 
from Theorem \ref{th:witness-Lambdapm} are certainly convenient. 
For instance, this fact will be exploited in \cite{BeniniMusanteSchenkel}
where we construct examples of strict algebraic quantum field theories and of 
strict time-orderable prefactorization algebras from $\Loc_m$-natural Green hyperbolic complexes that
are endowed with a natural Green's witness.
\end{rem}

\begin{rem}\label{rem:tildeLambda}
Recall that the spaces of retarded and advanced Green's homotopies 
are either empty or contractible, see Proposition \ref{propo:Greenhyp-unique}. 
Informally, this means that, when they exist, 
retarded and advanced Green's homotopies are unique 
up to higher homotopies, which are themselves unique up 
to even higher homotopies, and so on. 
Let us illustrate this phenomenon concretely in the case of a complex 
of linear differential operators $(F,Q)$ endowed with a Green's witness $W$. 
In this setting, along with the retarded and advanced Green's homotopies 
$\Lambda_\pm$ from Theorem \ref{th:witness-Lambdapm}, 
there exist other choices of retarded and advanced 
Green's homotopies $\widetilde{\Lambda}_\pm$. Consider, for instance, 
\begin{subequations}\label{eqn:tildeLambdapm}
\begin{flalign}
\widetilde{\Lambda}_\pm \in \hom \big( \FFF_{J_M^\pm(-)},\FFF_{J_M^\pm(-)} \big)^{-1} \subseteq \map \big( \FFF_{J_M^\pm(-)},\FFF_{J_M^\pm(-)} \big)^{-1} \quad, 
\end{flalign}
which is defined, for all compact subsets $K \subseteq M$ and $n \in \bbZ$, by 
\begin{flalign}
(\pr_K \widetilde{\Lambda}_\pm)^n := G_\pm^{n-1}\,  W^n\quad.
\end{flalign}
\end{subequations}
(Notice the order reversal compared 
to Theorem \ref{th:witness-Lambdapm}.) 
Strict naturality of $\widetilde{\Lambda}_\pm$ follows by the same 
argument as in the proof of Theorem \ref{th:witness-Lambdapm}. 
To confirm that $\widetilde{\Lambda}_\pm$ is indeed a retarded/advanced 
Green's homotopy, recall \eqref{eqn:QG=GQ} and \eqref{eqn:P} and compute 
\begin{flalign}
(\partial \widetilde{\Lambda}_\pm)^n = Q^{n-1}\, G_\pm^{n-1}\, W^n + G_\pm^n\, W^{n+1}\, Q^n = G_\pm^n\, P^n = \id \quad.
\end{flalign}
Let us exhibit an explicit (higher) homotopy $\lambda_\pm$ that 
relates $\Lambda_\pm$ 
and $\widetilde{\Lambda}_\pm$. Consider 
\begin{subequations}
\begin{flalign}
\lambda_\pm \in \hom \big( \FFF_{J_M^\pm(-)},\FFF_{J_M^\pm(-)} \big)^{-2} \subseteq \map \big( \FFF_{J_M^\pm(-)},\FFF_{J_M^\pm(-)} \big)^{-2} \quad, 
\end{flalign}
which is defined, for all compact subsets $K \subseteq M$ and $n \in \bbZ$, by 
\begin{flalign}
(\pr_K \lambda_\pm)^n := W^{n-1}\, G_\pm^{n-1}\, G_\pm^{n-1}\, W^n \quad.
\end{flalign}
\end{subequations}
Strict naturality of $\lambda_\pm$ follows as usual from the support 
properties of the retarded/advanced Green's operator, 
see Section \ref{subsec:Green}. 
A straightforward computation shows that 
\begin{flalign}
(\partial \lambda_\pm)^n & = Q^{n-2}\, W^{n-1}\, G_\pm^{n-1}\, G_\pm^{n-1}\, W^n -W^{n}\, G_\pm^{n}\, G_\pm^{n}\, W^{n+1}\, Q^n \nn \\ 
& = (P^{n-1} - W^{n}\, Q^{n-1})\, G_\pm^{n-1}\, G_\pm^{n-1}\, W^n - W^{n}\, G_\pm^{n}\, G_\pm^{n}\, (P^n - Q^{n-1}\, W^{n}) \nn \\
& = G_\pm^{n-1}\, W^n - W^{n}\, Q^{n-1}\, G_\pm^{n-1}\, G_\pm^{n-1}\, W^n - W^{n}\, G_\pm^{n} + W^{n}\, G_\pm^{n}\, G_\pm^{n}\, Q^{n-1}\, W^{n} \nn \\
& = \widetilde{\Lambda}_\pm^n - \Lambda_\pm^n \quad, 
\end{flalign}
where we used \eqref{eqn:P} in the second step, 
the properties of retarded/advanced Green's operators, 
see Section \ref{subsec:Green}, in the third step 
and \eqref{eqn:QG=GQ} in the last step. 
It is not difficult to come up with other choices 
$\widetilde{\lambda}_\pm \in \map(\FFF_{J_M^\pm(-)},\FFF_{J_M^\pm(-)})^{-2}$ 
relating $\Lambda_\pm$ and $\widetilde{\Lambda}_\pm$. 
Proposition \ref{propo:Greenhyp-unique} ensures that any two choices 
coincide up to a (even higher) homotopy, and so on. 
\end{rem}

\begin{ex}\label{ex:Lambdapm-deRham}
Let $(\Lambda^\bullet M,\dd_{\dR})$ be the de~Rham complex 
from Example \ref{ex:deRham} and recall 
the Green's witness $\delta_{\dR}$ from Example \ref{ex:deRham-witness}. 
Then Theorem \ref{th:witness-Lambdapm} states that 
$(\Lambda^\bullet M,\dd_{\dR})$ is a Green hyperbolic complex 
and provides a specific choice of retarded/advanced Green's 
homotopy $\Lambda_{\dR\,\pm}$, whose only non-vanishing components are 
\begin{flalign}
(\pr_K\, \Lambda_{\dR\,\pm})^n := \delta_{\dR}\, G_{\Box\,\pm} \quad,
\end{flalign}
for all compact subsets $K \subseteq M$ and $n = 1, \ldots, m$, 
where $G_{\Box\,\pm}$ denotes the retarded/advanced Green's operator 
for the d'Alembert operator $\Box$ 
acting on $k$-forms, $k=0,\ldots,m$. 
Since $\dd_{\dR}\, G_{\Box\,\pm} = G_{\Box\,\pm}\, \dd_{\dR}$ 
and $\delta_{\dR}\, G_{\Box\,\pm} = G_{\Box\,\pm}\, \delta_{\dR}$,
we observe that the retarded/advanced Green's homotopy 
$\widetilde{\Lambda}_{\dR\,\pm}$ from Remark \ref{rem:tildeLambda} 
coincides with $\Lambda_{\dR\,\pm}$. 
In the same fashion, for $m=3$, one concludes that the complex of 
linear differential operators $(F_{\CS},Q_{\CS})$ associated with 
linear Chern-Simons theory is Green hyperbolic and 
a specific choice of retarded/advanced Green's homotopy 
$\Lambda_{\CS\,\pm}$ is just the $1$-shift of $\Lambda_{\dR\,\pm}$. 
\end{ex}

\begin{ex}\label{ex:Lambdapm-single-diff-op}
Let $(F_{(E,P)},Q_{(E,P)})$ be the complex of linear differential operators 
from Example \ref{ex:single-diff-op} and recall the Green's witness 
$W_{(E,P)}$ from Example \ref{ex:single-diff-op-witness}. 
Then Theorem \ref{th:witness-Lambdapm} states that 
$(F_{(E,P)},Q_{(E,P)})$ is a Green hyperbolic complex 
and provides a specific choice of retarded/advanced Green's 
homotopy $\Lambda_{(E,P)\,\pm}$, which in this case coincides with 
the (unique) retarded/advanced Green's operator $G_\pm$ for $P$, 
see also Example \ref{ex:Lambdapm=Gpm}. 
We observe that also in this case the a priori different 
(yet equivalent) retarded/advanced Green's homotopy 
$\widetilde{\Lambda}_{(E,P)\,\pm}$ from Remark \ref{rem:tildeLambda} 
actually coincides with $\Lambda_{(E,P)\,\pm}$. 
\end{ex}

\begin{ex}\label{ex:Lambdapm-Maxwell-p-forms}
Let $(F_{\MW},Q_{\MW})$ be the complex of linear differential operators 
from Example \ref{ex:Maxwell-p-forms} associated with Maxwell $p$-forms 
and recall the Green's witness $W_{\MW}$ from 
Example \ref{ex:Maxwell-p-forms-witness}. 
Then Theorem \ref{th:witness-Lambdapm} states that 
$(F_{\MW},Q_{\MW})$ is a Green hyperbolic complex 
and provides a specific choice of retarded/advanced Green's 
homotopy $\Lambda_{\MW\,\pm}$, whose only non-vanishing components are 
\begin{flalign}
(\pr_K \Lambda_{\MW\,\pm})^n :=
\begin{cases}
\delta_{\dR}\, G_{\Box\,\pm} \quad, & n=-p+1,\ldots,0\quad, \\
G_{\Box\,\pm} \quad, & n=1\quad, \\ 
\dd_{\dR}\, G_{\Box\,\pm} \quad, & n=2,\ldots,p+1\quad,
\end{cases}
\end{flalign}
for all compact subsets $K \subseteq M$, 
where $G_{\Box\,\pm}$ denotes the retarded/advanced Green's operator 
for the d'Alembert operator $\Box$. 
Once again we observe that also in this example the 
retarded/advanced Green's homotopy $\widetilde{\Lambda}_{\MW\,\pm}$ 
from Remark \ref{rem:tildeLambda} coincides with $\Lambda_{\MW\,\pm}$. 
This behavior is not accidental, but a consequence 
of the fact that these examples admit a so-called 
{\it formally self-adjoint} Green's witness 
in the sense of Definition \ref{defi:self-adj-witness}, 
see Remark \ref{rem:self-adj-witness}. 
\end{ex}

\begin{rem}\label{rem:Lambda}
A Green's witness $W$ simplifies considerably 
the construction of the retarded-minus-advanced cochain map. 
In fact, Theorem \ref{th:witness-Lambdapm} provides specific choices 
of retarded and advanced Green's homotopies $\Lambda_\pm$ 
that are strictly natural, in contrast to the general case.
This entails that the construction of the retarded-minus-advanced 
cochain map $\Lambda_{\hh}$ from Definition \ref{defi:ret-minus-adv} 
``descends to ordinary colimits''. 
More explicitly, with this specific choice of retarded and advanced 
Green's homotopies, $\Lambda$ from \eqref{eqn:ret-minus-adv} actually 
lies in the enriched hom subcomplex of the mapping complex 
and hence it defines a natural transformation 
$\Lambda: \FFF_{(-)}[1] \to \FFF_{J_M(-)}$ in $\Ch_\bbK^\cc$. 
Taking the ordinary colimit, one obtains a cochain map
\begin{flalign}\label{eqn:ret-minus-adv-qiso-witness}
\Lambda := \colim(\Lambda) : \FFF_{\cc}[1] \longrightarrow \FFF_{\sc}
\end{flalign}
in $\Ch_\bbK$, which we denote with abuse of notation by the same symbol $\Lambda$.
Recalling also that $\cc$ is a filtered category, 
it follows that the natural quasi-isomorphism \eqref{eqn:hocolim-colim} 
comparing homotopy and ordinary colimits forms the commutative diagram
\begin{flalign}\label{eqn:ret-minus-adv-witness}
\xymatrix@C=5em{
\FFF_{\hh\cc}[1] \ar[r]^-{\Lambda_{\hh}} \ar[d]_-{\sim} & \FFF_{\hh\sc} \ar[d]^-{\sim} \\ 
\FFF_{\cc}[1] \ar[r]_-{\Lambda} & \FFF_{\sc}
}
\end{flalign}
in $\Ch_\bbK$. The latter makes precise the previous claim that 
the retarded-minus-advanced cochain map descends to ordinary colimits.
\end{rem} 

\begin{rem}\label{rem:Lambda-qiso}
In the presence of a Green's witness $W$ for $(F,Q)$,
the statement and proof of Theorem \ref{th:ret-minus-adv-qiso} can be simplified 
by passing from homotopy to ordinary colimits. Indeed, using
the commutative diagram \eqref{eqn:ret-minus-adv-witness}, one finds that 
$\Lambda_{\hh}$ is a quasi-isomorphism if and only if $\Lambda$ is one. 
(Recall that, if two out of the three cochain maps $f,g,gf$ 
are quasi-isomorphisms, then all three are such. 
This is readily seen by passing to cohomology and observing that 
if two out of the three graded linear maps $H(f),H(g),H(gf)=H(g)H(f)$ 
are isomorphisms, then all three are such.) 
For $\Lambda$ one finds simpler (compared to Section 
\ref{subsec:ret-adv-map}) explicit expressions of a quasi-inverse 
\begin{flalign}
\Theta: \FFF_{\sc} \longrightarrow \FFF_{\cc}[1] \quad, \qquad \varphi \longmapsto \pm \big( Q (\chi_\pm \varphi) - \chi_\pm Q \varphi \big) \quad,
\end{flalign}
in $\Ch_\bbK$ for $\Lambda$, of a homotopy 
\begin{flalign}\label{eqn:Xi-witness}
\Xi \in \big[ \FFF_{\cc}[1], \FFF_{\cc}[1] \big]^{-1} \quad, \qquad \varphi \longmapsto - \chi_- \Lambda_+ \varphi - \chi_+ \Lambda_- \varphi \quad,
\end{flalign}
witnessing $\Theta \circ \Lambda \sim \id$ and of a homotopy 
\begin{flalign}
\Upsilon \in \big[ \FFF_{\sc}, \FFF_{\sc} \big]^{-1} \quad, \qquad \varphi \longmapsto \Lambda_+ \chi_+ \varphi + \Lambda_- \chi_- \varphi \quad,
\end{flalign} 
witnessing $\Lambda \circ \Theta \sim \id$. 
(The seeming sign discrepancy between \eqref{eqn:Xi-witness} and 
\eqref{eqn:Xi} is explained by the fact that \eqref{eqn:Xi} involves 
pulling the shifts out of the internal hom, which contributes a sign 
$(-1)^n$ in degree $n$ and hence $-1$ in degree $-1$.)
Let us stress that this simplified version of Theorem \ref{th:ret-minus-adv-qiso} 
is available for Examples \ref{ex:Lambdapm-deRham}, 
\ref{ex:Lambdapm-single-diff-op} and 
\ref{ex:Lambdapm-Maxwell-p-forms}. 
\end{rem}

\begin{rem}\label{rem:exact=qiso}
Theorem \ref{th:ret-minus-adv-qiso} specializes to the usual exact sequence 
\eqref{eqn:PGP-exact-seq} when applied to the Green hyperbolic complex 
$(F_{(E,P)},Q_{(E,P)})$ associated with a 
Green hyperbolic linear differential operator $P$ acting on sections of 
a vector bundle $E \to M$, see Examples \ref{ex:single-diff-op} and \ref{ex:Lambdapm=Gpm}. 
Indeed, recalling also Example \ref{ex:Lambdapm-single-diff-op}, 
we observe that in this case 
the only non-vanishing component of the retarded-minus-advanced cochain map
$\Lambda_{(E,P)} :  \FFF_{(E,P)\, \cc}[1] \to \FFF_{(E,P)\,\sc}$ from \eqref{eqn:ret-minus-adv-qiso-witness} 
coincides with the retarded-minus-advanced propagator 
$G = G_+ - G_-$ associated with $P$. 
It follows that the cone complex 
\begin{flalign}
\cone\big(\Lambda_{(E,P)}\big) = \Big( \xymatrix@C=1.4em{\cdots \ar[r] & \overset{(-2)}{0} \ar[r] & \overset{(-1)}{\Gamma_{\cc}(E)} \ar[r]^-{-P} & \overset{(0)}{\Gamma_{\cc}(E)} \ar[r]^-{G} & \overset{(1)}{\Gamma_{\sc}(E)} \ar[r]^-{-P} & \overset{(2)}{\Gamma_{\sc}(E)} \ar[r] & \overset{(3)}{0} \ar[r] & \cdots} \Big)
\end{flalign}
is manifestly isomorphic to \eqref{eqn:PGP-exact-seq} 
regarded as a cochain complex concentrated between degrees $-1$ and $2$. 
Recalling that a cochain map is a quasi-isomorphism if and only if 
its cone complex is acyclic \cite[Cor.~1.5.4]{Weibel}, we conclude that 
\eqref{eqn:PGP-exact-seq} being exact is equivalent to 
$\Lambda_{(E,P)}$ being a quasi-isomorphism. 
This explains how Theorem \ref{th:ret-minus-adv-qiso} 
generalizes the well-known exact sequence \eqref{eqn:PGP-exact-seq} 
associated with a Green hyperbolic linear differential operator. 
We would like to mention that also 
\cite[Th.~5.2.3]{Lupo} proves exactness of the sequence 
\eqref{eqn:PGP-exact-seq} by exhibiting a witnessing contracting 
homotopy. The latter is closely related to our quasi-inverse and witnessing homotopies 
from Remark \ref{rem:Lambda-qiso} specialized to 
the complex of linear differential operators 
$(F_{(E,P)},Q_{(E,P)})$ endowed with the Green's witness $W_{(E,P)}$
from Example \ref{ex:single-diff-op-witness}. 
\end{rem}

\subsection{\label{subsec:self-adj}Formally self-adjoint Green's witnesses}
This section introduces the concept of 
a formally self-adjoint Green's witness, 
i.e.\ a Green's witness that is compatible with a given 
differential pairing in a way that is partially reminiscent 
of formal self-adjointness in the ordinary sense. 
Formally self-adjoint Green's witnesses are such that 
the specific choice of retarded and advanced Green's homotopies 
from Theorem \ref{th:witness-Lambdapm} simplifies 
the construction and the comparison of the 
Poisson structures $\tau_M$ and $\sigma_\Sigma$ 
from Propositions \ref{propo:covariant-Poisson} 
and \ref{propo:time-zero-Poisson}. 
It will be shown at the end 
of this section that all examples considered here admit 
a formally self-adjoint Green's witness 
and hence the associated Poisson structures can be constructed 
and compared in the simpler way illustrated below. 

\begin{defi}\label{defi:self-adj-witness}
Let $(F,Q)$ be complex of linear differential operators 
on $M$ endowed with a differential pairing $(-,-)$.
A Green's witness $W$ for $(F,Q)$ is called 
{\it formally self-adjoint} when the compatibility conditions 
listed below are met: 
\begin{enumerate}[label=(\roman*)]
\item $Q\, W\, W = W\, W\, Q$, 
\item $\int_M (W \varphi_1, \varphi_2) = (-1)^{|\varphi_1|} \int_M (\varphi_1, W \varphi_2)$, 
\end{enumerate}
for all homogeneous sections $\varphi_1, \varphi_2 \in \FFF$ 
with compact overlapping support. 
\end{defi}

\begin{rem}\label{rem:self-adj-witness}
Recalling also Definition \ref{defi:Green-witness}, 
let us emphasize some immediate consequences of 
Definition \ref{defi:self-adj-witness}. 
\begin{enumerate}[label=(\roman*)]
\item It follows that $P\, W = W\, P$, 
therefore one also has $G_\pm\, W = W\, G_\pm$, 
where $G_\pm$ denotes the retarded/advanced Green's operator 
associated with the Green hyperbolic linear differential operator 
$P := Q\, W + W\, Q$. 
In particular, the specific choice $\Lambda_\pm$ 
of retarded and advanced Green's homotopies from Theorem 
\ref{th:witness-Lambdapm} coincides with the one 
$\widetilde{\Lambda}_\pm$ from Remark \ref{rem:tildeLambda}. 
As a consequence, the retarded-minus-advanced quasi-isomorphism 
$\Lambda : \FFF_{\cc}[1] \to \FFF_{\sc}$ in $\Ch_\bbR$ 
from \eqref{eqn:ret-minus-adv-qiso-witness} can be equivalently expressed 
as $G\, W = \Lambda = W\, G$, where $G := G_+ - G_-$ denotes 
the retarded-minus-advanced propagator associated with $P$. 
 
\item One finds that $P$ is formally self-adjoint, namely 
$\int_M (P \varphi_1, \varphi_2) = \int_M (\varphi_1, P \varphi_2)$ 
for all homogeneous sections $\varphi_1, \varphi_2 \in \FFF$ 
with compact overlapping support. From this fact it follows that, 
for all homogeneous sections $\varphi_1, \varphi_2 \in \FFF_\cc$ 
with compact support, 
$\int_M (G_\pm \varphi_1, \varphi_2) = \int_M (\varphi_1, G_\mp \varphi_2)$, 
and hence also 
$\int_M (G \varphi_1, \varphi_2) = - \int_M (\varphi_1, G \varphi_2)$. 
\end{enumerate}
These observations shall be systematically used in the proof 
of Proposition \ref{propo:covariant-Poisson-self-adj-witness}. 
\end{rem}

Recall that a Green's witness $W$ leads to a simplified quasi-isomorphism 
$\Lambda: \FFF_{\cc}[1] \to \FFF_{\sc}$  
between ordinary (as opposed to homotopy) colimits, 
see \eqref{eqn:ret-minus-adv-qiso-witness}  and \eqref{eqn:ret-minus-adv-witness}. 
When $W$ is formally self-adjoint, 
one realizes that also Propositions \ref{propo:covariant-Poisson}, 
\ref{propo:time-zero-Poisson} and Theorem \ref{th:Linfty-qiso} 
simplify considerably. 
\begin{propo}\label{propo:covariant-Poisson-self-adj-witness}
Let $(F,Q)$ be a complex of differential operators on $M$ 
endowed with a differential pairing $(-,-)$ 
and a formally self-adjoint Green's witness $W$. 
Consider the retarded-minus-advanced cochain map 
$\Lambda : \FFF_{\cc}[1] \to \FFF_{\sc}$ in $\Ch_\bbR$ 
from Remarks \ref{rem:Lambda} and \ref{rem:Lambda-qiso}. Then the composition 
\begin{flalign}\label{eqn:tau-witness}
\xymatrix@C=4em{
\tau_M: \FFF_{\cc}[1]^{\otimes 2} \ar[r]^-{\id \otimes \Lambda} & \FFF_{\cc}[1] \otimes \FFF_{\sc} \subseteq \FFF_{\cc}[1] \otimes \FFF \ar[r]^-{\ev_M} & \bbR
}
\end{flalign}
in $\Ch_\bbR$ is graded anti-symmetric 
and hence it descends to a Poisson structure $\tau_M : \FFF_{\cc}[1]^{\wedge 2}\to \bbR$
on $\FFF_{\cc}[1]$. (Recall the definition of $\ev_M$ from \eqref{eqn:ev-M}.) 
\end{propo}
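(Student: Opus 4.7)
The plan is to verify graded anti-symmetry of $\tau_M$ by a direct computation, exploiting the two key consequences of formal self-adjointness of $W$ that are already laid out in Remark \ref{rem:self-adj-witness}. Specifically, I will use the simplified presentation $\Lambda = W\,G = G\,W$ of the retarded-minus-advanced cochain map from Remark \ref{rem:self-adj-witness}(i), together with the formal skew-adjointness of the retarded-minus-advanced propagator $G = G_+ - G_-$ associated with the Green hyperbolic operator $P = QW + WQ$, as stated in Remark \ref{rem:self-adj-witness}(ii). The two properties come into play because they transform an integral of the form $\int_M(\varphi_1,\Lambda\varphi_2)$ into one of the form $\int_M(\Lambda\varphi_1,\varphi_2)$, which is then amenable to the graded anti-symmetry of the differential pairing $(-,-)$.

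First I would fix the book-keeping of degrees, recording that for $\varphi_i \in \FFF_{\cc}[1]^{n_i}$ the degree in the unshifted complex $\FFF$ is $n_i+1$, whereas $\Lambda\varphi_i \in \FFF_{\sc}^{n_i}$ has degree $n_i$ in $\FFF$; the degree of $W$ is $-1$ and the degree of $G$ is $0$. Then, writing $\Lambda\varphi_2 = W\,G\varphi_2$, I would apply property~(ii) from Definition \ref{defi:self-adj-witness} to move the $W$ onto $\varphi_1$, picking up the appropriate sign from $|\varphi_1|_{\FFF} = n_1+1$. Next, using $\Lambda = G\,W$ together with the formal skew-adjointness of $G$ from Remark \ref{rem:self-adj-witness}(ii), I would transfer $G$ onto $\varphi_2$, producing (up to signs) an integral of $(\Lambda\varphi_1,\varphi_2)$. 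Finally, applying the graded anti-symmetry of the differential pairing $(-,-)$ gives $\tau_M(\varphi_2\otimes\varphi_1)$ up to a Koszul sign that, after collecting all contributions, evaluates to $-(-1)^{|\varphi_1||\varphi_2|}$ as required.

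The anticipated main obstacle is purely combinatorial, namely verifying that all the signs (from the shift $[1]$, from $|\varphi_i|_\FFF = |\varphi_i|+1$, from property~(ii), from skew-adjointness of $G$, and from the Koszul rule in the graded anti-symmetry of $(-,-)$) collapse correctly to yield exactly $-(-1)^{|\varphi_1||\varphi_2|}$. I expect that the verification amounts to showing that two powers of $(-1)^{|\varphi_1|}$ and two powers of $(-1)$ cancel, leaving precisely the sign $(-1)^{1+|\varphi_1||\varphi_2|}$. Once graded anti-symmetry of the cochain map \eqref{eqn:tau-witness} is established, its descent to a Poisson structure $\tau_M : \FFF_\cc[1]^{\wedge 2} \to \bbR$ on the exterior square is automatic, since graded anti-symmetric cochain maps on $V^{\otimes 2}$ factor uniquely through $V^{\wedge 2}$ in $\Ch_\bbR$.
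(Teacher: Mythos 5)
Your proposal is correct and follows essentially the same route as the paper's proof: both are a direct degree-by-degree computation that uses $\Lambda = W\,G = G\,W$, property~(ii) of Definition \ref{defi:self-adj-witness} to move $W$ across the pairing, the formal skew-adjointness of $G$ from Remark \ref{rem:self-adj-witness}(ii), and the graded anti-symmetry of $(-,-)$, differing from the paper only in the order in which these ingredients are applied (the paper applies the braiding and anti-symmetry first and then moves $W$ and $G$, whereas you move $W$ and $G$ first). The sign bookkeeping you sketch is consistent with what the paper carries out.
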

\begin{proof}
Denoting the symmetric braiding on $\Ch_\bbR$ 
by $\gamma$, one has to check that 
$\tau_M \circ \gamma = - \tau_M$. Since $\bbR \in \Ch_\bbR$ 
is concentrated in degree $0$, it suffices to check this equality
upon evaluation on all sections $\varphi_1 \in \FFF_\cc[1]^q$, 
$\varphi_2 \in \FFF_\cc[1]^{-q}$ and for all $q \in \bbZ$. 
Recalling Remark \ref{rem:self-adj-witness}, one has 
\begin{flalign}
\tau_M \gamma (\varphi_1 \otimes \varphi_2) &= (-1)^{q} \int_M (\varphi_2, W G \varphi_1) = (-1)^{q+1} \int_M (W G \varphi_1, \varphi_2) = -\int_M (\varphi_1, G W \varphi_2) \nn \\
&= -\tau_M (\varphi_1 \otimes \varphi_2) \quad.
\end{flalign}
The first step follows using the braiding $\gamma$ and 
$\Lambda = W\, G$, see item~(i) of Remark \ref{rem:self-adj-witness}, 
the second step follows from graded anti-symmetry of $(-,-)$, 
the third step follows combining item~(ii) 
of Definition \ref{defi:self-adj-witness} 
and item~(ii) of Remark \ref{rem:self-adj-witness} and 
the last step follows from $\Lambda = G\, W$, 
see item~(i) of Remark \ref{rem:self-adj-witness}.
\end{proof}
\begin{rem}\label{rem:covariant-Poisson-self-adj-witness}
With the choice of retarded and advanced Green's homotopies 
$\Lambda_\pm$ from Theorem \ref{th:witness-Lambdapm} 
and up to the quasi-isomorphism $\FFF_{\hh\cc}[1] \simeq \FFF_\cc[1]$, 
the cochain map $\widetilde{\tau}_M$ from \eqref{eqn:tautildemap} 
coincides with the Poisson structure $\tau_M$ 
from \eqref{eqn:tau-witness}. 
Indeed, the same calculation as in the proof of Proposition 
\ref{propo:covariant-Poisson-self-adj-witness} 
shows that $\widetilde{\tau}_M$ is already graded anti-symmetric, 
making the graded anti-symmetrization in \eqref{eqn:taumap} 
superfluous, as anticipated by Remark \ref{rem:taumap}. 
Furthermore,  calculations similar to the one in the proof 
of Proposition \ref{propo:covariant-Poisson-self-adj-witness} show that,
in the present setting,
all Poisson structures from Proposition \ref{propo:covariant-Poisson} 
coincide $\tau_M^+ = \tau_M^- = \tau_M$, 
as anticipated by Remark \ref{rem:tau}. 
\end{rem}

The next simplified version of Proposition \ref{propo:time-zero-Poisson} 
is a straightforward consequence of the graded anti-symmetry 
of the differential pairing $(-,-)$. 
\begin{propo}\label{propo:time-zero-Poisson-ordinary-colim}
Let $(F,Q)$ be a complex of linear differential operators on $M$ 
endowed with a differential pairing $(-,-)$. 
Denote by $\Sigma\subseteq M$ a spacelike Cauchy surface of $M$. 
Then the composition 
\begin{flalign}
\xymatrix@C=5em{
\sigma_\Sigma: \FFF_{\sc}^{\otimes 2} \ar[r]^-{\id \otimes \subseteq} & \FFF_{\sc} \otimes \FFF \ar[r]^-{(-1)^{m-1} \ev_\Sigma} & \bbR 
}
\end{flalign}
in $\Ch_\bbR$ is graded anti-symmetric and hence  it descends to a Poisson structure
$\sigma_\Sigma: \FFF_{\sc}^{\wedge 2} \to \bbR $ on $\FFF_{\sc}$.
(Recall the definition of $\ev_\Sigma$ from \eqref{eqn:ev-Sigma}.)
\end{propo}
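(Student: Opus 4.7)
The plan is to verify in order that $\sigma_\Sigma$ is a cochain map $\FFF_\sc^{\otimes 2} \to \bbR$, that it is graded anti-symmetric with respect to the symmetric braiding $\gamma$ on $\Ch_\bbR$, and then to invoke the standard fact that any graded anti-symmetric cochain map $V^{\otimes 2}\to \bbR$ factors through a cochain map $V^{\wedge 2}\to \bbR$.

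That $\sigma_\Sigma$ is a cochain map is essentially built into its construction. The inclusion $\FFF_\sc \subseteq \FFF$ is a cochain map by the very definition of sections with restricted support, hence so is $\id \otimes {\subseteq}$; and $\ev_\Sigma$ was already observed to be a cochain map in \eqref{eqn:ev-Sigma}, combining the cochain map property of $(-,-)$ from Definition \ref{defi:diff-pairing}, the pullback $\iota^\ast$ along the embedding $\iota: \Sigma \hookrightarrow M$ of the chosen spacelike Cauchy surface, and Stokes' theorem (which makes integration $\int_\Sigma: \Omega^\bullet_\cc(\Sigma)[m-1] \to \bbR$ into a cochain map, since $\dim(\Sigma) = m-1$).

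For the graded anti-symmetry, I would test the equality $\sigma_\Sigma \circ \gamma = -\sigma_\Sigma$ on pairs of homogeneous sections $\psi_1, \psi_2 \in \FFF_\sc$. Since $\bbR$ sits in degree $0$ and $\int_\Sigma$ picks out the top-degree part on $\Sigma$, only pairs with $|\psi_1| + |\psi_2| = 0$ contribute nontrivially. The Koszul sign rule gives $\gamma(\psi_1 \otimes \psi_2) = (-1)^{|\psi_1|\,|\psi_2|}\,\psi_2 \otimes \psi_1$, while the graded anti-symmetry of the differential pairing in Definition \ref{defi:diff-pairing} yields $(\psi_2, \psi_1) = -(-1)^{|\psi_1|\,|\psi_2|}\,(\psi_1, \psi_2)$. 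In the composition defining $\sigma_\Sigma \circ \gamma$, the two Koszul signs multiply to $+1$ and the residual minus sign from the pairing survives, establishing $\sigma_\Sigma \circ \gamma = - \sigma_\Sigma$.

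There is no genuine obstacle: the only mildly delicate point is the sign bookkeeping above, and the whole argument is strictly simpler than the proof of Proposition \ref{propo:time-zero-Poisson} precisely because no homotopy colimit, no natural transformation $\hocolim \to \colim$ from \eqref{eqn:hocolim-colim}, and no adjunct of an inclusion in $\Ch_\bbR^\cc$ along the dg-adjunction $\hocolim \dashv \Delta$ need to be invoked. Once graded anti-symmetry is proven, the descent to a Poisson structure $\sigma_\Sigma: \FFF_\sc^{\wedge 2} \to \bbR$ on the exterior square is immediate.
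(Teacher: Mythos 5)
Your proposal is correct and matches the paper's (implicit) reasoning: the paper simply asserts the statement is ``a straightforward consequence of the graded anti-symmetry of the differential pairing,'' and your argument fills in precisely that verification, correctly tracking the two Koszul signs (which cancel) against the residual sign from graded anti-symmetry of $(-,-)$. Your observation that this is simpler than Proposition~\ref{propo:time-zero-Poisson} because no homotopy colimit or dg-adjunction machinery is needed is also accurate.
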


Even though $\Lambda : \FFF_{\cc}[1] \to \FFF_{\sc}$ 
is {\it not} compatible with the simpler Poisson structures 
from Propositions \ref{propo:covariant-Poisson-self-adj-witness} 
and \ref{propo:time-zero-Poisson-ordinary-colim}, 
there is a homotopy $\lambda$ controlling this failure, 
which is a simpler counterpart of the homotopy $\lambda_{\hh}$ 
from \eqref{eqn:lambdah}. 
This determines a simplification of Theorem \ref{th:Linfty-qiso}. 
More explicitly, the relevant homotopy 
\begin{subequations}\label{eqn:lambda}
\begin{flalign}
\lambda := \asym(\widetilde{\lambda}) \in \big[ \FFF_{\cc}[1]^{\wedge 2}, \bbR \big]^{-1}
\end{flalign}
is the graded anti-symmetrization of the $(-1)$-cochain 
\begin{flalign}
\widetilde{\lambda} \in \big[ \FFF_{\cc}[1]^{\otimes 2}, \bbR \big]^{-1}
\end{flalign}
defined, for all homogeneous sections 
$\varphi_1, \varphi_2 \in \FFF_{\cc}[1]$, by 
\begin{flalign}
\widetilde{\lambda} (\varphi_1\otimes \varphi_2) := \int_{\Sigma^+} ( \Lambda_- \varphi_1, \Lambda \varphi_2 ) + \int_{\Sigma^-} ( \Lambda_+ \varphi_1, \Lambda \varphi_2 ) \quad, 
\end{flalign}
\end{subequations}
where $\Sigma^\pm := J_M^\pm(\Sigma)$. 
A similar, yet slightly simpler, computation 
as \eqref{eqn:lambdah-computation} leads to the result below. 
\begin{theo}\label{th:Linfty-qiso-self-adj-witness}
Let $(F,Q)$ be a complex of linear differential operators on $M$ 
endowed with a differential pairing $(-,-)$ 
and a formally self-adjoint Green's witness $W$. 
Denote by $\Sigma\subseteq M$ a spacelike Cauchy surface of $M$. 
Then the retarded-minus-advanced quasi-isomorphism $\Lambda$ 
from Remarks \ref{rem:Lambda} and \ref{rem:Lambda-qiso} is compatible with 
the Poisson structures $\tau_M$ and $\sigma_\Sigma$ from 
Propositions \ref{propo:covariant-Poisson-self-adj-witness} and 
\ref{propo:time-zero-Poisson-ordinary-colim} up to the homotopy 
$\lambda$ defined in \eqref{eqn:lambda}, i.e.\ 
$\sigma_\Sigma \circ \Lambda^{\wedge 2} - \tau_M = \partial \lambda$. 
\end{theo}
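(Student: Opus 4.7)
The plan is to imitate the computation \eqref{eqn:lambdah-computation} from the proof of Theorem \ref{th:Linfty-qiso}, exploiting the fact that in the present setting ordinary colimits replace homotopy ones and the Green's homotopies $\Lambda_\pm = W\, G_\pm = G_\pm\, W$ from Theorem \ref{th:witness-Lambdapm} are strictly natural, thereby stripping away all the homotopy-coherence decorations. Since $\lambda = \asym(\widetilde{\lambda})$, and since, in the formally self-adjoint setting, the cochain map $\widetilde{\tau}_M := \ev_M \circ (\id \otimes \Lambda) : \FFF_{\cc}[1]^{\otimes 2} \to \bbR$ is already graded anti-symmetric and hence coincides with $\tau_M$ upon descent to $\FFF_{\cc}[1]^{\wedge 2}$ (by the argument given in the proof of Proposition \ref{propo:covariant-Poisson-self-adj-witness}, see also Remark \ref{rem:covariant-Poisson-self-adj-witness}), it is enough to establish the non-symmetrized identity $\partial \widetilde{\lambda} = \sigma_\Sigma \circ \Lambda^{\otimes 2} - \widetilde{\tau}_M$ at the level of cochain maps $\FFF_{\cc}[1]^{\otimes 2} \to \bbR$; the conclusion will then follow by anti-symmetrization.

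For the computation, I would fix homogeneous sections $\varphi_1,\varphi_2 \in \FFF_{\cc}[1]$ and apply the graded Leibniz rule to obtain $(\partial \widetilde{\lambda})(\varphi_1 \otimes \varphi_2)$ as a sum of four integrals over $\Sigma^\pm := J_M^\pm(\Sigma)$. The crucial identities to substitute are $\Lambda_\pm Q_{[1]} \varphi_1 = Q \Lambda_\pm \varphi_1 - \varphi_1$, which is $\partial \Lambda_\pm = Q \circ \Lambda_\pm - \Lambda_\pm \circ Q_{[1]} = \id$ applied to $\varphi_1$, and $\Lambda Q_{[1]} \varphi_2 = Q \Lambda \varphi_2$, which expresses that $\Lambda = \Lambda_+ - \Lambda_-$ is a cochain map. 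The ``$-\varphi_1$'' terms reassemble across $\Sigma^+$ and $\Sigma^-$ into $-\int_M (\varphi_1, \Lambda \varphi_2) = -\widetilde{\tau}_M(\varphi_1 \otimes \varphi_2)$, using that $M = \Sigma^+ \cup \Sigma^-$ and that the common intersection $\Sigma$ has measure zero. The remaining four terms group pairwise on $\Sigma^\pm$ and, by the cochain map property of the differential pairing from Definition \ref{defi:diff-pairing}, collapse into $(-1)^{m-1} \int_{\Sigma^+} \dd_{\dR}(\Lambda_- \varphi_1, \Lambda \varphi_2) + (-1)^{m-1} \int_{\Sigma^-} \dd_{\dR}(\Lambda_+ \varphi_1, \Lambda \varphi_2)$.

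Applying Stokes' theorem on each of $\Sigma^\pm$ and noting that $\Sigma$ is their common boundary with opposite induced orientations, the two boundary contributions telescope into $(-1)^{m-1} \int_\Sigma \iota^\ast ((\Lambda_+ - \Lambda_-) \varphi_1, \Lambda \varphi_2) = (-1)^{m-1} \int_\Sigma \iota^\ast (\Lambda \varphi_1, \Lambda \varphi_2) = \sigma_\Sigma(\Lambda \varphi_1 \otimes \Lambda \varphi_2)$ by the definition of $\sigma_\Sigma$ from Proposition \ref{propo:time-zero-Poisson-ordinary-colim}. This produces $\partial \widetilde{\lambda} = \sigma_\Sigma \circ \Lambda^{\otimes 2} - \widetilde{\tau}_M$; anti-symmetrizing both sides and using $\widetilde{\tau}_M = \tau_M$ yields $\partial \lambda = \sigma_\Sigma \circ \Lambda^{\wedge 2} - \tau_M$, as claimed. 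The main obstacle is purely bookkeeping: carefully tracking the signs from the $[1]$-shift when commuting $Q$ and $Q_{[1]}$ through $\Lambda_\pm$ and $\Lambda$, and pinning down the relative orientations of $\Sigma$ as boundary of $\Sigma^+$ versus $\Sigma^-$ in the Stokes step; neither introduces any conceptual novelty beyond the argument already developed in \eqref{eqn:lambdah-computation}.
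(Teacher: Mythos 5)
Your proposal is correct and follows essentially the same route as the paper, which simply says that ``a similar, yet slightly simpler, computation as \eqref{eqn:lambdah-computation} leads to the result'' and relies on the strictly natural Green's homotopies $\Lambda_\pm = W\,G_\pm = G_\pm\,W$, the ordinary-colimit cochain map $\Lambda$, and the prior observation that $\ev_M \circ (\id \otimes \Lambda)$ is already graded anti-symmetric. The reduction to the non-symmetrized identity and subsequent anti-symmetrization, the Leibniz/Stokes bookkeeping, and the role played by formal self-adjointness (only in identifying $\ev_M\circ(\id\otimes\Lambda)$ with the Poisson structure $\tau_M$ of Proposition \ref{propo:covariant-Poisson-self-adj-witness}) all match the paper's intended argument.
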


\begin{ex}\label{ex:CS-self-adj-witness}
For $m = 3$, recall from Examples \ref{ex:deRham}, 
\ref{ex:CS-diff-pairing} and \ref{ex:deRham-witness}
the complex of linear differential operators $(F_{\CS},Q_{\CS})$ 
associated with linear Chern-Simons theory, 
the differential pairing $(-,-)_{\CS}$ and the Green's witness $W_{\CS}$. 
It follows from the standard properties 
of the de Rham codifferential that $W_{\CS} = \delta_{\dR\, [1]}$ 
is a formally self-adjoint Green's witness. 
\end{ex}

\begin{ex}\label{ex:single-diff-op-self-adj-witness}
Recall from Examples \ref{ex:single-diff-op}, 
\ref{ex:single-diff-op-diff-pairing} and \ref{ex:single-diff-op-witness}
the complex of linear differential operators $(F_{(E,P)},Q_{(E,P)})$, 
the differential pairing $(-,-)_{(E,P)}$ and the Green's witness $W_{(E,P)}$. 
In this example $W_{(E,P)}$ is (trivially) a formally self-adjoint 
Green's witness and the Poisson structure from Proposition 
\ref{propo:covariant-Poisson-self-adj-witness} agrees with 
the standard Poisson structure given by the retarded-minus-advanced propagator, see \eqref{eqn:standardtau}. 
\end{ex}

\begin{ex}\label{ex:MW-self-adj-witness}
Recall from Examples \ref{ex:Maxwell-p-forms}, 
\ref{ex:MW-diff-pairing} and \ref{ex:Maxwell-p-forms-witness}
the complex of linear differential operators $(F_{\MW},Q_{\MW})$ 
associated with Maxwell $p$-forms, 
the differential pairing $(-,-)_{\MW}$
and the Green's witness $W_{\MW}$. It follows 
from the standard properties of the de Rham (co)differential 
that $W_{\MW}$ is a formally self-adjoint Green's witness. 
For $p = 1$, the Poisson complex from Proposition 
\ref{propo:covariant-Poisson-self-adj-witness} agrees 
with the one for linear Yang-Mills theory from \cite{linYM} 
and for $p$ arbitrary with the one from \cite{hrep}. 
\end{ex}


\section*{Acknowledgments}
We would like to thank the reviewers for their suggestions 
and for pointing out valuable references to the literature. 
The work of M.B.\ and G.M.\ is fostered by 
the National Group of Mathematical Physics (GNFM-INdAM (IT)). 
G.M.\ is supported by a PhD scholarship of the University of Genova (IT).
A.S.\ gratefully acknowledges the support of 
the Royal Society (UK) through a Royal Society University 
Research Fellowship (URF\textbackslash R\textbackslash 211015)
and Enhancement Awards (RGF\textbackslash EA\textbackslash 180270, 
RGF\textbackslash EA\textbackslash 201051 and RF\textbackslash ERE\textbackslash 210053).

\section*{Data availability statement}
All data generated or analyzed during this study are contained in this document.

\section*{Conflict of interest statement}
The authors have no conflict of interest to declare that are relevant to the content of this article.



\end{document}